\definecolor{ZurichBlue}{rgb}{.255,.41,.884} 
\definecolor{ZurichRed}{rgb}{1, 0, 0} 
\newtheorem{Corollary}{Corollary}
\newtheorem{Proposition}{Proposition}
\newtheorem{Lemma}{Lemma}
\newtheorem{Theorem}{Theorem}
\newtheorem{Remark}{Remark}
\newtheorem{Assumption}{Assumption}
\newcommand{\ba}{\mathbf a}
\newcommand{\bc}{\mathbf c}
\newcommand{\bd}{\mathbf d}
\newcommand{\f}{\mathbf f}
\newcommand{\hf}{{\hat{\mathbf f}}^{(n)}}
\newcommand{\hfnob}{{\hat{f}}^{(n)}}
\newcommand{\bu}{\mathbf u}
\newcommand{\x}{\mathbf x}
\newcommand{\y}{\mathbf y}
\newcommand{\hy}{\hat{y}}
\newcommand{\hby}{\hat{{\mathbf y}}}
\newcommand{\B}{{\mathbf{B}}}
\newcommand{\R}{\mathbb{R}}
\newcommand{\0}{\mathbf 0}
\newcommand{\bB}{\mathbf B}
\newcommand{\C}{\mathbf C}
\newcommand{\Chat}{\hat{\mathbf C}}
\newcommand{\hatC}{\hat{\mathbf C}^{(n)}}
\newcommand{\F}{\mathbf F}
\newcommand{\G}{\mathbf G}
\newcommand{\bH}{\mathbf H}
\newcommand{\Hsp}{\mathcal H}
\newcommand{\bHsp}{\boldsymbol{\mathcal H}}
\newcommand{\I}{\mathbf I}
\newcommand{\J}{\mathbf J}
\newcommand{\bM}{\mathbf M}
\newcommand{\N}{\mathbb{N}}
\newcommand{\Norm}{\mathcal{N}}
\newcommand{\mN}{\mathcal{N}}
\newcommand{\bP}{\mathbf P}
\newcommand{\Q}{\mathbf Q}
\newcommand{\bR}{\mathbf R}
\newcommand{\mR}{\mathcal{R}}
\newcommand{\br}{\mathbf r}
\newcommand{\bS}{\mathbf S}
\newcommand{\tS}{\tilde{\mathbf{S}}}
\newcommand{\bT}{\mathbf{T}}
\newcommand{\bV}{\mathbf{V}}
\newcommand{\hV}{\hat{\mathbf{V}}^{(n)}}
\newcommand{\tV}{\tilde{\mathbf{V}}^{(n)}}
\newcommand{\mX}{\mathcal{X}}
\newcommand{\hlambda}{{\hat{\lambda}^{(n)}}}
\newcommand{\bphi}{{\boldsymbol{\phi}}}
\newcommand{\tiphi}{\tilde{\phi}}
\newcommand{\hphi}{\hat{\phi}^{(n)}}
\newcommand{\bPhi}{{\boldsymbol{\Phi}}}
\newcommand{\hPhi}{{\hat{\boldsymbol{\Phi}}^{(n)}}}
\newcommand{\bPsi}{{\boldsymbol{\Psi}}}
\newcommand{\hsigma}{{\hat{\sigma}}}
\newcommand{\bbeta}{{\boldsymbol{\beta}}}
\newcommand{\bgamma}{{\boldsymbol{\gamma}}}
\newcommand{\cont}{C(\mX)}
\newcommand{\kxx}{k:\mX\times\mX\longrightarrow\R}
\newcommand{\Vleft}{(\C_{ii}+\epsilon\bP_i^1)^{-1/2}}
\newcommand{\Vright}{(\C_{jj}+\epsilon\bP_j^1)^{-1/2}}
\newcommand{\Vmid}{\C_{ij}}
\newcommand{\Vleftno}{\C_{ii}+\epsilon\bP_i^1}
\newcommand{\Vrightno}{\C_{jj}+\epsilon\bP_j^1}
\newcommand{\Vhatleft}{(\hat{\C}_{ii}+\epsilon\bP_i^1)^{-1/2}}
\newcommand{\Vhatright}{(\hat{\C}_{jj}+\epsilon\bP_j^1)^{-1/2}}
\newcommand{\Vhatmid}{\hat{\C}_{ij}}
\newcommand{\Vhatleftno}{\hat{\C}_{ii}+\epsilon\bP_i^1}
\newcommand{\Vhatrightno}{\hat{\C}_{jj}+\epsilon\bP_j^1}
\newcommand{\Var}{{\rm Var\,}}
\newcommand{\Varhat}{\widehat{\rm Var\,}}
\newcommand{\Cov}{{\rm Cov}}
\newcommand{\Span}{{\rm span}}
\newcommand{\Supp}{{\rm supp}}
\newcommand{\Dim}{{\rm dim}}
\newcommand{\half}{\frac{1}{2}}
\newcommand{\minmaxst}{\operatornamewithlimits{min/max/stationary}}
\newcommand{\argmin}{\operatornamewithlimits{argmin}}
\newcommand{\reals}{\rm I\!R}
\newcommand{\Phat}{{\hat{P}}}
\newcommand{\Hhat}{{\hat{H}}}
\newcommand{\tr}{{\mathrm {tr}}}
\begin{document}
\title{Kernel Additive Principal Components}
\author{Xin Lu Tan\footnote{xtan@wharton.upenn.edu}, Andreas Buja\footnote{buja.at.wharton@gmail.com}, and Zongming Ma\footnote{zongming@wharton.upenn.edu}
\vspace{.3cm} \\
Department of Statistics \\
The Wharton School \\
University of Pennsylvania \\
Philadelphia, PA 19104
}
\maketitle

\begin{abstract}
Additive principal components (APCs for short) are a nonlinear
  generalization of linear principal components.  We focus on smallest
  APCs to describe additive nonlinear constraints that are
  approximately satisfied by the data.  Thus APCs fit data with
  implicit equations that treat the variables symmetrically, as
  opposed to regression analyses which fit data with explicit
  equations that treat the data asymmetrically by singling out a
  response variable.  We propose a regularized data-analytic procedure
  for APC estimation using kernel methods.  In contrast to existing
  approaches to APCs that are based on regularization through subspace
  restriction, kernel methods achieve regularization through shrinkage
  and therefore grant distinctive flexibility in APC estimation by
  allowing the use of infinite-dimensional functions spaces for
  searching APC transformation while retaining computational
  feasibility.  To connect population APCs and kernelized
  finite-sample APCs, we study kernelized population APCs and their
  associated eigenproblems, which eventually lead to the establishment
  of consistency of the estimated APCs.  Lastly, we discuss an iterative 
  algorithm for computing kernelized finite-sample APCs.
\smallskip \\
\noindent{\bf Keywords:} \emph{Additive models, kernel methods, nonlinear multivariate analysis, principal
components, reproducing kernel Hilbert space.} \\
\end{abstract}



\section{Introduction}
\label{APCintro}

Principal component analysis (PCA) is a tool commonly used to reduce
the dimensionality of data sets consisting of a large number of
interrelated variables $X_1,X_2, \ldots,X_p$.  PCA amounts to finding
linear functions of the variables, $\sum a_jX_j$, whose variances are
maximal or, more generally, large and stationary under a unit norm
constraint, $\sum a_j^2=1$.  These linear combinations, called
\emph{largest linear principal components}, are thought to represent
low-dimensional linear structure of the data.  The reader is referred
to \cite{Jolliffe2002} for a comprehensive review of PCA.

One can similarly define the \emph{smallest linear principal
  component} as linear functions of the variables whose variances are
minimal or small and stationary subject to a unit norm constraint on
the coefficients.  If these variances are near zero,
$\Var(\sum a_jX_j)\approx 0$, the interpretation is that the data lie
near the hyperplane defined by the linear constraint $\sum a_jX_j = 0$
(assuming that the variables $X_j$ are centered).  Thus the purpose of
performing PCA on the lower end of the principal components spectrum
is quite different from that of performing it on the upper end:
largest principal components are concerned with structure of \emph{low
  dimension}, whereas smallest principal components are concerned with
structure of \emph{low codimension}.

Smallest additive principal components (``APCs'' for short) are a
nonlinear generalization of smallest linear principal components
(``LPCs'' for short), initially proposed in \cite{DonnellBuja1994}.
The \emph{smallest APC} is defined as an additive function of the
variables, $\sum\phi_j(X_j)$, with smallest variance subject to a
normalizing constraint $\sum \|\phi_j(X_j)\|^2 = 1$.  The
interpretation of a smallest APC is that the additive constraint
represented by the implicit additive equation $\sum\phi_j(X_j) = 0$
defines a nonlinear manifold that approximates the data.  The focus of
the current paper will be on smallest APCs, and we will therefore
sometimes drop the adjective ``smallest''.  Smallest APCs can be
motivated in several ways:
\begin{itemize}
\item APCs can be used as a generalized collinearity diagnostic for
  additive regression models.  Just as approximate collinearities
  $\sum \beta_j X_j \approx 0$ destabilize inference in linear
  regression, additive approximate ``concurvities''
  \citep{DonnellBuja1994} of the form $\sum\phi_j(X_j) \approx 0$
  destabilize inference in additive regression.  Such concurvities can
  be found by applying APC analysis to the predictors of an additive
  regression.
\item APCs can also be used as a symmetric alternative to additive
  regression as well as to ACE regression \citep{Breiman1985} when it
  is not possible or not desirable to single out any one of the
  variables as a response.  Additive implicit equations estimated with
  APCs will then freely identify the variables that have strong
  additive associations with each other.
\item Even when there is a specific response variable of interest in
  the context of an additive regression, an APC analysis of all
  variables, predictors as well as response, can serve as an indicator
  of the strength of the regression, depending on whether the response
  variable has a strong presence in the smallest APC.  If the response
  shows up only weakly, it follows that the predictors have stronger
  additive associations among each other than with the response.
\end{itemize}

Estimation of APCs and their transforms $\phi_j(X_j)$ from finite data
requires some form of regularization.  There exist two broad classes
of regularization in nonparametric function estimation, namely,
subspace regularization and shrinkage regularization.  Subspace
regularization restricts the function estimates $\hat{\phi}_j$ to
finite-dimensional function spaces on $X_j$.  Shrinkage regularization
produces function estimates by adding a penalty to the goodness-of-fit
measure in order to impose the spatial structure of $X_j$ on
$\hat{\phi}_j$.  Commonly used are generalized ridge-type quadratic
penalties (also called the ``kernelizing approach'') and lasso-type
$\ell_1$-penalties.  The original APC proposal in
\cite{DonnellBuja1994} uses subspace regularization for estimation
without providing asymptotic theory for it.  In the present article we
propose APCs based on shrinkage/kernelizing regularization and provide
some asymptotic consistency theory.

It should be pointed out that introducing a shrinkage/kernelizing
approach into a multivariate method is not a mechanical exercise.  It
is not a priori clear where and how the penalties should be inserted
into a criterion of multivariate analysis, which in the case of PCA is
variance subject to a constraint.  The situation differs from
regression where there is no conceptual difficulty in adding a
regularization penalty to a goodness-of-fit measure.  In a PCA-like
method such as APC analysis, however, it is not clear whether
penalties should be added to, or subtracted from, the variance, or
somehow added to the constraint, or both.  An interesting and related
situation occurred in functional multivariate analysis where the same
author (B.~Silverman) co-authored two different approaches to the same
PCA regularization problem \citep{RS1991, S1994}, differing in
where and how the penalty is inserted.  Our approach, if transposed to
functional multivariate analysis, agrees with neither of them.  One
reason for our third way is that neither of the approaches in
\cite{RS1991} or \cite{S1994} generalize to the low end of the PCA
spectrum.  In contrast, the regularized criterion proposed in this
article can be applied to the high and the low end of the spectrum,
and hence to the discovery of low dimension as well as low
co-dimension.  Our more specific interest is in the latter.

An immediate benefit of injecting penalty regularization into
multivariate analysis stems from recent methodological innovations in
kernelizing.  These include the possibility of using
infinite-dimensional function spaces, the interpretation of
regularization kernels as positive definite similarity measures, and
the kernel algebra with the freedom of modeling it engenders.  Two
decades ago, when \cite{DonnellBuja1994} was written, it would have
been harder to make the case for penalty regularization.


In what follows we first describe the mathematical structure of APCs
which sets up the optimization problem that will be studied
subsequently (Section~\ref{APCmath}).  Section \ref{sec:penalty-parameters}
discusses methodology for selecting penalty parameters in kernelized finite-sample 
APCs.  In Section~\ref{simdata} we demonstrate the application of kernelized APC 
analysis in simulated and real data sets.  Section~\ref{RKHS} poses the APC problem 
in the framework of reproducing kernel Hilbert spaces.  Section~\ref{Consistency} 
establishes the existence of  population APCs and the consistency of 
kernelized finite-sample APCs.  Section~\ref{compute} presents the power 
method for computing APCs, together with its supporting theoretical framework.  
We conclude with a discussion in Section~\ref{conclusion}.  
Proofs of results stated in Section~\ref{Consistency} are in Appendix~\ref{consproof},
whereas proofs related to the power method of Section~\ref{compute}
are in Appendix~\ref{powerproof}.  Appendix~\ref{sec:power-details} contains 
implementation details for the power method, while Appendix~\ref{linAlg} contains an 
alternative linear algebra method for computing APCs.



\section{Detailed Statement  of the APC Problem}
\label{APCmath}

\subsection{Transformations and Their Interpretations}

The raw material for LPCs are real-valued random variables $X_1,
\ldots , X_p$ with a joint distribution and finite second moments.
LPCs are then defined as linear combinations $\sum a_j X_j$ with
extremal or stationary variance subject to a constraint $\sum a_j^2
= 1$.

For APCs one replaces linear combinations with additive combinations
of the form $\sum \phi_j(X_j)$ where $\phi_j(X_j)$ are real-valued
functions defined on arbitrary types of random observations $X_j$.
That is, $X_1, \ldots, X_p$ can be random observations with values in
arbitrary measurable spaces $\mX_1, \ldots , \mX_p$, each of which can
be continuous or discrete, temporal or spatial, high- or
low-dimensional.  The only assumption at this point is that these
random observations have a joint distribution
$P_{1:p}(dx_1, \ldots, dx_p)$ on $\mX_1 \times \cdots \times \mX_p$.
Random variables are obtained by forming real-valued functions
$\phi_j(X_j)$ of the arbitrarily-valued $X_j$.  The functions $\phi_j$
are often interpreted as ``quantifications'' or ``scorings'' or
``scalings'' of the underlying spaces $\mX_j$.  If $X_j$ is already
real-valued, then $\phi_j(X_j)$ is interpreted as a variable
transformation.

We will only consider functions $\phi_j(X_j)$ that have finite
variance and belong to some closed subspace of square-integrable
functions with regard to their marginal distributions $P_j(dx_j)$:
\[
\phi_j\in H_j \subset L^2(\mX_j, dP_j) := \{\phi_j: \Var(\phi_j(X_j)) <\infty\}.
\]
In what follows we will write $\phi_j$ or $\phi_j(X_j)$
interchangeably.  The role of the coefficient vector $\ba =
(a_1,\ldots,a_p)^T$ in LPCs is taken on by a vector of
transformations:
\[
\bPhi:= (\phi_1, \ldots, \phi_p)\in\bH:= H_1\times H_2\times\cdots\times H_p.
\]
Similarly, the role of the linear combination $\sum a_j X_j$ in LPCs
is taken on by an additive function $\sum \phi_j(X_j)$.  APCs
contain LPCs as a special case when all $X_j$ are real-valued and
$H_j = \{ \phi_j :\, \phi_j(X_j) = a_j X_j,\, a_j \in \reals \}$.

\subsection{Population Inner Products and Irrelevant Constants}

The inner product will initially be population covariance.  For this
reason all functions are defined only modulo sets of measure zero and
also modulo constants.  Constants are a particular nuisance in
additive functions $\sum\phi_j$ because they are non-identifiable
across the transformations: for example, $\tiphi_k=\phi_k+c$,
$\tiphi_l=\phi_l-c$ for some $k \neq l$ and $\tiphi_j=\phi_j$ else
result in the same additive function, $\sum\tiphi_j = \sum\phi_j$.

To deal with unidentifiable constants in additive functions, we think
of $L^2(\mX_j, dP_j)$ as consisting of equivalence classes of
functions where two functions are equivalent if they differ by a
function that is constant almost everywhere (rather than requiring the
two functions to be equal almost everywhere).  We can then endow $H_j$
with the covariance as inner product and the variance as the squared norm:
\[
\langle\phi_j, \psi_j\rangle_{P_j} :=
\Cov(\phi_j, \psi_j)
\qquad\text{and}\qquad
\|\phi_j\|_{P_j}^2 := \Var(\phi_j).
\]
The natural inner product and squared norm on the product space $\bH$
are therefore
\[
\langle\bPhi, \bPsi\rangle_{P_{1:p}}:=\sum_{j=1}^p\langle\phi_j,
\psi_j\rangle_{P_j}\qquad\text{and}\qquad\|\bPhi\|_{P_{1:p}}^2:=\sum_{j=1}^p\|\phi_j\|_{P_j}^2.
\]
To avoid unidentifiable constants, \cite{DonnellBuja1994} take $H_j$
to be a closed subspace of centered transformations,
$L^2_c(\mX_j, dP_j) := \{\phi_j: E\phi_j = 0, \Var(\phi_j) <\infty\}$.
This solution to the non-identifiability problem of constants is
essentially equivalent to ours, but ours has the benefit that it does
not raise unnecessary questions when estimates $\hat{\phi}_j$ of the
transformations $\phi_j$ cannot be centered at the population (which
is not known) and hence strictly speaking cannot be in $H_j$ as defined
in~\cite{DonnellBuja1994}.  Our framework says that differences by
constants are irrelevant and should be ignored.

\subsection{Criterion and Constraint --- A Null Comparison Principle}
\label{sec:null-comparison}

In the introduction we stated that the optimization criterion for APCs
is variance, $\Var(\sum\phi_j)$, but we did not specify the
normalization constraint other than writing it as
$\sum \| \phi_j \|^2$ $= 1$.  From the previous subsections it is
clear that we will choose the constraint norms to be
$\| \phi_j \|_{P_j}^2 = \Var(\phi_j)$.  This was the choice made by
\cite{DonnellBuja1994}, their justification being that it generalizes
LPCs: for $H_j \!=\! \{ a_j X_j \!: a_j \in \reals \}$ we have
$\Var(\phi_j) = a_j^2$ for real-valued standardized $X_j$,
$\Var(X_j) = 1$, hence the constraint becomes $\sum a_j^2 = 1$.

To kernelize APCs, we will need a deeper justification for the
constraint.  Even for LPCs, however, we may ask: what is it that makes
$\sum a_j^2$ ``natural'' as a quadratic constraint form?  The answer
we propose is in the following:

\smallskip

\begin{itemize}
\item[] {\bf\em Null comparison principle for multivariate
    analysis:} {\em The quadratic form to be used for the constraint is
    the optimization criterion evaluated under the null assumption of
    vanishing correlations.}
\end{itemize}

This principle ties the constraint form in a unique way to the
criterion: There is no longer a choice of the constraint because it
derives directly from the criterion.  We arrive at a powerful and
principled way of devising generalizations of multivariate methods, as
will be exemplified when we kernelize APCs.  For the familiar cases
the principle works out as follows: For LPCs the null assumption is
\[
\Cov(X_j,X_k) = 0 \;\;\;\;\;\;\; \forall j \neq k. 
\]
The evaluation of the criterion, assuming also standardized real
variables $X_j$, results in the familiar form
\[
\Var(\sum a_j X_j) = \sum \Var(a_j X_j) = \sum a_j^2 .
\]
For APCs the null assumption is
\[
\Cov(\phi_j(X_j), \phi_k(X_k))=0  \;\;\;\;\;\;\; \forall \phi_j \in H_j, \; \phi_k \in H_k, \; j \neq k,
\]
that is, pairwise independence of the $X_j$.  The evaluation of the
criterion results in
\[
\Var(\sum \phi_j(X_j)) = \sum \Var(\phi_j(X_j)), 
\]
agreeing with our choice for the constraint form.  In retrospect this
justifies what we defined above to be the ``natural'' squared norm on
$\bH = H_1 \times \cdots \times H_p$.  Associated with it is the
``natural'' inner product also defined above:
\[
\langle \bPhi, \bPsi \rangle_{P_{1:p}} = \sum \Cov(\phi_j(X_j),\psi_j(X_j)) .
\]
Its utility is in defining hierarchies of APCs whereby the constrained
optimization problem is solved repeatedly under the additional
constraint of orthogonality to all previously obtained solutions.

\subsection{Kernelized APCs}
\label{kernelAPC}
The APC estimation procedure of \cite{DonnellBuja1994} 
can be characterized as using the above APC
framework where one replaces the population distribution $P_{1:p}$
with the empirical distribution $\Phat_{1:p} = \frac{1}{n} \sum_i \delta_{\x_i}$
of the data $\{\x_i: 1\leq i\leq n\}$, and $H_j$ by
finite-dimensional Hilbert spaces $\Hhat_j$ whose dimension is low
compared to $n$ (they may be spanned by dummy variables for discrete
$X_j$ or by low-degree monomials or spline basis functions with few
knots for quantitative $X_j$).  The finite dimensionality of the
spaces $\Hhat_j$ achieves the regularization necessary for estimation.
The empirically solvable optimization problem is therefore
\begin{equation} \label{sampleVers} 
  \min_{\phi_j \in \Hhat_j} \Varhat(\sum_{j=1}^p \phi_j) 
  \qquad\text{subject to}\qquad
  \sum_{j=1}^p \Varhat(\phi_j) = 1 ,
\end{equation}
where $\Varhat$ is the empirical variance obtained from data.  This
reduces to a generalized finite-dimensional eigenvalue/eigenvector
problem \citep{DonnellBuja1994}.

By contrast, we will consider here APC estimation based on
kernelizing, where regularization is achieved through additive
quadratic penalties $J_j(\phi_j)$ that induce Reproducing Kernel
Hilbert Spaces (RKHS) $\Hsp_j$ which become the natural choice 
for $H_j$.  While estimation is again based on the empirical
distribution $\Phat_{1:p}$, a regularized population version based on
the actual distribution $P_{1:p}$ exists also and is useful for
bias-variance calculations.  For simplicity of notation we continue
the discussion for the population case.  The kernelized optimization
criterion we choose is the penalized variance:
\begin{equation} \label{eq:criterion-penalized}
  \Var(\sum_{j=1}^p \phi_j) + \sum_{j=1}^p J_j(\phi_j) .
\end{equation}
This is a natural choice for minimization because it forces the
transformations $\phi_j$ not only to generate small variance but also
regularity in the sense of the penalties.  For concreteness the reader
may use as a leading example a cubic spline penalty
$J_j(\phi_j) = \alpha_j \int \phi_j''(x_j) dx_j$ for a quantitative
variable $X_j$ (where we absorbed the tuning constant $\alpha_j$ in
$J_j$), but the reader versed in kernelizing will recognize the
generality of modeling offered by positive definite quadratic forms
that generate RKHSs.

The question is next what the natural constraint should be.  Informed
by the null comparison principle of Section \ref{sec:null-comparison},
we will not naively carry $\sum \Var(\phi_j) = 1$ over to the
kernelized problem.  Instead we evaluate the criterion
\eqref{eq:criterion-penalized} under the assumption of absent
correlations between the transformations $\phi_j$ in the spaces $\Hsp_j$:
\begin{equation} \label{eq:constraint-penalized}
  \sum_{j=1}^p \Var(\phi_j) + \sum_{j=1}^p J_j(\phi_j) ~=~ 1.    
\end{equation}
As it turns out, this formulation produces meaningful results both for
minimization {\em and} maximization, hence both for estimating
implicit additive equations for discovery of structure of {\em low
  co-dimension}, and for estimating additive dimension reduction for
discovery of structure of {\em low dimension}.  In the present article
we pursue the former goal.  ---~An equivalent unconstrained problem
is in terms of the associated Rayleigh quotient:
\begin{equation} \label{eq:Rayleigh}
 \minmaxst_{\phi_1, \ldots,\phi_p} ~ \frac{  \Var( \sum \phi_j ) + \sum J_j(\phi_j) }
       {  \sum \Var(\phi_j)   + \sum J_j(\phi_j) }
\end{equation}

On data we will replace the population quantities in equations
\eqref{eq:criterion-penalized} and \eqref{eq:constraint-penalized}
with their sample counterparts.  As is usual, the penalties will be
expressed in terms of quadratic forms of certain kernel matrices.

\subsection{Alternative Approaches to Kernelized APCs}
\label{sec:alterntatives}

A brief historic digression is useful to indicate the conceptual
problem solved by the null comparison principle: As mentioned in the
introduction, in the related but different field of functional
multivariate analysis, Silverman co-authored two different approaches
to the same PCA regularization problem where largest principal
components are sought for dimension reduction.  These can be
transposed to the APC problems as follows:
\begin{eqnarray}
  \label{eq:FPCA-Rice-Silverman-91}
  \max_{\phi_1, \ldots,\phi_p} ~ \Var( \sum \phi_j ) - \sum J_j(\phi_j)
  \qquad
  \text{subject to}
  \qquad
  \sum \Var(\phi_j ) = 1 ,
  \\
  \label{eq:FPCA-Silverman-94}
  \max_{\phi_1, \ldots,\phi_p} ~ \Var( \sum \phi_j ) 
  \qquad
  \text{subject to}
  \qquad
  \sum \Var(\phi_j ) + \sum J_j(\phi_j) = 1 ,
\end{eqnarray}
where \eqref{eq:FPCA-Rice-Silverman-91} is due to \cite{RS1991} and
\eqref{eq:FPCA-Silverman-94} is due to \cite{S1994}.  The first
approach \eqref{eq:FPCA-Rice-Silverman-91} substracts the penalty from
the criterion, which does what it should do for regularized variance
{\em maximization}.  It is unsatisfactory for reasons of mathematical
aesthetics: a difference of two quadratic forms can result in negative
values, which may not be a practical problem but ``does not seem
right'.'  The second approach \eqref{eq:FPCA-Silverman-94} solves this
isssue by adding a penalty to the constraint rather than substracting
it from the criterion, which again does what it should do for variance
maximization.  Both approaches can be criticized for resulting in
non-sense when the goal is regularized variance {\em minimization}.
Here the first approach \eqref{eq:FPCA-Rice-Silverman-91} is more
satisfying because it is immediately clear how to modify it to work
for regularized variance minimization:
\begin{equation*}
  \min_{\phi_1, \ldots,\phi_p} ~ \Var( \sum \phi_j ) + \sum J_j(\phi_j)
  \qquad
  \text{subject to}
  \qquad
  \sum \Var(\phi_j) = 1 ,
\end{equation*}
whereas for the approach \eqref{eq:FPCA-Silverman-94} it is not clear
how it could be modified to work in this case.  Subtracting the
penalty from the constraint variance,
$\sum \Var(\phi_j) - \sum J_j(\phi_j) = 1$, is clearly not going to
work.

Escewing these problems, we propose
\begin{equation} \label{eq:APC}
  \min_{\phi_1, \ldots,\phi_p} ~ \Var( \sum \phi_j ) + \sum J_j(\phi_j)
  \qquad
  \text{subject to}
  \qquad
  \sum \Var(\phi_j)  + \sum J_j(\phi_j) = 1 .
\end{equation}
The merits of this proposal are that (1)~it has no aesthetic issues,
(2)~it works for both ends of the variance spectrum, and (3)~it
derives from a more fundamental principle rather than mathematical
experimentation.

\subsection{The Kernelizing View of APCs}
\label{sec:kernel-view}

The major benefit of formulating APCs in the kernelizing framework is
the flexibility of embedding the information contained in data objects
in $p$ different $n \times n$ kernel matrices as opposed to an
$n \times p$ feature matrix.  Kernel matrices have an interpretation
as similarity measures between pairs of data objects.  It is therefore
possible to directly design similarity matrices (instead of features)
for non-Euclidean data for use as kernels.  Thus topological
information between data objects captured by kernels can be used to
directly estimate APC transforms of non-quantitative data.  Just as
one extracts multiple features from data objects, one similarly
extracts multiple similarity matrices to capture different topological
information in data objects.  APC then helps us find associations
between these kernels in terms of ``implicit'' redundancies.
Following the discussion at the end of
Section~\ref{kernelAPC}, on data the APC variance is
evaluated on the sum of transforms, and the penalties are obtained
from the constructed kernel matrices.

\subsection{Relation of APCs to Other Kernelized Multivariate Methods}

The focus on the lower end of the spectrum seems to have found little
attention in the literature, but the criterion we use can be related
to existing proposals even if their focus is on the upper end of the
spectrum.

A special situation with precedent in the literature occurs for $p=2$,
in which case the kernelized APC problem \eqref{eq:APC} reduces to the
kernelized canonical correlation analysis (CCA) problem discussed by
\cite{Fukumizu2007}.  To see the equivalence, one may start with the
simplified Rayleigh problem \eqref{eq:Rayleigh},
\begin{equation} \label{eq:APC2}
  \minmaxst_{\phi_1, \phi_2} ~ \frac{ \Var( \phi_1 + \phi_2 ) + J_1(\phi_1) + J_2(\phi_2) }
       { \Var(\phi_1) + \Var(\phi_2) + J_1(\phi_1) + J_2(\phi_2) }.
\end{equation}
It can be shown that stationary solutions satisfy 
\begin{equation} \label{eq:CCA-EqualNorms}
  \Var(\phi_1) + J_1(\phi_1) ~=~ \Var(\phi_2) + J_2(\phi_2) ,
\end{equation}
and it follows that the problem \eqref{eq:APC2} is equivalent to
\[\minmaxst_{\phi_1, \phi_2} ~
  \frac{ \Cov( \phi_1, \phi_2 ) }
       { \left(\Var(\phi_1) + J_1(\phi_1)\right)^{1/2} \left(\Var(\phi_2) + J_2(\phi_2)\right)^{1/2} },
\]
where the normalization \eqref{eq:CCA-EqualNorms} can be enforced
without loss of generality.  This is recognized as a penalized form of CCA.  It has been
rediscovered several times over, in the machine learning literature by
\cite{Bach2003}, and earlier in the context of functional multivariate
analysis by \cite{Leurgans1993}.

Interesting is the work of \cite{Bach2003} which generalizes CCA to the
case $p>2$ but shows no interest in the results of such an analysis
other than this becoming the building block in a method for
independent components analysis (ICA), where the input variables $X_j$
are projections of multivariate data onto frames of orthogonal unit
vectors.  \cite{Bach2003} correctly build up a finite-sample version
of what amounts to APCs for $p>2$ without a guiding principle other
than the appearance of it being a ``natural generalization''.  A
population version and associated consistency theory is missing as
their focus is on ICA and associated computational problems.

Finally it would be natural to discuss a relationship between kernel
APCs and kernel principal components (KPCA, \cite{Scholkopf1998},
\cite{Scholkopf2002}).  However, we do not see a natural connection at
this point.




\section{Methodologies for Choosing Penalty Parameters}
\label{sec:penalty-parameters}

Any kernel calls implicitly for a multiplicative penalty parameter
that controls the amount of regularization to balance bias and
variance against each other.  Methods that use multiple kernels will
have as many penalty parameters as kernels.  Choosing these parameters
in a given problem requires some principles for systematically
selecting the values for these parameters.  Such principles have been
discussed at least as long as there have existed additive models
\citep{HT1990}, and APCs pose new problems only in so far as they use
Rayleigh quotients as their optimization criteria rather than residual
sums of squares or other regression loss functions as their
minimization criteria.  An initial division of principles for penalty
parameter selection is into a priori choice and data-driven choice.

\subsection{A Priori Choice of Penalty Parameters} 

In order to make an informed a priori choice of a penalty parameter it
must be translated into an interpretable form.  The most common such
form is in terms of a notion of ``degrees of freedom'' which can be
heuristically rendered as ``equivalent number of observations invested
in estimating a transformation.''  To define degrees of freedom for
kernelizing one makes use of the fact that for a fixed penalty
parameter a fit $\hphi(x)$ produced by a kernel on regressor-response
data $(x_i,y_i)$ of size $n$ is a linear operation
$\y = (y_i)_{i=1...n} \mapsto \hphi$, $\reals^n \rightarrow$~$H$,
hence the evaluation map $\y \mapsto \hby = (\hphi(x_i))_{i=1..n}$,
$\reals^n \rightarrow \reals^n$ can be represented by a matrix
operation $\bS \y = \hby$, where the $n \times n$ ``smoother matrix''
$\bS$ is symmetric and non-negative definite, and all its eigenvalues
are $\le 1$.  The matrix $\bS$ depends on the penalty parameter
$\alpha$, $\bS = \bS_{\alpha}$, and serves as the basis for defining
notions of degrees of freedom.  Several definitions exist, three of
which are as follows~\citep{BHT1989}:
\begin{itemize}
\item $df = \tr(\bS^2)$: This derives from the total variance in
  $\hby$, which under homoskedasticity is
  $\sum_i \Var(\hy_i) = \tr(\bS \bS') \sigma^2$.  Variance of fitted
  values is a measure of how much response variation has been invested
  in the fits.
\item $df = \tr(2 \bS - \bS^2)$: This derives from the total residual
  variance in $\br = \y - \hby$ under a homoskedasticity assumption:
  $\sum_i \Var(r_i) = \tr(\I - \bS - \bS' + \bS \bS') \sigma^2$.
  Variance of residuals, when substracted from $n \sigma^2$, is a
  measure of how much of the error variance has been lost to the
  fitted values.
\item $df = \tr(\bS)$: This derives from a Bayesian interpretation of
  kernelizing under a natural Bayes prior that results in
  $\bS \sigma^2$ as the posterior covariance matrix of $\hby$.  A
  frequentist derivation is obtained by generalizing Mallows' $C_p$
  statistic which corrects the residual sum of squares with a term
  $2 (df) \hsigma^2$ to make it unbiased for the predictive MSE; the
  appropriate generalization for smoothers is $df = \tr(\bS)$.
\end{itemize}
Among these, the third is the most popular version.  If $\bS$ is a
projection, all three definitions result in the same value, which is
the projection dimension, but for kernels whose $\bS$ contains
eigenvalues strictly between 0 and 1 the three definitions are
measures of different concepts.  For general kernels the calculation
of degrees of freedom for a ladder of penalty parameter values
$\alpha$ may result in considerable computational expense, which is
compounded by the fact that in practice for a prescribed degree of
freedom several values of $\alpha$ need to be tried in a bisection
search.  Yet the translation of $\alpha$ to a degree of freedom may be
the most natural device for deciding a priori on an approximate value
of the penalty parameter.  Selecting degrees of freedom separately for
each transformation $\phi_j$ is of course a heuristic for APCs, as it
is for additive regression models, because what matters effectively is
the total degrees of freedom in the additive function
$\sum_{j=1}^p \hphi_j$.  Summing up the individual degrees of freedom
of $\hphi_j$ is only an approximation to the degrees of freedom
of~$\sum_{j=1}^p \hphi_j$~\citep{BHT1989}.

In practice one often decides on identical degrees of freedom $df$ for
all transforms $\hphi_j$ and chooses the sum $p \cdot df$ to be a
fraction of $n$, such as $p \cdot df = n/10$.

\subsection{Data-driven Choice of Penalty Parameters} 
\label{sec:CV}

The most popular data-driven method is based on cross-validation.  A
first question is what the criterion should be that is being
cross-validated.  We use as the relevant criterion the empirical,
unpenalized sample eigenvalue:
\[
\hat{\lambda}_1 := \frac{\Varhat(\sum\hat{\phi}_j)}{\sum\Varhat(\hat{\phi}_j)}.
\]
It is an estimate of $\lambda_1 := \min_{\phi_1, \ldots, \phi_p}\Var(\sum\phi_j)/\sum\Var(\phi_j)$ which, when small ($\ll \! 1$),
suggests the existence of concurvity in the data.  Of course, the
criterion that is actually being minimized is the penalized sample
eigenvalue:
\[
\frac{\Varhat(\sum\hat{\phi}_j) + \sum J_j(\hat{\phi}_j)}{\sum\Varhat(\hat{\phi}_j) + \sum J_j(\hat{\phi}_j)}.
\]
This we treat as a surrogate quantity that is not of substantive
interest.  (The distinction between quantity of interest and surrogate
quantity is familiar from supervised classification where interest
focuses on misclassification rates but minimization is carried out on
surrogate loss functions such as logistic or exponential loss;
accordingly it is misclassification rates that are used in
cross-validation.)

To choose the penalty parameters in the simplest possible way, one
often makes them identical for all variables and then searches their
common value $\alpha$ on a grid, minimizing the $k$-fold cross-validation
criterion
\[
\text{CV}(\alpha) =
 \frac{1}{k}\sum_{i=1}^k\
 \frac{\Varhat(\sum_{j=1}^p\hat{\phi}_{\{i\}j})}
 {\sum_{j=1}^p\Varhat(\hat{\phi}_{\{i\}j})}.
\]
The variances $\Varhat$ are evaluated on the holdout
sets while the transforms $\hat{\phi}_{\{i\}j}$ are estimated from the
training sets.

Here, however, attention must be paid to the question of what ``equal
value of the penalty parameters'' means.  The issue is that the
meaning of a penalty parameter $\alpha$ is very much scale dependent.
For example, a standard Gaussian kernel
$k(x, x') = \exp\{-\half (x-x')^2\}$ is very different when a variable
measured in miles is converted to a variable measured in feet.  One
approach to equalizing the effect of scale on the penalties and
kernels is to standardize all variables involved.  Another approach is
to calibrate all penalty parameters to produce the same degrees of
freedom.



\section{Simulation and Application to Real Data}
\label{simdata}

In this section we demonstrate the use of APC in identifying
additive degeneracy in data.  We first apply APC analysis to a university 
webpages data and an air pollution data.  We then evaluate 
the finite-sample performance of APC on a simulated data for which 
the optimal transformations are known.  

\subsection{Real Data}

\subsubsection{University Webpages}

In this section, we demonstrate the kernelizing use of kernel APC in
estimating additive implicit equation.  Instead of extracting features
and estimating APC transforms of the features, we start with
similarity measures between data points and use them as elements in
the kernel matrices directly.  In this case, one can view APC
transformation as an embedding of the original data points in the
Euclidean space into an RKHS, and the estimated additive constraints
represent an implicit equation in the embedded space.  Such
flexibility is only possible with kernel APC (which only requires
kernel matrices as input) but not with the original subspace
restriction APC as considered in \cite{DonnellBuja1994} (which
requires a specification of basis functions as input).

\begin{table}[htp]
\centering
\begin{tabular}{| ll | ll | ll | ll |}
  \hline
 group 1 &  & group 2 &  & group 3 &  & group 4 &  \\ 
  \hline
activ & student & address &  & acm & languag & advanc & receiv \\ 
area & teach & contact &  & algorithm & method & assist & scienc \\ 
book & work & cours &  & analysi & model & associ & softwar \\ 
build & year & depart &  & applic & network & center & state \\ 
california &  & email &  & architectur & parallel & colleg & technolog \\ 
chair &  & fall &  & base & problem & degre & univers \\ 
class &  & fax &  & comput & process & director &  \\ 
current &  & hall &  & confer & program & educ &  \\ 
faculti &  & home &  & data & public & electr &  \\ 
graduat &  & inform &  & design & research & engin &  \\ 
group &  & link &  & develop & select & institut &  \\ 
includ &  & list &  & distribut & structur & intellig &  \\ 
interest &  & mail &  & gener & studi & laboratori &  \\ 
introduct &  & offic &  & high & system & mathemat &  \\ 
paper &  & page &  & ieee & techniqu & member &  \\ 
project &  & phone &  & implement & theori & number &  \\ 
recent &  & updat &  & investig & time & profession &  \\ 
special &  & web &  & journal & tool & professor &  \\ 
   \hline
\end{tabular}
\caption{Keywords in group 1 to group 4.}
\label{keyword}
\end{table}

We consider the university webpages data set from the ``World Wide
Knowledge Base" project at Carnegie Mellon University.  This data set
was preprocessed by \cite{Cardoso} and previously studied in
\cite{Guo2011} and \cite{Tan2015}.  It includes webpages from computer
science departments at Cornell, University of Texas, University of
Washington, and University of Wisconsin. In this analysis, we consider
only the faculty webpages --- resulting in $n$ = 374 faculty webpages
and $d$ = 3901 keywords that appear on these webpages.

We now discuss how we constructed for this data set four similarity
matrices to be used as kernels.  First we reduced the number of
keywords from 3901 to 100 by thresholding the $\log$-entropy.  Let
$f_{ij}$ be the number of times the $j^{th}$ keyword appears in the
$i^{th}$ webpage.  The log-entropy of the $j^{th}$ keyword is defined
as $1+\sum_{i=1}^ng_{ij}\log(g_{ij})/\log(n)$, where
$g_{ij} = f_{ij}/\sum_{i=1}^nf_{ij}$.  We then selected the 100
keywords with the largest $\log$-entropy values and constructed an
$n\times 100$ matrix $H$ whose $(i, j)$ element is $\log(1+f_{ij})$.
We further standardized each column to have zero mean and unit
variance.  In order to obtain four different kernels, we applied the
$k$-means algorithm to cluster the keywords in $H$ into $k=p=4$
groups.  Each group of keywords is represented as an $n\times m_i$
submatrix $H_i$, and we obtained the final $n\times n$ kernel matrix
$K_i = H_iH_i^T / \tr(H_iH_i^T)$, where the normalization
is to account for different group sizes.  Table~\ref{keyword} shows the
keywords in each group.  Roughly, group 1 contains keywords related to
teaching and current projects, group 2 contains keywords related to
contact information, group 3 contains keywords related to research
area, and group 4 contains keywords related to biography of a faculty.

\begin{figure}[htp]
\centering
\includegraphics[angle=0,width=4.5in]{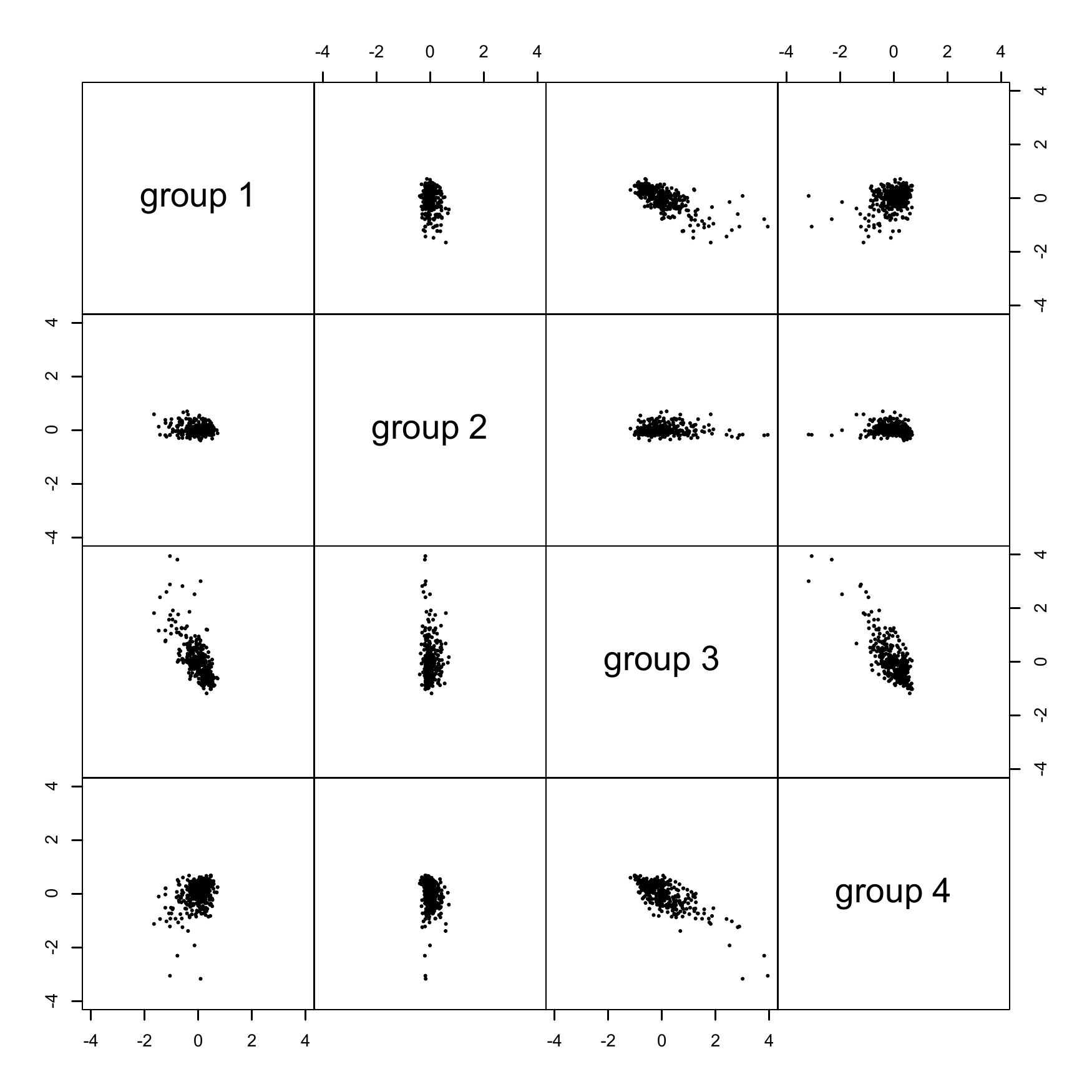}
\caption{Pairwise scatterplot of the smallest kernelized sample APC 
for the university webpages data.  The sample eigenvalue for the 
estimated APC is 0.0910.}
\label{Webpage1}
\end{figure}

Upon an APC analysis using the kernel matrices $K_1, \ldots, K_4$
constructed above, we obtain the transformations
$\hat{\phi}_1, \ldots, \hat{\phi}_4$.  Figure~\ref{Webpage1} shows the pairwise
scatterplot of the transformed data points.  We see that $\hat{\phi}_3$ and
$\hat{\phi}_4$ have strong negative correlation, $\hat{\phi}_1$ and 
$\hat{\phi}_3$ have moderate negative correlation, and $\hat{\phi}_1$ and 
$\hat{\phi}_4$ have weak positive correlation.  For ease of interpretation, 
the transformed data points are centered to zero mean and normalized 
to $\sum_{j=1}^4\Varhat\hat{\phi}_j = 1$.  The normalization permits
us to interpret $\Varhat\hat{\phi}_j$ as relative importance of
the $j^{th}$ group in the estimated APCs.  The variance of each group in 
the smallest APC are: 0.1662 (group 1), 0.0305 (group 2),
0.5562 (group 3), 0.2471 (group 4).  Ignoring group 2 which has the
smallest weight, we see that, roughly, this means that
\[
\hat{\phi}_1 + \hat{\phi}_3+\hat{\phi}_4 \approx 0, \qquad\text{or, equivalently,}\qquad\hat{\phi}_4\approx -\hat{\phi}_1-\hat{\phi}_3.
\]
If we plot $\hat{\phi}_4$ against $\hat{\phi}_1+\hat{\phi}_3$, we get the scatterplot in
Figure~\ref{Webpage2}.

\begin{figure}[htp]
\centering
\includegraphics[angle=0,width=2.6in]{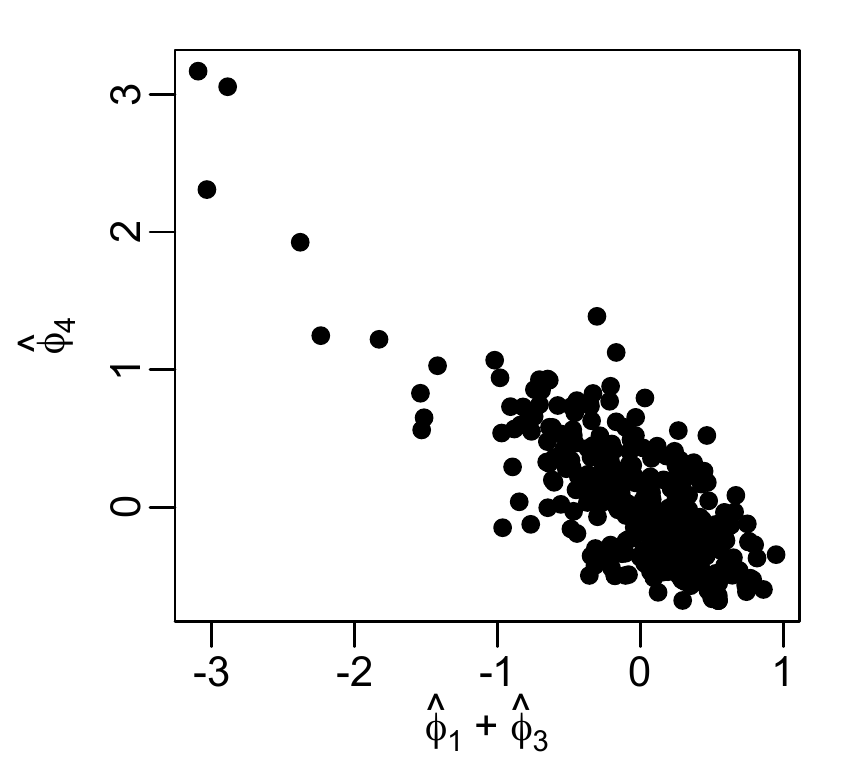}
\caption{Plot of $\hat{\phi}_4$ against $\hat{\phi}_1 + \hat{\phi}_3$ in the smallest kernelized 
sample APC for the university webpages data.}
\label{Webpage2}
\end{figure}

Recalling that the kernels were constructed to reflect similarity in
terms of keywords related to (1)~teaching and current projects,
(2)~contact information, (3)~research area, and (4)~biography, we
obtain two results: contact information is related to neither of
teaching and projects nor research, whereas biography is well
predicted by teaching, projects and research.  This is of course
highly plausible for faculty webpages and academic biographies.

\subsubsection{Air Pollution}
In this section, we apply APC analysis to a data set consisting of
quantitative variables, where the purpose is to find nonlinear
transformations that reflect redundancies among the variables.  To
this end we analyze the $NO_2$ data that is publicly available on the
StatLib data sets archive
\textit{http://lib.stat.cmu.edu/datasets/NO2.dat}.  It contains a
subsample of 500 observations from a data set collected by the
Norwegian Public Roads Administration for studying the dependence of
air pollution at a road on traffic volume and meteorological
condition.  The response variable consists of hourly values of the
log-concentration of $NO_2$ particles, measured at Alnabru in Oslo,
Norway, between October 2001 and August 2003.  Because the posted data
is only a subset of the original data, the middle chunk of
observations is missing.  To avoid nonsensical transformations of the
variables, only the second half of the data (roughly November 2002 to
May 2003) is used in the APC analysis.  Given below are descriptions
for individual variables in the data:

\begin{center}
\begin{tabular}{ll}
\texttt{NO2}: & hourly values of the logarithm of the concentration of $NO_2$ particles; \\  
\texttt{Cars}: & logarithm of the number of cars per hour; \\   
\texttt{TempAbove}: & temperature $2$ meter above ground (degree C); \\   
\texttt{Wind}: & wind speed (meters/second); \\   
\texttt{TempDiff}: & temperature difference between $25$ and $2$ meters above ground (degree C); \\   
\texttt{WindDir}: & wind direction (degrees between 0 and 360); \\
\texttt{HourOfDay}: & hour of day; \\   
\texttt{DayNumber}: & day number from October 1, 2001. \\   
\end{tabular}
\end{center}

Figure~\ref{NO21} shows the corresponding transformations for
individual variables in the smallest APC.  The variables \texttt{Cars}
and \texttt{HourOfDay} are the primary variables involved (with
respective variance $0.51$ and $0.304$).  Holding other variables
fixed, the estimated constraint says that
$\hat{\phi}_2(\texttt{Cars})+\hat{\phi}_7(\texttt{HourOfDay}) \approx 0$.  Since
$\hat{\phi}_2$ is monotone decreasing and the transformation of
\texttt{HourOfDay} peaks around 4pm, we infer that the largest number
of cars on the roads is found in the late afternoon, which is
consistent with the daily experience of commuters.

\begin{figure}[htp]
\centering
\includegraphics[angle=0,width=6in]{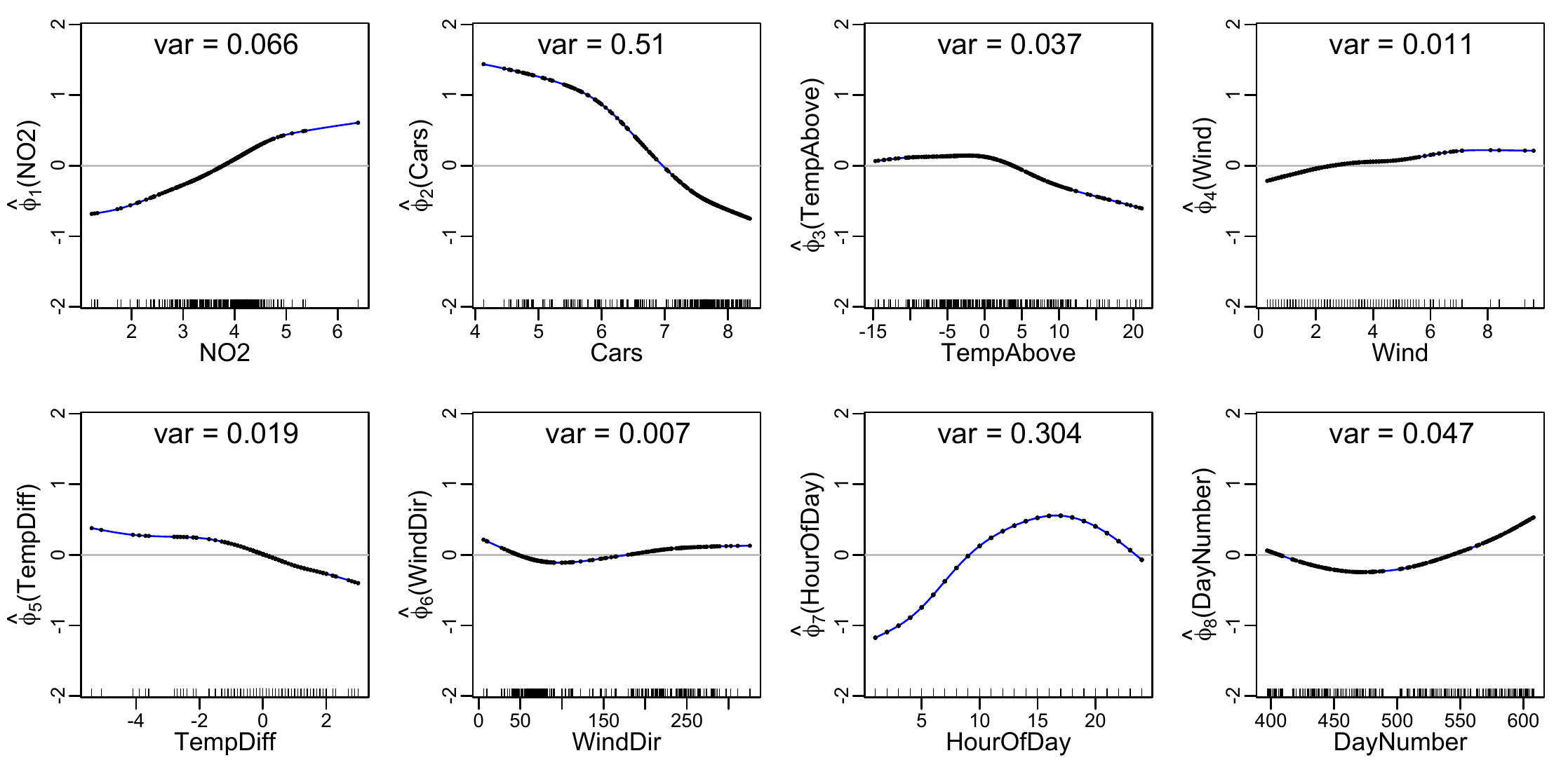}
\caption{The smallest kernelized sample APC for the $NO_2$ data obtained from power algorithm using Sobolev kernel for individual variables.  The sample eigenvalue for the estimated APC is 0.0621.  The black bars at the bottom of each plot indicate the location of data points for that variable.}
\label{NO21}
\end{figure}

\begin{figure}[htp]
\centering
\includegraphics[angle=0,width=6in]{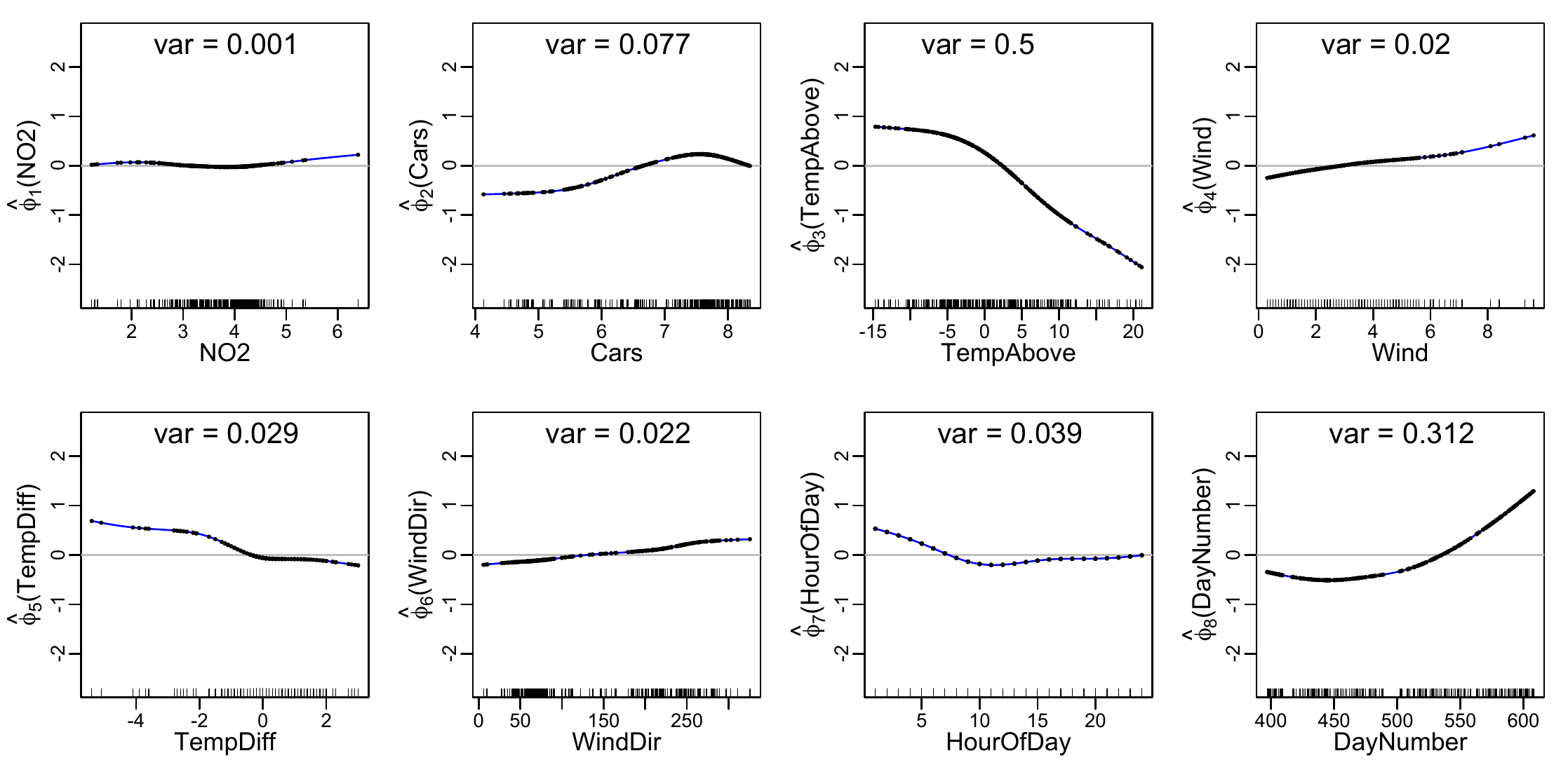}
\caption{The second smallest kernelized sample APC for the $NO_2$ data obtained from power algorithm using Sobolev kernel for individual variables.  The sample eigenvalue for the estimated APC is 0.0827.  The black bars at the bottom of each plot indicate the location of data points for that variable.}
\label{NO22}
\end{figure}

\begin{figure}[htp]
\centering
\includegraphics[angle=0,width=6in]{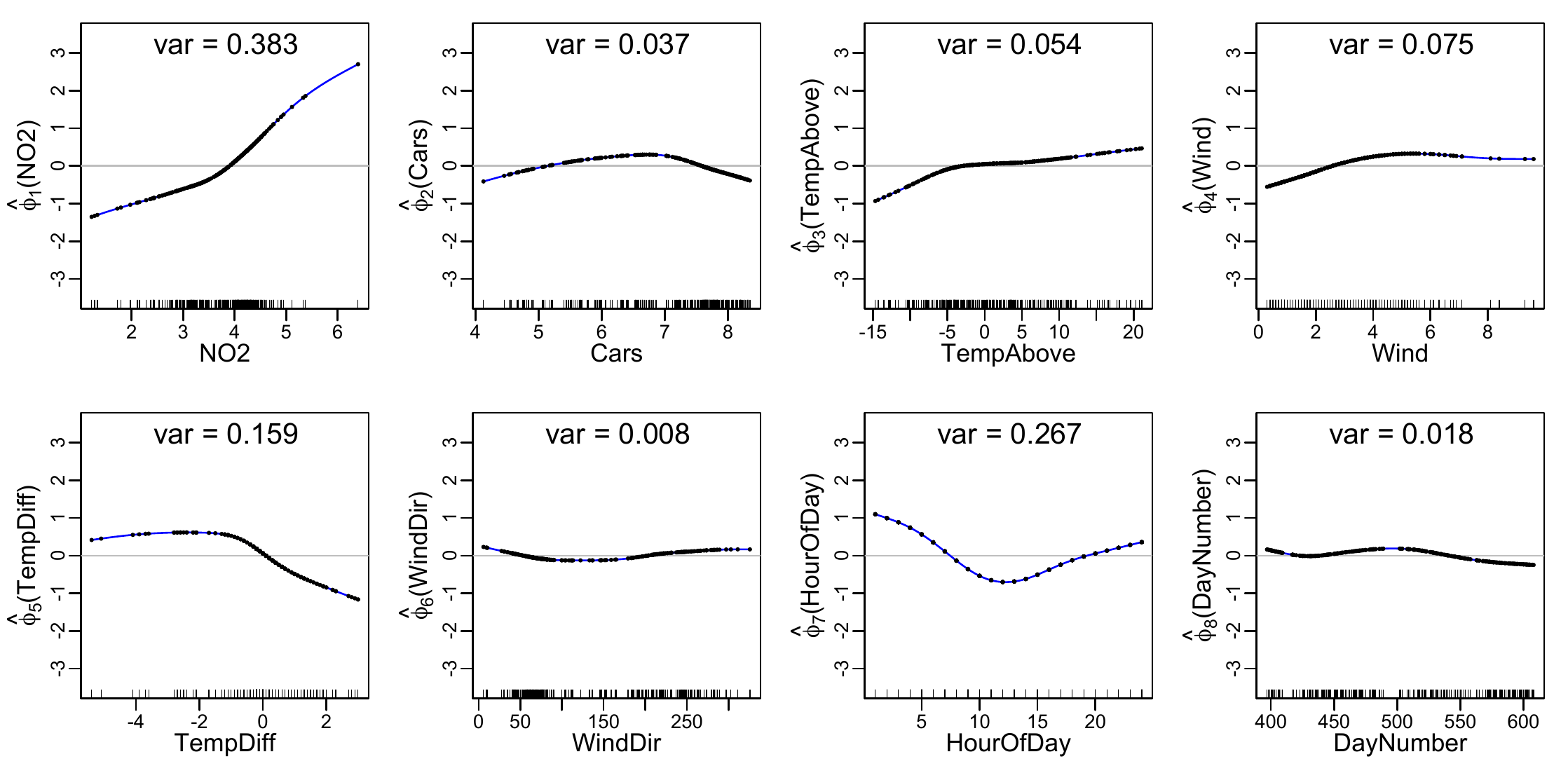}
\caption{The third smallest kernelized sample APC for the $NO_2$ data obtained from power algorithm using Sobolev kernel for individual variables.  The sample eigenvalue for the estimated APC is 0.189.  The black bars at the bottom of each plot indicate the location of data points for that variable.}
\label{NO23}
\end{figure}

In the second smallest APC in Figure~\ref{NO22}, the variables
\texttt{TempAbove} and \texttt{DayNumber} play the dominant roles, and
we have
$\hat{\phi}_3(\texttt{TempAbove})+\hat{\phi}_8(\texttt{DayNumber})\approx 0$.
Since $\hat{\phi}_3$ is monotone decreasing it follows that
\texttt{TempAbove} decreases and then increases with respect to
\texttt{DayNumber}.  This relationship makes sense since our data
spans over November 2002 to May 2003, hence carries the transition
from fall to summer.

The response variable of interest in the original study, \texttt{NO2},
does not appear until the third smallest APC, which is given in
Figure~\ref{NO23}.  We have
$\hat{\phi}_1(\texttt{NO2}) + \hat{\phi}_5(\texttt{TempDiff}) +
\hat{\phi}_7(\texttt{HourOfDay}) \approx 0$.
From the shape of $\hat{\phi}_7$ we see that the highest \texttt{NO2} occurs
during lunch time, which makes sense as this is the time of greatest
sun exposure.  Note that surprisingly there is no interpretable
association with \texttt{Cars} as its transformation has little
variance and is not monotone (more cars should create more
\texttt{NO2}).  However, the strong association between \texttt{Cars}
and \texttt{HourOfDay} in the smallest APC creates an approximate
non-identifiability of association of \texttt{Cars} and
\texttt{HourOfDay} vis-\`{a}-vis other variables such as \texttt{NO2},
which may partly explain the absence of association between
\texttt{Cars} and \texttt{NO2}.

APC analysis suggests some interesting relationship among the
variables in the data.  It also suggests that had an additive model
been fitted to the data with \texttt{NO2} as the response and all
other variables as the predictors, the estimated transforms for
individual variables will not be interpretable due to the presence of
concurvity (as represented by the smallest and second smallest APC) in
the data.

\subsection{Simulated Data}

We construct a simulated example consisting of four univariate random
variables $X_1, \ldots, X_4$ with known APC transformations
$\phi_1(X_1), \ldots$, $\phi_4(X_4)$.  This will be achieved by
constructing them in such a way that the joint distribution of these
transformations will be multivariate normal and highly collinear.  The
reason for this construction is that the extremal APCs of multivariate
normal distributions are linear.  (They also have APCs with
non-extremal eigenvalues consisting of systems of Hermite polynomials;
see \cite{DonnellBuja1994}.)  This implies that if transformations
$\phi_j(X_j)$ exist that result in a jointly multivariate normal
distribution, they will constitute an APC.

A simple procedure for simulating a situation with well-defined APCs
is to first construct a multivariate normal distribution and transform
its variables with the inverses of the desired transformations.  APC
estimation is then supposed to find approximations of these
transformations from data simulated in this manner.  

We start by constructing a multivariate normal distribution by using
two independent variables $W_1, W_2 \sim \Norm(0,1)$ to generate the
underlying collinearity and four independent variables
$Z_1, Z_2, Z_3, Z_4 \sim \Norm(0,0.1^2)$ to generate noise:
\begin{equation*}
  Y_1 = W_1 + Z_1      , \;\;\;\;\;\;
  Y_2 = W_2 + Z_2      , \;\;\;\;\;\;
  Y_3 = W_1 + W_2 + Z_3, \;\;\;\;\;\;
  Y_4 = Z_4 .
\end{equation*}
Thus the joint distribution features a collinearity of co-dimension 1
in the first three variables, and the fourth variable is independent
of the rest.  The correlation matrix of these four variables has a
smallest eigenvalue of $0.007441113...$, which will be the smallest
population APC eigenvalue.  The associated eigenvector is
$(1/2,1/2,1/\sqrt{2},0)$, which indicates that the fourth transform
will be zero, whereas the first three transforms will have variances
$1/4$, $1/4$ and $1/2$, respectively.
The ``observed'' variables are constructed as marginal transformations
$X_j = f_j(Y_j)$ using the following choices:
\begin{equation*}
X_1 = \exp(Y_1)            , \;\;\;\;\;
X_2 = -Y_2^{1/3}            , \;\;\;\;\;
X_3 = \exp(Y_3)/(1+\exp(Y_3) , \;\;\;\;\;
X_4 = Y_4  ,
\end{equation*}
hence the APC transformations are
\begin{equation*}
  \phi_1^*(x) \sim \log(x)       , \;\;\;\;\;\;
  \phi_2^*(x) \sim -x^3          , \;\;\;\;\;\;
  \phi_3^*(x) \sim \log(x/(1-x)) , \;\;\;\;\;\;
  \phi_4^*(x) = 0 .
\end{equation*}
As noted above the last transformation vanishes, and the other
transformations are given only up to irrelevant additive constants as
well as scales to achieve $\Var(\phi_1)$ $=$ $\Var(\phi_2)$
$= 1/4$ and $\Var(\phi_3) = 1/2$.  

Figure~\ref{Sim1} shows the kernelized sample APC for this data set
($n=250$), with a common penalty parameter chosen by 5-fold
cross-validation.  As discussed at the end of Section~\ref{sec:CV}, we
standardized all variables to have unit variance before applying a
standard Gaussian kernel $k(x, x') = \exp\{-\half (x-x')^2\}$ for each
variable $X_j$.  The solid red line denotes the true transform
$\phi_j^*$, while the dashed blue line denotes estimated transform
$\hat{\phi}_j$.  We see that for each variable, the two lines are
almost indistinguishable, though estimation accuracy worsens near the
boundaries and on regions with few data points (location of data
points are indicated by the black bars at the bottom of each plot).
The transformed data points are centered to zero mean and normalized
to $\sum_{j=1}^4\Varhat\hat{\phi}_j = 1$, so that
$\Varhat\hat{\phi}_j$ indicates the relative importance of
$\hat{\phi}_j$ in the estimated APCs.  In fact, we see that
$\Varhat\hat{\phi}_j$ is close to $\Var Y_j/(\sum_{i=1}^4\Var Y_i)$ in
the data generating steps.

\begin{figure}[htp]
\centering
\includegraphics[angle=0,width=5.5in]{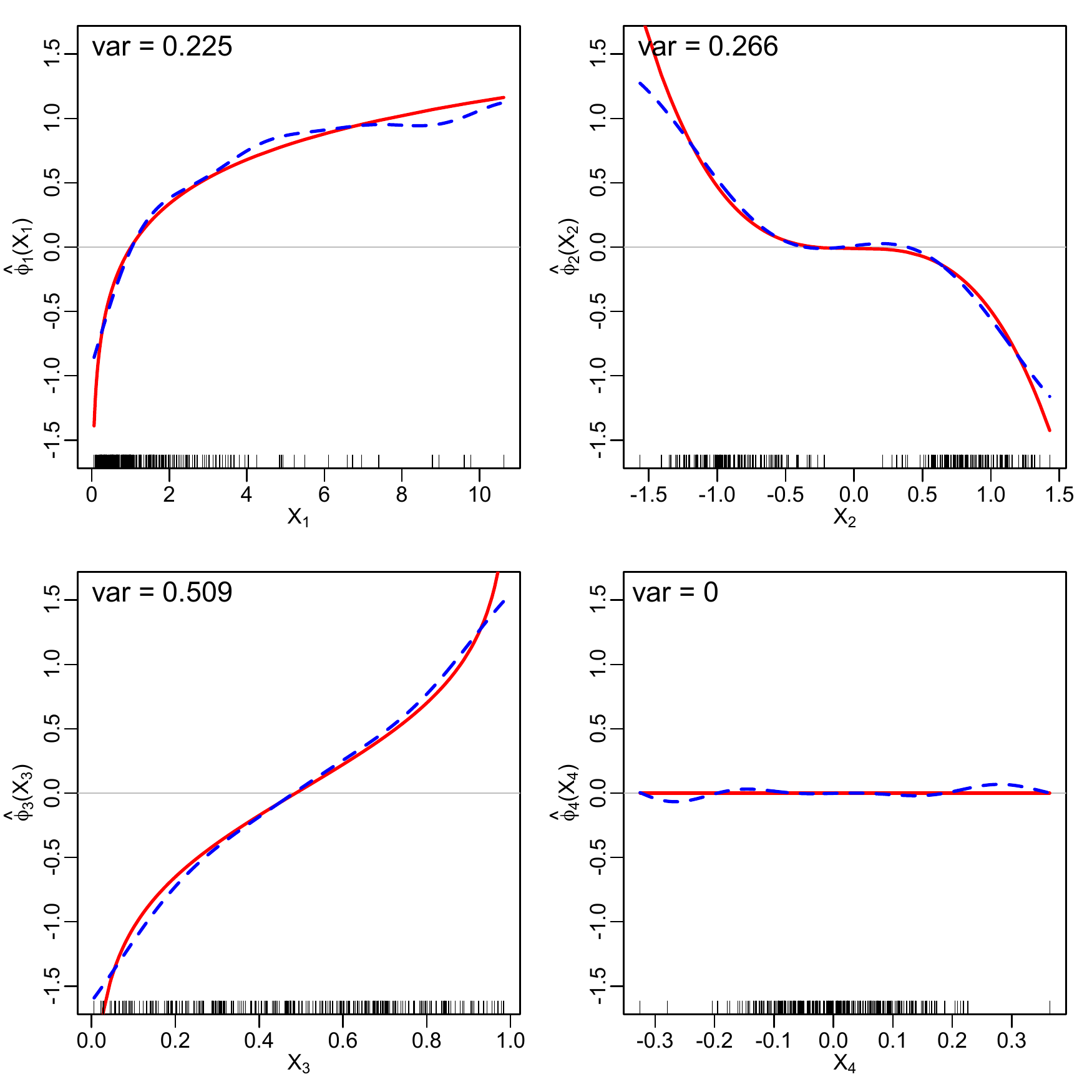}
\caption{Plot of transformations in population APC (\protect\includegraphics[height=0.22em]{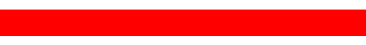}) and transformations in kernelized sample APC (\protect\includegraphics[height=0.22em]{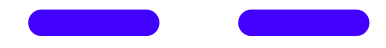}).  The black bars at the bottom of each plot indicate the location of data points for that variable.  The kernelized sample APC is obtained from power algorithm using gaussian kernel for individual variables, with penalty parameters chosen by 5-fold cross-validation.  The sample eigenvalue for the estimated APC is 0.0035.}
\label{Sim1}
\end{figure}

Next, we study the effects that different values of penalty parameters 
have on the resulting kernelized sample APCs, and we examine the 
performance of the cross-validation procedure described in 
Section~\ref{sec:CV}.  We consider a range of sample sizes 
$n \in \{20, 50, 100, 250, 500\}$ and penalty parameters 
$\alpha \in\{1.5^{-29}, 1.5^{-28}, \ldots, 1.5^5\}$.  
For each combination of $n$ and $\alpha$, we compute the true 
estimation error $\Var[\hat{\phi}_j(X_j) - \phi_j^*(X_j)]$, by evaluating
$\hat{\phi}_j$ and $\phi_j^*$ on $n=10,000$ samples generated from the
true distribution, for $1\leq j\leq 4$.  Figure~\ref{Sim2} displays
the average estimation error curves computed from 100 simulated data sets.  
The crosses denote the locations where $\alpha$ achieves the minimum 
estimation error (which can be different from variable 
to variable), one for each sample size, while the circles represent the
average value of the tuning parameters selected by the proposed
cross-validation procedure (which is set to be identical for all variables) and
its average estimation error.  Figure~\ref{Sim2} suggests that the proposed
cross-validation procedure performs reasonably well when $n$ is 
sufficiently large --- although it tends to overestimate the optimal tuning
parameters, the resulting estimation error is generally close to the
best achievable value.


\begin{figure}[htp]
\centering
\includegraphics[angle=0,width=5.8in]{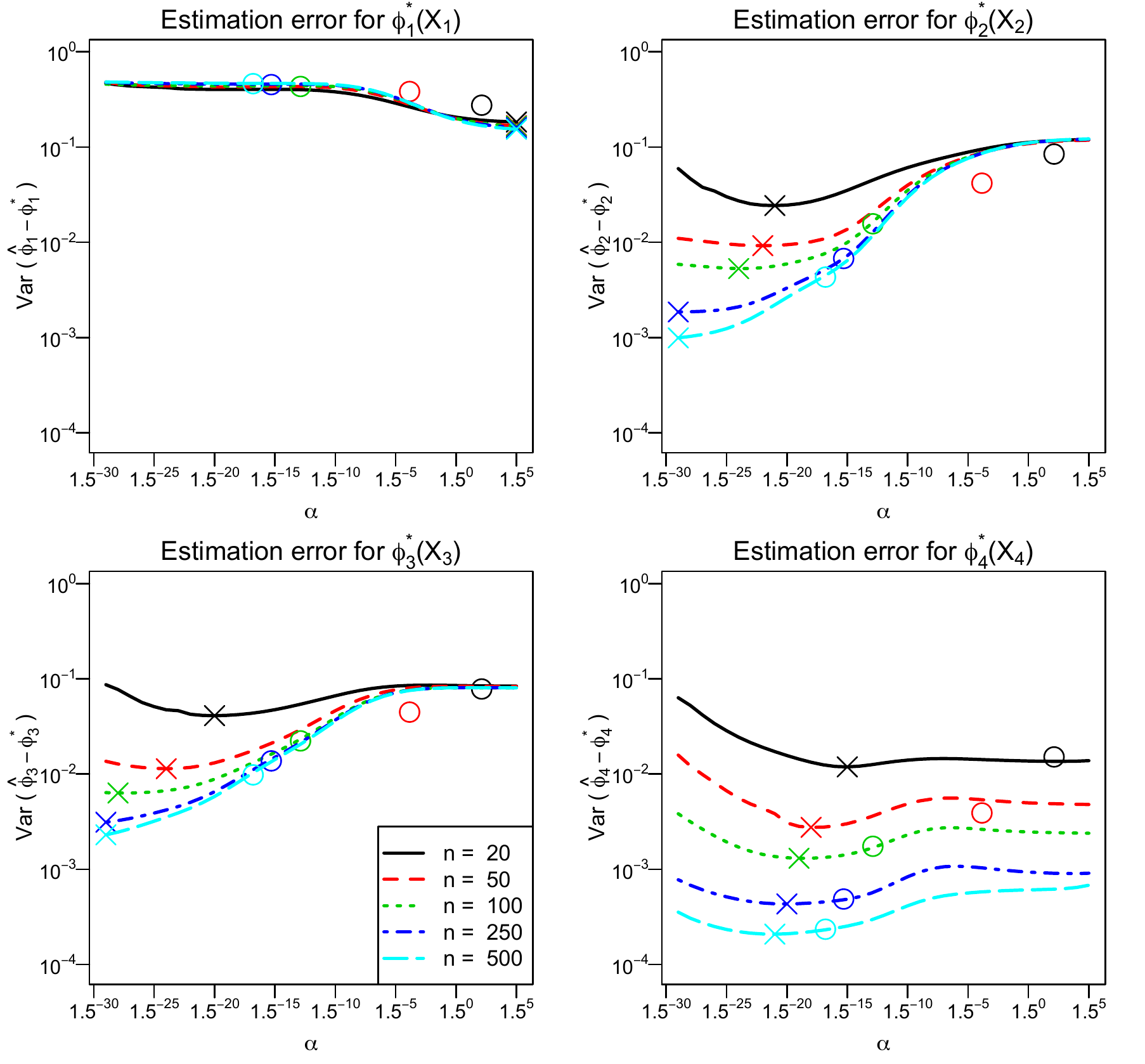}
\caption{Variance of the difference between the true function $\phi_j^*$ and its estimate $\hat{\phi}_j$ obtained from power algorithm using gaussian kernel, for $1\leq j\leq 4$. Crosses: points with lowest estimation error; circles: average penalty parameter values chosen by cross-validation and the average estimation error, the average being taken over 100 simulated data sets.}
\label{Sim2}
\end{figure}



\newpage
\section{Regularization of APCs via Reproducing Kernel Hilbert Spaces}
\label{RKHS} 

In this section, we formalize the statement of the APC problem in 
reproducing kernel Hilbert spaces.  

\subsection{Fundamentals of Reproducing Kernel Hilbert Spaces}
Let $\mX$ be a nonempty set, and let $\Hsp$ be a Hilbert space of
functions $f: \mX\longrightarrow\R$, endowed with the inner product
$\langle\cdot, \cdot\rangle_\Hsp$.  The space $\Hsp$ is a
\emph{reproducing kernel Hilbert space} (RKHS) if all evaluation functionals
(the maps $\delta_x: f\longrightarrow f(x)$, where $x\in\mX$) are
bounded.  Equivalently, $\Hsp$ is an RKHS if there exists a symmetric
function $\kxx$ that satisfies (a) $\forall x\in\mX$,
$k_x = k(x, \cdot)\in\Hsp$, (b) the reproducing property:
$\forall x\in\mX, \forall f\in\Hsp$,
$\langle f, k_x\rangle_\Hsp = f(x)$.  We call such a $k$ the
\emph{reproducing kernel} of $\Hsp$.  There is a one-to-one
correspondence between an RKHS $\Hsp$ and its reproducing kernel $k$.
Thus, specifying $k$ is equivalent to specifying $\Hsp$, and we will
write $\langle\cdot, \cdot\rangle_k$ for
$\langle\cdot, \cdot\rangle_\Hsp$.  Also, $\|k_x\|_k^2 = k(x, x)$.

A distinctive characteristic of an RKHS is that each element $f$ is a
function whose values at any $x\in\mathcal{X}$ is well-defined,
whereas an element of a Hilbert space (i.e. $L^2(\mX, dP)$) is usually
an equivalence class of functions that equal almost everywhere.  Such
a characteristic of RKHS is crucial as it allows us to evaluate the
function $f$ at each point $x\in\mathcal{X}$.

\subsection{APCs in Reproducing Kernel Hilbert Spaces}
We are now ready to define the search space for kernelized APCs.  
Let $X_1, \ldots, X_p$ be random observations taking values in
arbitrary measurable spaces $\mX_1, \ldots , \mX_p$.  Let $P_j(dx_j)$
be the marginal probability measure of $X_j$, and let
$P_{1:p}(dx_1, \ldots, dx_p)$ be the joint probability measure of
$X_1, \ldots, X_p$.  We define the search space for kernelized APCs as
$\bHsp = \Hsp_1\times\cdots\times\Hsp_p$, where the individual
transformation $\phi_j\in\Hsp_j$, and $\Hsp_j$ is an RKHS with
reproducing kernel $k_j: \mX_j\times\mX_j\longrightarrow\R$.  We
suppose that $\Hsp_j$ bears the decomposition
\begin{equation}
\Hsp_j = \Hsp_j^0\oplus\Hsp_j^1, \label{decomp}
\end{equation}
where $\Hsp_j^0$ is a finite-dimensional linear subspace of $\Hsp_j$
with basis $\{q_{1j}, \ldots, q_{m_jj}\}$,
$m_j=\Dim(\Hsp_j^0)<\infty$, and $\Hsp_j^1$ is the orthogonal
complement of $\Hsp_j^0$.  We will refer to $\Hsp_j^0$ as the null
space of $\Hsp_j$, as it contains the set of functions in $\Hsp_j$
that we do not wish to ``penalize" (to be explained further).  With
the decomposition \eqref{decomp}, the reproducing kernel $k_j$ can
also be uniquely decomposed as $k_j = k_j^0 + k_j^1$, where
$k_j^0(x, \cdot) = \bP_j^0k_j(x, \cdot)$,
$k_j^1(x, \cdot) = \bP_j^1k_j(x, \cdot)$, and $\bP_j^0$ and $\bP_j^1$
denotes the orthogonal projection onto $\Hsp_j^0$ and $\Hsp_j^1$,
respectively.  Furthermore, the inner product
$\langle f, g\rangle_{k_j}$ on $\Hsp_j$ can be decomposed as
\begin{equation}
\label{decompinnerprod}
\langle f, g\rangle_{k_j} = \langle f^0, g^0\rangle_{k_j^0} + \langle f^1, g^1\rangle_{k_j^1}, \qquad\forall f, g\in\Hsp_j, 
\end{equation}
where $f = f^0 + f^1$, $g = g^0+g^1$, with $f^0, g^0\in\Hsp_j^0, f^1, g^1\in\Hsp_j^1$, and the decomposition is again unique.

We define the penalty functional $J_j$ as a squared semi-norm on $\Hsp_j$, by letting
\begin{equation}
J_j(f) = \|f^1\|_{k_j^1}^2 = \|\bP_j^1f\|_{k_j^1}^2, \qquad\forall f\in\Hsp_j.
\end{equation}
With slight abuse of notation, we will write
$J_j(f) = \|f\|_{k_j^1}^2$ hereafter.  By definition of $J_j(f)$, we see that $f_j^0 =
\bP_j^0f_j$ does not play any role in the regularization of APCs.  Hence, 
we can also write $\Hsp_j^0 = \{f\in\Hsp_j: J_j(f) = 0\}$, the set of functions in $\Hsp_j$ that are not
``penalized''.  

The kernelized APC problem can be stated in the RKHS framework as follows:
\begin{equation}
\min_{\bPhi\in\bHsp}\Var(\sum_{j=1}^p\phi_j) + \sum_{j=1}^p\alpha_j\|\phi_j\|_{k_j^1}^2 \\
\quad\text{subject to}\quad\sum_{j=1}^p\Var(\phi_j) + \sum_{j=1}^p\alpha_j\|\phi_j\|_{k_j^1}^2 = 1,  \label{statePenSample}
\end{equation}
where $\alpha_j>0$ is the penalty parameter for $X_j$, $j = 1, \ldots, p$.

The second-smallest and other higher order kernelized APCs can be obtained as the solution of \eqref{statePenSample} subject to additional orthogonality constraints
\begin{equation}
\sum_{j=1}^p\Cov(\phi_{\ell, j}, \phi_j)  + \sum_{j=1}^p\alpha_j\langle\phi_{\ell, j}, \phi_j\rangle_{k_j^1} = 0, \label{orthogonal}
\end{equation}
where $\bPhi_\ell = (\phi_{\ell, 1}, \ldots, \phi_{\ell, p})$ encompasses all the previous kernelized APCs.

\begin{Remark}
{\rm
A canonical example of RKHS with a null space is the Sobolev space of order $m$:
\[
\Hsp = \big\{f: [a, b]\longrightarrow\R\ \big|\ f, f^{(1)}, \ldots, f^{(m-1)}\text{ absolutely continuous}, f^{(m)}\in L^2[a, b]\big\},
\]
with $J(f) := \int_a^b (f^{(m)}(t))^2\ dt$.  In this case, the null space $\Hsp^0 = \Span\{1,x, \ldots, x^{m-1}\}$.  
}
\end{Remark}

\begin{Remark}
\label{RemarkConstructRKHS}
{\rm
More generally, one can construct $\Hsp$ with the decomposition 
\eqref{decomp} and \eqref{decompinnerprod}, starting from an RKHS $\Hsp^1$
with reproducing kernel $k^1$.  The idea is to augment
$\Hsp^1$ by a finite-dimensional vector space $\Hsp^0 = \Span\{q_{1}, \ldots, q_{m}\}$,
where $\{q_{1}, \ldots, q_{m}\}$ are linearly
independent and $\Hsp^0\cap\Hsp^1 = \{0\}$, and then define 
\[\Hsp := \big\{f: f = f^0+f^1, f^0\in\Hsp^0, f^1\in\Hsp^1\big\}.\]
Since $\Hsp^0$ is finite-dimensional, any inner product $\langle\cdot, \cdot\rangle$ that induces
a (strict) norm on $\Hsp^0$ turns it into an RKHS.  To see this, first apply the Gram-schmidt procedure on $\{q_1, \ldots, q_m\}$ to obtain an orthonormal basis $\{r_1, \ldots, r_m\}$ of $\Hsp^0$.  Given $f^0=\sum a_ir_i$ and $g^0=\sum b_ir_i$, define $\langle f^0, g^0\rangle_{k^0} = \sum a_ib_i$ and $k^0(x, x') = \sum r_i(x)r_i(x')$.  Then $\Hsp^0$ is an RKHS with inner product $\langle\cdot, \cdot\rangle_{k^0} \equiv \langle\cdot, \cdot\rangle$.  By \cite{Aronszajn1950} (page 352-354), $\Hsp$ as defined is an RKHS with reproducing kernel $k = k^0 + k^1$.
Note, however, that the reproducing structure of $\Hsp^0$ actually does not
come into play in the kernelized APC problem, since functions in this
space go unpenalized. 
}
\end{Remark}



\section{Consistency}
\label{Consistency}

We define the population APC $\bPhi^* = (\phi_1^*, \ldots, \phi_p^*)$, when it exists, as the solution to
\begin{equation} 
\label{popCons}
  \min_{\bPhi\in\bHsp}\Var(\sum_{j=1}^p\phi_j) 
  \quad\text{subject to}\quad
  \sum_{j=1}^p\Var(\phi_j) = 1.
\end{equation}
On the other hand, the kernelized sample APC $\hPhi = (\hphi_1, \ldots, \hphi_p)$, when it exists, solves
\begin{equation} 
\label{kernSampleAPC}
  \min_{\bPhi\in\bHsp}\Varhat(\sum_{j=1}^p\phi_j) +
  \sum_{j=1}^p\alpha_j\|\phi_j\|_{k_j^1}^2
  \quad\text{subject to}\quad
  \sum_{j=1}^p\Varhat(\phi_j) +
  \sum_{j=1}^p\alpha_j\|\phi_j\|_{k_j^1}^2 = 1.
\end{equation}
The superscript $(n)$ is to emphasize the dependence of $\hPhi$ on the number of observed vectors $n$.

In this section, we establish the existence and uniqueness of
$\bPhi^*$ and also the consistency of $\hPhi$ as an estimator of
$\bPhi^*$ under mild conditions.


\subsection{Preliminaries}
Let $\Hsp$, $\Hsp_1$, $\Hsp_2$ be separable Hilbert spaces.  We denote
the norm of a bounded linear operator $\bT$ by
$\|\bT\| := \sup_{\|f\|\leq 1}\|\bT f\|$.  The null space and the
range of an operator $\bT: \Hsp_1\longrightarrow\Hsp_2$ are denoted by
$\mN(\bT)$ and $\mR(\bT)$, respectively, where $\mN(\bT) =
\{f\in\Hsp_1: \bT f = 0\}$ and $\mR(\bT) = \{\bT f\in\Hsp_2:
f\in\Hsp_1\}$.  We denote by $\bT^*$ the Hilbert space adjoint of
$\bT$.  We say that $\bT: \Hsp\longrightarrow\Hsp$ is self-adjoint if
$\bT^* = \bT$, and that a bounded linear self-adjoint operator $\bT$ is
positive if $\langle f, \bT f\rangle\geq 0$ for all $f\in\Hsp$.  We
write $\bT\succeq 0$ if $\bT$ is positive, and $\bT_1\succeq\bT_2$ if
$\bT_1-\bT_2$ is positive.  If $\bT$ is positive, we denote by
$\bT^{1/2}$ the unique positive operator $\B$ satisfying $\B^2 = \bT$.
We always denote by $\I$ the identity operator.  A bounded linear
operator $\bT: \Hsp_1\longrightarrow\Hsp_2$ is compact if for every
bounded sequence $\{f_n\}\in\Hsp_1$, $\{\bT f_n\}$ has a convergent
subsequence in $\Hsp_2$.
For other standard functional analysis concepts such as trace class
and Hilbert--Schmidt operators, see \cite{ReedSimon}.  Finally, we
denote by $[p]$ the set~$\{1, \ldots, p\}$.

\subsection{Main Assumptions}\label{mainAssumptions}

We now state the main assumptions used to establish the consistency
result.

\subsubsection{Reproducing Kernel Hilbert Spaces}


We impose the following assumptions on the kernelized APC
search space $\bHsp = \Hsp_1\times\cdots\times\Hsp_p$:
\begin{Assumption}\label{assumeRKHS}
  For $1\leq j\leq p$, let $\Hsp_j = \Hsp_j^0\oplus\Hsp_j^1$ be an
  RKHS with reproducing kernel $k_j = k_j^0+k_j^1$ consisting of
  real-valued functions with domain $\mX_j$, where
\begin{enumerate}[(a)]\itemsep 0.1em
\item $\mX_j$ is a compact metric space;
\item $\Hsp_j^0\subset C(\mX_j)$ with $\dim(\Hsp_j) = m_j<\infty$;
\item $k_j^1(x, x')$ is jointly continuous on $\mX_j\times\mX_j$; 
\item $P_j(dx_j)$ is a Borel probability measure fully
    supported on $\mX_j$.
\end{enumerate}
\end{Assumption}

The following is an immediate consequence of
Assumption~\ref{assumeRKHS} (see, e.g., Lemmas~\ref{incont} and~\ref{inL2}).
\begin{Lemma}
\label{LemmaRKHScont}
Under Assumption~\ref{assumeRKHS}, $E[k_j(X_j, X_j)] < \sup_{x\in\mX_j}k_j(x, x) < \infty$ and $\Hsp_j\subset C(\mX_j)\subset L^2(\mX_j, dP_j)$, for $1\leq j\leq p$. 
\end{Lemma}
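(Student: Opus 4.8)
The statement packages two elementary facts, and the plan is to reduce both to the interplay of continuity and compactness, invoking the standard RKHS facts recorded in Lemmas~\ref{incont} and~\ref{inL2}. Throughout fix $j\in[p]$ and use the decomposition $k_j = k_j^0 + k_j^1$ induced by $\Hsp_j = \Hsp_j^0\oplus\Hsp_j^1$, together with the fact (cf.\ Remark~\ref{RemarkConstructRKHS}) that the reproducing kernel of the finite-dimensional space $\Hsp_j^0$ can be written as $k_j^0(x,x') = \sum_{i=1}^{m_j} r_i(x) r_i(x')$ for an $\langle\cdot,\cdot\rangle_{k_j^0}$-orthonormal basis $\{r_1,\dots,r_{m_j}\}$ of $\Hsp_j^0$; by Assumption~\ref{assumeRKHS}(b) we may take $r_i\in C(\mX_j)$.

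First I would bound the diagonal. The function $x\mapsto k_j^1(x,x)$ is the restriction of the jointly continuous function $k_j^1$ (Assumption~\ref{assumeRKHS}(c)) to the diagonal of $\mX_j\times\mX_j$, hence continuous on the compact space $\mX_j$ and therefore bounded. Likewise $x\mapsto k_j^0(x,x) = \sum_{i=1}^{m_j} r_i(x)^2$ is a finite sum of squares of continuous functions, hence continuous and bounded on $\mX_j$. Adding, $x\mapsto k_j(x,x)$ is continuous and bounded, so $\sup_{x\in\mX_j} k_j(x,x) < \infty$. Continuity makes $x\mapsto k_j(x,x)$ Borel measurable, so integrability against the Borel probability measure $P_j$ (Assumption~\ref{assumeRKHS}(d)) is immediate, and $E[k_j(X_j,X_j)] = \int_{\mX_j} k_j(x,x)\,dP_j(x) \le \sup_{x\in\mX_j} k_j(x,x) < \infty$.

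Next I would establish the inclusions. For $\Hsp_j\subset C(\mX_j)$: the same representation shows $k_j^0(x,x') = \sum_{i} r_i(x) r_i(x')$ is jointly continuous on $\mX_j\times\mX_j$, and $k_j^1$ is jointly continuous by (c), so $k_j$ is jointly continuous. Then for any $f\in\Hsp_j$ and $x,x'\in\mX_j$ the reproducing property gives
\[
|f(x) - f(x')| = |\langle f,\, k_j(x,\cdot) - k_j(x',\cdot)\rangle_{k_j}| \le \|f\|_{k_j}\,\|k_j(x,\cdot) - k_j(x',\cdot)\|_{k_j},
\]
and $\|k_j(x,\cdot) - k_j(x',\cdot)\|_{k_j}^2 = k_j(x,x) - 2k_j(x,x') + k_j(x',x') \to 0$ as $x'\to x$ by joint continuity; hence $f\in C(\mX_j)$, which is the content of Lemma~\ref{incont}. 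For $C(\mX_j)\subset L^2(\mX_j, dP_j)$: any $f\in C(\mX_j)$ is bounded on the compact set $\mX_j$ and Borel measurable, so $\int_{\mX_j} f^2\,dP_j \le \|f\|_\infty^2 < \infty$; full support of $P_j$ (Assumption~\ref{assumeRKHS}(d)) additionally guarantees that distinct continuous functions lie in distinct $L^2$-equivalence classes, so the inclusion is genuine. This is Lemma~\ref{inL2}.

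I do not expect a real obstacle here; the only points that need care are (i) choosing a continuous representation of the basis of $\Hsp_j^0$ so that $k_j^0$ inherits joint continuity --- this is precisely where Assumption~\ref{assumeRKHS}(b) enters --- and (ii) noting that all measurability requirements for the expectation and for $L^2$-membership come for free from continuity together with the Borel/full-support hypothesis (d). The displayed chain $E[k_j(X_j,X_j)] < \sup_x k_j(x,x) < \infty$ should be read as asserting finiteness of the expectation; the weak bound $E[k_j(X_j,X_j)]\le\sup_x k_j(x,x)$ is all that is used in the sequel and is exactly what the argument above delivers.
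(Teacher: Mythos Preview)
Your proposal is correct and follows exactly the route the paper intends: the paper does not give a standalone proof of this lemma but simply points to Lemmas~\ref{incont} and~\ref{inL2}, and your argument spells out precisely how Assumption~\ref{assumeRKHS} feeds into those lemmas---in particular the step that $k_j^0$ inherits joint continuity from a continuous basis of $\Hsp_j^0$, so that $k_j=k_j^0+k_j^1$ is jointly continuous and Lemma~\ref{incont} applies. Your closing remark that the strict inequality $E[k_j(X_j,X_j)]<\sup_x k_j(x,x)$ should be read as the weak bound $\le$ (which is all that is ever used) is well taken; indeed for translation-invariant kernels the diagonal is constant and equality holds.
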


\subsubsection{Cross-Covariance Operators}

Following \cite{Fukumizu2007}, we define the mean element
$m_j\in\Hsp_j$ with respect to a random variable $X_j$ as
\[
\langle \phi_j, m_j\rangle_{k_j} = E[\langle \phi_j, k_{X_j}\rangle_{k_j}] = E[\phi_j(X_j)] \qquad\forall\phi_j\in\Hsp_j.
\]
On the other hand, we define the cross-covariance operator of $(X_i,
X_j)$ as a bounded linear operator from $\Hsp_j$ to $\Hsp_i$ given by
\begin{align*}
\langle\phi_i, \C_{ij}\phi_j\rangle_{k_i} &= E[\langle\phi_i, k_{X_i}-m_i\rangle_{k_i}\langle\phi_j, k_{X_j}-m_j\rangle_{k_j}] \\
&= \Cov[\phi_i(X_i), \phi_j(X_j)] \qquad\qquad\qquad\qquad\qquad\forall \phi_i\in\Hsp_i, \phi_j\in\Hsp_j.
\end{align*}
The existence and uniqueness of both $m_j$ and $\C_{ij}$ are proved by
the Riesz Representation Theorem.  It is immediate that
$\C_{ij} = \C_{ji}^*$.  In particular, when $i = j$, the self-adjoint
operator $\C_{jj}$ is called the covariance operator.  It can be
verified that $\C_{jj}$ is positive and trace-class \citep{Baker1973}.
Under Assumption~\ref{assumeRKHS}, $\Var[\phi_j(X_j)] = 0$ if and only
if $\phi_j$ is a constant function.  Hence, the null space is
$\mN(\C_{jj}) = \Hsp_j\cap\reals$.

Let $\{(X_{\ell 1}, \ldots, X_{\ell p}): 1\leq \ell \leq n\}$ be i.i.d.~random
vectors on $\mX_1\times\cdots\times\mX_p$ with joint distribution
$P_{1:p}(dx_1, \ldots, dx_p)$.  The empirical cross-covariance operator $\hatC_{ij}$ is
defined as the cross-covariance operator with respect to the empirical
distribution $\frac{1}{n}\sum_{\ell=1}^n\delta_{X_{\ell
    i}}\delta_{X_{\ell j}}$, in which case
\begin{align*}
\langle\phi_i, \hatC_{ij}\phi_j\rangle_{k_i} &= \frac{1}{n}\sum_{\ell = 1}^n\bigg\langle\phi_i, k_{X_{\ell i}}- \frac{1}{n}\sum_{a=1}^nk_{X_{ai}}\bigg\rangle_{k_i}\bigg\langle\phi_j, k_{X_{\ell j}}- \frac{1}{n}\sum_{b=1}^nk_{X_{bj}}\bigg\rangle_{k_j} \\
&= \widehat{\Cov}(\phi_i, \phi_j), \qquad\qquad\qquad\qquad\qquad\qquad\forall \phi_i\in\Hsp_i, \phi_j\in\Hsp_j.
\end{align*}
Since $\mR(\hatC_{ij})$ and $\mN(\hatC_{ij})^\perp$ are included in
$\Span\{k_{X_{\ell i}}- \frac{1}{n}\sum_{a=1}^nk_{X_{ai}}:
1\leq\ell\leq n\}$ and $\Span\{k_{X_{\ell j}}-
\frac{1}{n}\sum_{b=1}^nk_{X_{bj}}: 1\leq\ell\leq n\}$, respectively,
$\hatC_{ij}$ is of finite rank.

It is known \citep[Theorem 1]{Baker1973} that $\C_{ij}$ has a
representation
\begin{equation}
\label{Vij}
\C_{ij} = \C_{ii}^{1/2}\bV_{ij}\C_{jj}^{1/2},
\end{equation}
where $\bV_{ij}: \Hsp_j\longrightarrow\Hsp_i$ is a unique bounded
linear operator with $\|\bV_{ij}\|\leq 1$.  Moreover, $\bV_{ij} = \Q_i\bV_{ij}\Q_j$
where $\Q_i$ is the orthogonal projection of $\Hsp_i$ onto $\overline{\mR(\C_{ii})}$.
This implies that $\mR(\bV_{ij}) \subseteq \overline{\mR(\C_{ii})}$.


The following assumption will be used to establish the
existence and uniqueness of $\bPhi^*$ in $\bHsp$.
\begin{Assumption}\label{compact}
  The operator $\bV_{ij}$ is compact for $1\leq i, j\leq p$, $i\neq
  j$.  Moreover, the smallest eigenvalue of the operator $\bV-\I$,
  where $\bV = (\bV_{ij})_{i,j\in [p]}$, $\bV_{jj} = \I$ for any $j\in
  [p]$, is simple.
\end{Assumption}

\begin{Remark}
{\rm
  The following condition used in \citet[Theorem 3]{Fukumizu2007} is
  sufficient for $\bV_{ij}$ to be Hilbert-Schmidt, which in turn
  implies compactness: suppose that $X_i, X_j$ have joint density
  $f_{i, j}(x_i, x_j)$ and marginal densities $f_i(x_i), f_j(x_j)$,
  then $\bV_{ij}$ is Hilbert-Schmidt provided
  \begin{equation}
  \label{HilbertSchmidt}
  \iint\frac{f_{i, j}^2(x_i, x_j)}{f_i(x_i)f_j(x_j)}\ dx_idx_j < \infty.
  \end{equation}
  Condition \eqref{HilbertSchmidt} is also sufficient to guarantee the existence of 
  population APC $\bPhi^{**}$ in $\bH = H_1\times\cdots\times H_p$, where $H_j = L^2(\mX_j, dP_j)$.  
  Under the assumption that $\Hsp_j$ is dense in $L^2(\mX_j, dP_j)$, we have $\bPhi^{**} = \bPhi^*$,
  up to an almost sure constant function.
}
\end{Remark}

Due to the presence of the spaces $\Hsp_j^0$ of unpenalized
functions,
we need an additional assumption.  Noting that $\Hsp_j =
\Hsp_j^0\oplus\Hsp_j^1$, we can zoom in on the covariance operator
$\C_{jj}$ and see that it induces the (restricted) covariance
operators $\C_{j0, j0}$ and $\C_{j1, j1}$ on $\Hsp_j^0$ and
$\Hsp_j^1$, and the cross-covariance operator $\C_{j0, j1}:
\Hsp_j^1\longrightarrow\Hsp_j^0$.  Similar as before, by Theorem 1 of
\cite{Baker1973}, $\C_{j0, j1}$ has the representation
\[
\C_{j0, j1} = \C_{j0, j0}^{1/2}\bV_{j0, j1}\C_{j1, j1}^{1/2},
\]
where $\bV_{j0, j1}: \Hsp_j^1\longrightarrow\Hsp_j^0$ is a unique
bounded linear operator with $\|\bV_{j0, j1}\|\leq 1$.  The following
assumption will be used to establish the existence of solutions to a
kernelized population APC problem, which is a key step in the proof
as we will detail in Section \ref{proofOutline}.

\begin{Assumption}\label{invertible}
  For $1\leq j\leq p$, $\C_{j0, j0}\succeq b\I$ and $\|\bV_{j0,
    j1}\|\leq\sqrt{1-a/b}$ for some absolute positive constants $a<b$.
\end{Assumption}

\subsection{Statement of Main Theorem}

For simplicity, we set the penalty parameters $\alpha_j$ in the
kernelized sample APC problem \eqref{kernSampleAPC} at a common level
$\epsilon_n$.  The following result establishes the consistency of
kernelized sample APCs.


\begin{Theorem}\label{MainThm2}
  Suppose that Assumptions~\ref{assumeRKHS}, \ref{compact} and
  \ref{invertible} hold, and $f_j^*\in\mR(\C_{jj})$, for
  $1\leq j\leq p$.  Let $(\epsilon_n)_{n=1}^\infty$ be a sequence of
  positive numbers such that
\begin{equation}
  \lim_{n\rightarrow\infty}\epsilon_n = 0, \qquad \lim_{n\rightarrow\infty}\frac{n^{-1/3}}{\epsilon_n} = 0. \label{condeps}
\end{equation}
Then, for any $\delta>0$, there exists $N(\delta)$ such that for all
$n\geq N(\delta)$, with probability greater than $1-\delta$, the
kernelized sample APC problem \eqref{kernSampleAPC} has a unique
solution $\hPhi = (\hphi_1, \ldots, \hphi_p)$ which satisfies
\begin{align}
  \sum_{j=1}^p \Var \big[\hphi_j(X_j)-\phi_j^*(X_j) \big] & <\delta, \quad \mbox{and} \label{apcevecconv}\\
  |\hlambda_1-\lambda_1| & < \delta, \label{apcevalconv}
\end{align}
where $\lambda_1 = \Var(\sum\phi_j^*)$ and $\hlambda_1 = \Varhat(\sum\hphi_j)
+\sum\epsilon_n\|\hphi_j\|_{k_j^1}^2$.
\end{Theorem}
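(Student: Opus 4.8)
The plan is to compare the unpenalized population APC $\bPhi^*$ of \eqref{popCons} with the kernelized sample APC $\hPhi$ of \eqref{kernSampleAPC} by passing through an intermediate object, the \emph{kernelized population APC} $\tPhi$, with eigenvalue $\tlambda_1$, defined as the solution of the penalized problem \eqref{statePenSample} with common penalty parameter $\epsilon_n$ but with the \emph{true} cross-covariance operators $\C_{ij}$ in place of the empirical ones $\hatC_{ij}$. One then uses the two triangle inequalities
\[
  \Big(\textstyle\sum_{j=1}^p \Var\big[\hphi_j(X_j)-\phi_j^*(X_j)\big]\Big)^{1/2}
  \le \|\hPhi-\tPhi\|_{P_{1:p}} + \|\tPhi-\bPhi^*\|_{P_{1:p}},
  \qquad
  |\hlambda_1-\lambda_1| \le |\hlambda_1-\tlambda_1| + |\tlambda_1-\lambda_1| ,
\]
and bounds the ``estimation'' terms $\|\hPhi-\tPhi\|_{P_{1:p}}$ and $|\hlambda_1-\tlambda_1|$ in probability once $n$ is large (this is where the rate condition in \eqref{condeps} enters), while showing that the ``bias'' terms $\|\tPhi-\bPhi^*\|_{P_{1:p}}$ and $|\tlambda_1-\lambda_1|$ vanish deterministically as $\epsilon_n\to0$ (this is where $f_j^*\in\mR(\C_{jj})$ enters). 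The unifying device is to recast each of the three constrained problems as a generalized eigenproblem on $\bHsp$ with the RKHS inner product, driven by the block operators $(\C_{ij})_{i,j\in[p]}$ and $\mathrm{diag}(\C_{11},\dots,\C_{pp})$ --- augmented by $\epsilon_n\,\mathrm{diag}(\bP_1^1,\dots,\bP_p^1)$ in the penalized cases --- and to whiten numerator and denominator by the square root of the (regularized) block-diagonal part, which via the factorization \eqref{Vij} turns each problem into a perturbation of $\bV=(\bV_{ij})$ or of its $\epsilon_n$-regularized analogue.

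First I would establish existence and uniqueness of the three solutions. For $\bPhi^*$: after the whitening $\psi_j=\C_{jj}^{1/2}\phi_j$ on $\overline{\mR(\C_{jj})}$, the Rayleigh quotient of \eqref{popCons} becomes $\langle\psi,\bV\psi\rangle/\|\psi\|^2$ with $\bV=\I+\bV_{\mathrm{off}}$, where $\bV_{\mathrm{off}}$ collects the off-diagonal $\bV_{ij}$ and is compact by Assumption~\ref{compact}; hence the smallest eigenvalue of $\bV$ is isolated, and its simplicity (again Assumption~\ref{compact}) makes the minimizer unique, yielding $\bPhi^*$ and $\lambda_1=\Var(\sum\phi_j^*)$. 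For $\tPhi$: the denominator operator is now $\mathrm{diag}(\C_{jj}+\epsilon_n\bP_j^1)$, and Assumption~\ref{invertible} --- the lower bound $\C_{j0,j0}\succeq b\I$ on the unpenalized block together with $\|\bV_{j0,j1}\|\le\sqrt{1-a/b}$ --- is exactly what a block estimate on $\Hsp_j^0\oplus\Hsp_j^1$ (a Cauchy--Schwarz argument absorbing the $\Hsp_j^0$--$\Hsp_j^1$ cross term) needs to conclude $\C_{jj}+\epsilon_n\bP_j^1\succeq\epsilon_n\I$ for small $\epsilon_n$, hence bounded invertibility with inverse square root of norm $O(\epsilon_n^{-1/2})$; whitening then reduces \eqref{statePenSample} to an eigenproblem for $\I$ plus a compact operator, whose smallest eigenvalue remains simple for small $\epsilon_n$ by norm-continuity, so $\tPhi$ and $\tlambda_1$ are well defined. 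The same whitening with the (finite-rank) empirical operators yields $\hPhi$, whose existence and uniqueness hold on the high-probability event of the estimation step below, where the empirical operators are close to the population ones.

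For the bias, as $\epsilon_n\to0$ the regularized block operators converge on $\overline{\mR(\C_{jj})}$ to the unpenalized ones, and the source condition $f_j^*\in\mR(\C_{jj})$ --- a smoothness requirement placing the optimal transforms in a subspace on which the regularization bias is controlled --- guarantees that this bias, applied to $\bPhi^*$, vanishes, so that $\bPhi^*$ itself rather than some over-smoothed surrogate is recovered in the limit; simplicity of the limiting smallest eigenvalue then lets standard perturbation theory for the corresponding one-dimensional spectral projection give $\|\tPhi-\bPhi^*\|_{P_{1:p}}\to0$ and $\tlambda_1\to\lambda_1$. For the estimation term, Lemma~\ref{LemmaRKHScont} supplies bounded kernels, so a concentration inequality for Hilbert-space-valued averages gives $\max_{i,j}\|\hatC_{ij}-\C_{ij}\|=O_p(n^{-1/2})$; propagating this fluctuation through the regularized inverse square roots $(\hatC_{jj}+\epsilon_n\bP_j^1)^{-1/2}$ --- of norm $O(\epsilon_n^{-1/2})$ and depending on the underlying operator only H\"older-continuously of exponent $1/2$ --- and then through the smallest eigenpair of the resulting perturbed operator, exactly along the lines of the regularized-resolvent analysis of \cite{Fukumizu2007}, produces on an event of probability $\ge1-\delta$ a bound $\|\hPhi-\tPhi\|_{P_{1:p}}+|\hlambda_1-\tlambda_1|\le C(\delta)\,\epsilon_n^{-3/2}n^{-1/2}$, which tends to $0$ precisely when $n^{1/3}\epsilon_n\to\infty$, i.e.\ under the second part of \eqref{condeps}.

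Combining the two displays and choosing $N(\delta)$ so that, for $n\ge N(\delta)$, the bias terms are below $\delta/2$ and the estimation terms are below $\delta/2$ on the good event, yields \eqref{apcevecconv} and \eqref{apcevalconv}; the eigenvalue statement is immediate from the operator-norm estimates since eigenvalues are $1$-Lipschitz in operator norm, noting that $\hlambda_1$ equals the smallest penalized sample eigenvalue because $\hPhi$ attains the constraint of \eqref{kernSampleAPC} with equality. The main obstacle will be the estimation step: the denominator operators degenerate on the unpenalized null spaces $\Hsp_j^0$ precisely in the limit $\epsilon_n\to0$, so the eigenproblem perturbation is governed by the competition between the $O(n^{-1/2})$ sampling noise and the $O(\epsilon_n^{-3/2})$ ill-conditioning of the regularized whitening, and keeping these balanced is what forces \eqref{condeps}; more structurally, making the whitening well defined at all in the presence of the $\Hsp_j^0$ is exactly what Assumption~\ref{invertible} is there to provide.
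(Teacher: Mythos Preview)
Your proposal is correct and follows essentially the same route as the paper: whiten by the (regularized) block-diagonal square roots to reduce all three problems to eigenproblems for $\bV$, $\tV$, $\hV$; combine the stochastic bound $\|\hV-\tV\|=O_p(\epsilon_n^{-3/2}n^{-1/2})$ (Lemma~\ref{SampleError}) with the deterministic bound $\|\tV-\bV\|\to0$ (Lemma~\ref{ApproxError}); and finally un-whiten using the source condition $f_j^*\in\mR(\C_{jj})$, with Assumption~\ref{invertible} supplying the lower bound $\C_{jj}+\epsilon_n\bP_j^1\succeq\epsilon_n\I$ exactly as you describe. The only organizational difference is that the paper first proves $|\langle\hf,\f^*\rangle_k|\to1$ directly at the whitened level (Theorem~\ref{MainThm1}) and then converts back via a three-term decomposition of $\|\C_{jj}^{1/2}(\hphi_j-\phi_j^*)\|_{k_j}$ that inserts the population-penalized whitening $(\C_{jj}+\epsilon_n\bP_j^1)^{-1/2}$ rather than $\tPhi$ itself, whereas you route through $\tPhi$ by a triangle inequality at the $\Phi$ level; the two are equivalent.
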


Note that in \eqref{apcevecconv}, for each $1\leq j\leq p$, the
variance is taken with respect to the future observation $X_j$ and the
observed data encoded by $(\ \hat\ \ )$ is not integrated out,
implying that $\sum_{j=1}^p \Var[\hphi_j(X_j)-\phi_j^*(X_j)]$ is a
random quantity.  In essence, the convergence in \eqref{apcevecconv}
says that the difference between the kernelized sample APC and the
population APC converges to a constant in probability, while the
convergence in \eqref{apcevalconv} says that the optimal value of the
kernelized sample criterion converges to the optimal value of the
population criterion.


\subsection{Outline of Proof}
\label{proofOutline}

In what follows, we lay out the key steps in proving Theorem
\ref{MainThm2}; the detailed arguments are deferred to Appendix
\ref{consproof}.  To study the convergence of $\hPhi$ to $\bPhi^*$, we
show that both converge to the kernelized population APC defined as the
solution to~\eqref{statePenSample}.


\subsubsection{Quadratic Forms in RKHS Norm}

As a first step in the proof, we rewrite \eqref{popCons},
\eqref{statePenSample} and \eqref{kernSampleAPC} in terms of quadratic forms
with respect to the RKHS norm $\langle\cdot, \cdot\rangle_k$.
To this end, using the cross-covariance operators introduced in
Section~\ref{mainAssumptions} and setting all $\alpha_j$ to a common
level $\epsilon_n$, we can rewrite
\eqref{popCons}, \eqref{statePenSample} and \eqref{kernSampleAPC} as follows:
\begin{subequations}
\begin{align}
  \min_{\bPhi\in\bHsp}\ &\langle\bPhi, \C\bPhi\rangle_k &&\text{subject to}\qquad \langle\bPhi, \text{diag}(\C)\bPhi\rangle_k = 1, \label{re1}\\
  \min_{\bPhi\in\bHsp}\ &\langle\bPhi, (\C+\J^{(n)})\bPhi\rangle_k &&\text{subject to}\qquad \langle\bPhi, (\text{diag}(\C)+\J^{(n)})\bPhi\rangle_k = 1, \label{re2}\\
  \min_{\bPhi\in\bHsp}\ &\langle\bPhi, (\hatC+\J^{(n)})\bPhi\rangle_k &&\text{subject to}\qquad \langle\bPhi,
  (\text{diag}(\hatC)+\J^{(n)})\bPhi\rangle_k = 1, \label{re3}
\end{align}
\end{subequations}
where
\begin{align*}
\C  = (\C_{ij})_{i,j\in [p]},\quad
\hatC  = (\hatC_{ij})_{i,j\in [p]} \quad \text{and} \quad
\J^{(n)} = \text{diag}(\epsilon_n \bP_j^1)_{j\in [p]},
\end{align*}
with $\bP_j^1: \Hsp_j\longrightarrow\Hsp_j^1$ the orthogonal
projection of $\Hsp_j$ onto its closed subspace~$\Hsp_j^1$.

Note that the constraints in \eqref{re1} -- \eqref{re3} are different.
To resolve this issue we introduce the following changes of
variables:
\begin{subequations}
\begin{align}
f_j &= \C_{jj}^{1/2}\phi_j, \label{re7}\\
f_j &= (\C_{jj}+\epsilon_n\bP_j^1)^{1/2}\phi_j, \label{re8}\\
f_j &= (\Chat_{jj}^{(n)}+\epsilon_n\bP_j^1)^{1/2}\phi_j, \label{re9}
\end{align}
\end{subequations}
for $1\leq j\leq p$ in \eqref{re1} -- \eqref{re3}, respectively.
Thus, \eqref{re1} -- \eqref{re3} can be further rewritten as
\begin{subequations}
\begin{align} 
\min_{\f\in\bHsp}\ &\langle\f, \bV\f\rangle_k &&\hspace{-2cm}\text{subject to}\qquad\qquad \langle\f, \f\rangle_k = 1, \label{re4}\\
\min_{\f\in\bHsp}\ &\langle\f, \tV\f\rangle_k &&\hspace{-2cm}\text{subject to}\qquad\qquad \langle\f, \f\rangle_k = 1, \label{re5}\\
\min_{\f\in\bHsp}\ &\langle\f, \hV\f\rangle_k &&\hspace{-2cm}\text{subject to}\qquad\qquad \langle\f, \f\rangle_k = 1, \label{re6}
\end{align}
\end{subequations}
respectively, with
\begin{align}
\bV = (\bV_{ij})_{i,j\in [p]}, \quad
\tV = (\tV_{ij})_{i,j\in [p]}, \quad
\hV = (\hV_{ij})_{i,j\in [p]},
\label{Vmatrix}
\end{align}
where $\bV_{jj} = \tV_{jj} = \hV_{jj} = \I$ for any $j\in [p]$ and
\begin{subequations}
\begin{align}
\bV_{ij} &= \C_{ii}^{-1/2}\C_{ij}\C_{jj}^{-1/2}, \label{invertorig}\\
\tV_{ij} &= (\C_{ii}+\epsilon_n\bP_i^1)^{-1/2}\C_{ij}(\C_{jj}+\epsilon_n\bP_j^1)^{-1/2}, \label{inverttilde}\\
\hV_{ij} &= (\Chat_{ii}^{(n)}+\epsilon_n\bP_i^1)^{-1/2}\Chat_{ij}^{(n)}(\Chat_{jj}^{(n)}+\epsilon_n\bP_j^1)^{-1/2}, \label{inverthat}
\end{align}
\end{subequations}
for $1\leq i, j\leq p$, $i\neq j$.  By \eqref{Vij}, $\bV_{ij}$ is
uniquely defined.  With slight abuse of notation, we write $\bV_{ij}$
as in \eqref{invertorig} even when $\C_{ii}^{-1/2}$ and
$\C_{jj}^{-1/2}$ are not appropriately defined as operators.  Note,
however, that we do need to ensure the operators
$(\C_{jj}+\epsilon_n\bP_j^1)^{-1/2}$ and
$(\hatC_{jj}+\epsilon_n\bP_j^1)^{-1/2}$ in \eqref{inverttilde} and
\eqref{inverthat} are well-defined with high probability.  This is
guaranteed by the following lemma, the proof of which is given in
Lemmas~\ref{posdef} and \ref{boundprobLemma}.
\begin{Lemma}
\label{lmm:quad-form}
Under the conditions of Theorem \ref{MainThm2}, for sufficiently large
values of $n$, $\C_{jj}+\epsilon_n\bP_j^1\succeq\epsilon_n\I$ for
$1\leq j\leq p$, and with probability at least
$1-2dp\epsilon_n^{-1}n^{-1/2}$,
\begin{align*}
\hatC_{jj}+\epsilon_n\bP_j^1\succeq\frac{\epsilon_n}{2}\I,\qquad\text{for }1\leq j\leq p,
\end{align*}
where $d$ is a constant not depending on~$n$.
\end{Lemma}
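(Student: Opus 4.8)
The plan is to prove the two asserted operator inequalities essentially independently. The first is a purely population statement that, after an algebraic cancellation, becomes a coercivity bound on the variance form restricted to the unpenalized component; the second then follows by combining the first with a concentration bound for $\hatC_{jj}$ around $\C_{jj}$.

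\textbf{Population inequality.} Fix $j\in[p]$ and split any $\phi_j\in\Hsp_j$ as $\phi_j=\bP_j^0\phi_j+\bP_j^1\phi_j$ according to $\Hsp_j=\Hsp_j^0\oplus\Hsp_j^1$. Using $\langle\phi_j,\C_{jj}\phi_j\rangle_{k_j}=\Var(\phi_j)$, $\langle\phi_j,\bP_j^1\phi_j\rangle_{k_j}=\|\bP_j^1\phi_j\|_{k_j}^2$ and $\langle\phi_j,\phi_j\rangle_{k_j}=\|\bP_j^0\phi_j\|_{k_j}^2+\|\bP_j^1\phi_j\|_{k_j}^2$, the claim $\C_{jj}+\epsilon_n\bP_j^1\succeq\epsilon_n\I$ is, after subtracting $\epsilon_n\|\bP_j^1\phi_j\|_{k_j}^2$ from both sides, equivalent to
\[
\Var(\phi_j)\ \ge\ \epsilon_n\,\|\bP_j^0\phi_j\|_{k_j}^2\qquad\text{for all }\phi_j\in\Hsp_j .
\]
To establish this I would pass to the block form of $\C_{jj}$ on $\Hsp_j^0\oplus\Hsp_j^1$: writing $u=\C_{j0, j0}^{1/2}\bP_j^0\phi_j$, $v=\C_{j1, j1}^{1/2}\bP_j^1\phi_j$ and invoking the Baker factorization $\C_{j0, j1}=\C_{j0, j0}^{1/2}\bV_{j0, j1}\C_{j1, j1}^{1/2}$,
\[
\Var(\phi_j)=\|u\|^2+2\langle u,\bV_{j0, j1}v\rangle+\|v\|^2\ \ge\ \|u\|^2-2\rho\,\|u\|\,\|v\|+\|v\|^2\ \ge\ (1-\rho)\,\|u\|^2,
\]
where $\rho:=\|\bV_{j0, j1}\|\le\sqrt{1-a/b}<1$ by Assumption~\ref{invertible} and the last inequality is the elementary fact that $s^2-2\rho s\,\|v\|+\|v\|^2\ge(1-\rho)s^2$ for all $s$ whenever $\rho\le1$. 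Since $\C_{j0, j0}\succeq b\I$ gives $\|u\|^2\ge b\,\|\bP_j^0\phi_j\|_{k_j}^2$, and $1-\rho\ge1-\sqrt{1-a/b}\ge a/(2b)$, I conclude $\Var(\phi_j)\ge(a/2)\,\|\bP_j^0\phi_j\|_{k_j}^2$ with $a/2>0$ an absolute constant. As $\epsilon_n\to0$, for all $n$ large enough $\epsilon_n\le a/2$, which yields the displayed inequality and hence $\C_{jj}+\epsilon_n\bP_j^1\succeq\epsilon_n\I$ for every $j$.

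\textbf{Empirical inequality.} By Lemma~\ref{LemmaRKHScont} each kernel is uniformly bounded, $\sup_{x\in\mX_j}k_j(x,x)<\infty$, so $k_{X_j}$ has a finite fourth moment in $\Hsp_j$; this is exactly what is needed for the standard Hilbert-space sample-covariance estimate $E\,\|\hatC_{jj}-\C_{jj}\|_{\mathrm{HS}}\le d\,n^{-1/2}$, with a constant $d$ that can be taken the same for all $j\in[p]$ (decompose $\hatC_{jj}$ into the average of i.i.d.\ rank-one operators plus a negligible recentering term and bound second moments). Since the operator norm is dominated by the Hilbert--Schmidt norm, Markov's inequality gives $\mathbb{P}(\|\hatC_{jj}-\C_{jj}\|>\epsilon_n/2)\le 2d\,\epsilon_n^{-1}n^{-1/2}$, and a union bound over $j\in[p]$ shows that with probability at least $1-2dp\,\epsilon_n^{-1}n^{-1/2}$ one has $\|\hatC_{jj}-\C_{jj}\|\le\epsilon_n/2$ for all $j$ simultaneously. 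On that event, and for $n$ large enough for the population inequality to hold, every $\phi_j\in\Hsp_j$ satisfies
\begin{align*}
\langle\phi_j,(\hatC_{jj}+\epsilon_n\bP_j^1)\phi_j\rangle_{k_j}
&=\langle\phi_j,(\C_{jj}+\epsilon_n\bP_j^1)\phi_j\rangle_{k_j}+\langle\phi_j,(\hatC_{jj}-\C_{jj})\phi_j\rangle_{k_j}\\
&\ge\epsilon_n\|\phi_j\|_{k_j}^2-\tfrac{\epsilon_n}{2}\|\phi_j\|_{k_j}^2=\tfrac{\epsilon_n}{2}\|\phi_j\|_{k_j}^2,
\end{align*}
which is the second assertion. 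These two steps are precisely what Lemmas~\ref{posdef} and~\ref{boundprobLemma} make rigorous.

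\textbf{Main obstacle.} The delicate point is the population coercivity bound: it genuinely requires \emph{both} halves of Assumption~\ref{invertible}. Without the bound $\|\bV_{j0, j1}\|\le\sqrt{1-a/b}<1$ on the ``angle'' between $\Hsp_j^0$ and $\Hsp_j^1$ in the covariance geometry, the cross term $2\langle u,\bV_{j0, j1}v\rangle$ could cancel $\|u\|^2$ and destroy positivity; without $\C_{j0, j0}\succeq b\I$ one could not return from $\|u\|^2$ to $\|\bP_j^0\phi_j\|_{k_j}^2$. Keeping the resulting constant $a/2$ absolute (independent of $\phi_j$, $j$ and $n$) is what makes the later perturbation argument go through uniformly. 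The remaining, comparatively routine, point is making the Hilbert--Schmidt concentration constant $d$ uniform over the finitely many $j$, which is immediate from the uniform boundedness of the kernels on their compact domains guaranteed by Assumption~\ref{assumeRKHS} and Lemma~\ref{LemmaRKHScont}.
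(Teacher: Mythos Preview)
Your proposal is correct and follows essentially the same two-step structure as the paper's Lemmas~\ref{posdef} and~\ref{boundprobLemma}: a population coercivity bound via the block decomposition $\Hsp_j^0\oplus\Hsp_j^1$ and the Baker factorization of $\C_{j0,j1}$, followed by Hilbert--Schmidt concentration plus Markov and a union bound for the empirical part. The one difference is purely algebraic: in the population step the paper proves $\|(\C_{00}-\epsilon\I)^{-1/2}\C_{00}^{1/2}\bV_{01}\|\le1$ directly (yielding the inequality for all $\epsilon<a$), whereas you use the elementary scalar bound $s^2-2\rho st+t^2\ge(1-\rho)s^2$ and then $1-\sqrt{1-a/b}\ge a/(2b)$, obtaining the slightly weaker threshold $\epsilon_n\le a/2$; since only $\epsilon_n\to0$ is required, this is immaterial, and your route is arguably a bit more transparent.
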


Next, we show that the solutions to \eqref{re4}$-$\eqref{re6} exist.
To this end, we are to show that the operators $\bV - \I,~ \tV-\I$ and
$\hV-\I$ are all compact with high probability.  Note that for any
$i\neq j$, the compactness assumption on $\bV_{ij}$ readily shows that
$\bV-\I$ is compact.  Moreover, 
using the fact that the product of a bounded linear operator and a
compact operator is compact, and that both $\C_{ii}^{1/2}$ and
$(\C_{ii}+\epsilon_n\bP_i^1)^{-1/2}$ are bounded and $\bV_{ij}$ is
compact,
we see that $\tV_{ij}$ is also compact.  Last but not least, on the
event that it is well-defined, $\hV_{ij}$ is compact since it is of
finite-rank.  In summary, we have the following result.


\begin{Corollary}
  On the event that the conclusions of Lemma \ref{lmm:quad-form} hold,
  under Assumption~\ref{compact}, the operators $\bV-\I, \tV-\I$ and
  $\hV-\I$ are well-defined and compact.
\end{Corollary}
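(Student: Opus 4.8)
The plan is to show that each of the three operators agrees with the block identity on its diagonal blocks and has compact off‑diagonal blocks, and then to invoke the elementary fact that a block operator $\bT = (\bT_{ij})_{i,j\in[p]}$ on $\bHsp = \Hsp_1\times\cdots\times\Hsp_p$ is compact if and only if every one of its blocks $\bT_{ij}$ is compact. This last fact is immediate upon writing $\bT = \sum_{i,j}\iota_i\bT_{ij}\pi_j$, where $\iota_j:\Hsp_j\to\bHsp$ and $\pi_j:\bHsp\to\Hsp_j$ are the coordinate inclusions and projections: a finite sum of compositions of compact operators with bounded ones is compact, and conversely $\bT_{ij} = \pi_i\bT\iota_j$.

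First I would dispose of well‑definedness. Since $\bV_{jj} = \tV_{jj} = \hV_{jj} = \I$, each of $\bV-\I$, $\tV-\I$, $\hV-\I$ has vanishing diagonal blocks, so only the off‑diagonal blocks are at issue. The blocks $\bV_{ij}$ ($i\neq j$) are the bounded operators of norm at most $1$ furnished by Baker's representation \eqref{Vij}, hence well‑defined. For $\tV_{ij}$ and $\hV_{ij}$ the only concern is whether the inverse square roots in \eqref{inverttilde} and \eqref{inverthat} exist, and this is exactly what Lemma~\ref{lmm:quad-form} supplies on the event under consideration: $\C_{jj}+\epsilon_n\bP_j^1\succeq\epsilon_n\I$ and $\hatC_{jj}+\epsilon_n\bP_j^1\succeq\frac{\epsilon_n}{2}\I$, so these operators are positive and bounded below, hence boundedly invertible, and their inverse square roots are bounded (by $\epsilon_n^{-1/2}$ and $(2/\epsilon_n)^{1/2}$ respectively). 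Consequently $\tV_{ij}$ and $\hV_{ij}$ are compositions of bounded operators, hence well‑defined bounded operators, and $\tV-\I$, $\hV-\I$ are well‑defined.

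Next I would establish compactness of the off‑diagonal blocks. For $\bV$ this is immediate from Assumption~\ref{compact}. For $\tV$, I would substitute $\C_{ij} = \C_{ii}^{1/2}\bV_{ij}\C_{jj}^{1/2}$ from \eqref{Vij} to obtain $\tV_{ij} = \bigl[(\C_{ii}+\epsilon_n\bP_i^1)^{-1/2}\C_{ii}^{1/2}\bigr]\,\bV_{ij}\,\bigl[\C_{jj}^{1/2}(\C_{jj}+\epsilon_n\bP_j^1)^{-1/2}\bigr]$; the two bracketed factors are bounded (products of bounded operators), $\bV_{ij}$ is compact, and the product of a bounded operator with a compact operator is compact, so $\tV_{ij}$ is compact. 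For $\hV$, recall that $\hatC_{ij}$ is of finite rank (established in Section~\ref{mainAssumptions}), and composing a finite‑rank operator with bounded operators leaves it finite‑rank, hence compact; thus $\hV_{ij}$ is compact.

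Combining the two steps, each of $\bV-\I$, $\tV-\I$, $\hV-\I$ is, on the stated event, well‑defined with compact off‑diagonal blocks and zero diagonal blocks, hence compact. I do not expect a genuine obstacle here: the one point that requires care is the well‑definedness of the inverse square roots, and that is precisely the content of Lemma~\ref{lmm:quad-form}; the remaining ingredients — Baker's factorization, finiteness of rank of $\hatC_{ij}$, and the reduction of block compactness to blockwise compactness — are standard.
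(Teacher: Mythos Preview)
Your proposal is correct and follows essentially the same approach as the paper: compactness of $\bV-\I$ from Assumption~\ref{compact}, compactness of $\tV_{ij}$ by sandwiching the compact $\bV_{ij}$ between the bounded operators $(\C_{ii}+\epsilon_n\bP_i^1)^{-1/2}\C_{ii}^{1/2}$ and $\C_{jj}^{1/2}(\C_{jj}+\epsilon_n\bP_j^1)^{-1/2}$, and compactness of $\hV_{ij}$ from finite rank of $\hatC_{ij}$. You are if anything more explicit than the paper about well-definedness and the block-to-blockwise reduction, but the argument is the same.
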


Note that compactness implies the spectra of $\bV-\I, \tV-\I$ and
$\hV-\I$ are countable with $0$ the only possible accumulation point.
Consequently, the spectra of $\bV, \tV$ and $\hV$ are countable with
$+1$ as the only possible accumulation point.  Assuming that the
smallest eigenvalue of $\bV$ has multiplicity one, the solutions to
\eqref{re4}$-$\eqref{re6} can be obtained as the eigenfunctions
$\f^*, \tilde{\f}^{(n)}$ and $\hf$ corresponding to the smallest eigenvalues
of $\bV, \tV$ and $\hV$, respectively.  In particular, $\f^*$,
$\tilde{\f}^{(n)}$ and $\hf$ are unique (with high probability).  We can
then obtain the solutions to our APC problems by inverse transforming
$\f^*, \tilde{\f}^{(n)}, \hf$ following \eqref{re7}$-$\eqref{re9}.  By
definition, $\bV, \tV$ and $\hV$ are all positive, so their
eigenvalues are bounded below by 0.

In summary, we rewrite the APC problems as in
\eqref{re1}$-$\eqref{re3} and \eqref{re4}$-$\eqref{re6}, and we know
that under Assumptions~\ref{assumeRKHS}, \ref{compact} and
\ref{invertible}, the solutions to \eqref{re1}$-$\eqref{re3} and
\eqref{re4}$-$\eqref{re6} exist.  

\subsubsection{Convergence}

We now turn to establishing the consistency of kernelized sample
APCs.  We start by proving the following two key lemmas, which will
lend support to the proof of our main theorems.  The first lemma
deals with the difference between $\hV_{ij}$ and $\tV_{ij}$ which
constitutes the stochastic error.

\begin{Lemma}\label{SampleError}
  Under the conditions of Theorem \ref{MainThm2}, for sufficiently
  large values of $n$, with probability greater than
  $1-2(d_i+d_j)\epsilon^{-1}n^{-1/2}$, we have
  \begin{equation}
    \|\hV_{ij} - \tV_{ij}\| \leq C\epsilon_n^{-3/2}n^{-1/2}, \label{hVtV}
  \end{equation}
  where $C, d_i$ and $d_j$ are constants not depending on $n$.
\end{Lemma}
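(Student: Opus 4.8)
The plan is to prove~\eqref{hVtV} by a three-term telescoping decomposition of $\hV_{ij}-\tV_{ij}$, controlling each piece with (i) a perturbation bound for operator inverse square roots, (ii) concentration of the empirical (cross-)covariance operators, and (iii) the Baker factorization~\eqref{Vij} used to keep each resolvent paired with a compatible square-root covariance factor. Write $T_i:=(\C_{ii}+\epsilon_n\bP_i^1)^{-1/2}$ and $\hat T_i:=(\hatC_{ii}+\epsilon_n\bP_i^1)^{-1/2}$. By Lemma~\ref{lmm:quad-form}, on an event $G_n$ of probability at least $1-2(d_i+d_j)\epsilon_n^{-1}n^{-1/2}$ these operators are well defined with $\|\hat T_i\|,\|\hat T_j\|\le\sqrt{2}\,\epsilon_n^{-1/2}$, and there $\tV_{ij}=T_i\C_{ij}T_j$, $\hV_{ij}=\hat T_i\hatC_{ij}\hat T_j$, so that
\begin{equation*}
\hV_{ij}-\tV_{ij} \;=\; \hat T_i\hatC_{ij}(\hat T_j-T_j)\;+\;\hat T_i(\hatC_{ij}-\C_{ij})T_j\;+\;(\hat T_i-T_i)\C_{ij}T_j .
\end{equation*}

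First I would assemble the three inputs. For bounded self-adjoint $A,B\succeq c\I$, the integral representation $A^{-1/2}=\pi^{-1}\int_0^\infty\lambda^{-1/2}(A+\lambda\I)^{-1}\,d\lambda$ together with the resolvent identity gives $\|A^{-1/2}-B^{-1/2}\|\le\tfrac12 c^{-3/2}\|A-B\|$; applied with $c=\epsilon_n/2$ on $G_n$ this yields $\|\hat T_i-T_i\|\le\sqrt{2}\,\epsilon_n^{-3/2}\|\hatC_{ii}-\C_{ii}\|$, and similarly for $j$. Next, since the kernels are bounded (Lemma~\ref{LemmaRKHScont}), a routine variance computation gives $E\|\hatC_{ij}-\C_{ij}\|\le d_{ij}n^{-1/2}$ (and likewise for the $ii$, $jj$ blocks), so by Markov's inequality one may enlarge $G_n$, at the cost of the same order of failure probability, to also ensure $\|\hatC_{ii}-\C_{ii}\|,\|\hatC_{jj}-\C_{jj}\|,\|\hatC_{ij}-\C_{ij}\|=O(n^{-1/2})$ and $\|\hatC_{jj}\|\le\kappa_j:=\sup_x k_j(x,x)<\infty$ up to an absolute constant. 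Finally, from~\eqref{Vij}, $\C_{ij}=\C_{ii}^{1/2}\bV_{ij}\C_{jj}^{1/2}$ with $\|\bV_{ij}\|\le1$, and Baker's theorem applied to the empirical measure gives $\hatC_{ij}=\hatC_{ii}^{1/2}\bV^{(n)}_{ij}\hatC_{jj}^{1/2}$ with $\|\bV^{(n)}_{ij}\|\le1$; moreover $\C_{jj}\preceq\C_{jj}+\epsilon_n\bP_j^1$ gives $\|\C_{jj}^{1/2}T_j\|\le1$, and on $G_n$ likewise $\|\hatC_{ii}^{1/2}\hat T_i\|\le1$.

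With these in hand the middle term is immediate: $\|\hat T_i(\hatC_{ij}-\C_{ij})T_j\|\le\|\hat T_i\|\,\|\hatC_{ij}-\C_{ij}\|\,\|T_j\|\le 2\epsilon_n^{-1}\cdot O(n^{-1/2})=O(\epsilon_n^{-3/2}n^{-1/2})$ since $\epsilon_n\le1$. The hard part is the two outer terms: bounding them naively by $\|\hat T_i-T_i\|\cdot\|\hat T_j\|=O(\epsilon_n^{-3/2}n^{-1/2})\cdot O(\epsilon_n^{-1/2})$ loses an extra $\epsilon_n^{-1/2}$ and gives only $O(\epsilon_n^{-2}n^{-1/2})$, which is not negligible under the hypothesis $n^{-1/3}/\epsilon_n\to0$. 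The fix (step (iii)) is to avoid separating the resolvent from the covariance: for the third term, $(\hat T_i-T_i)\C_{ij}T_j=(\hat T_i-T_i)\C_{ii}^{1/2}\bV_{ij}\,(\C_{jj}^{1/2}T_j)$, whose norm is at most $\|\hat T_i-T_i\|\,\kappa_i^{1/2}\cdot1=O(\epsilon_n^{-3/2}n^{-1/2})$; for the first term, $\hat T_i\hatC_{ij}=(\hat T_i\hatC_{ii}^{1/2})\,\bV^{(n)}_{ij}\hatC_{jj}^{1/2}$ has norm at most $1\cdot1\cdot\kappa_j^{1/2}$, so $\|\hat T_i\hatC_{ij}(\hat T_j-T_j)\|\le\kappa_j^{1/2}\|\hat T_j-T_j\|=O(\epsilon_n^{-3/2}n^{-1/2})$. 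Adding the three bounds and collecting $\kappa_i,\kappa_j$ and the $d$'s into a single constant $C$ yields~\eqref{hVtV}.

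I expect the only genuinely delicate points to be the inverse-square-root perturbation bound (which really needs the integral representation rather than a naive manipulation of operator square roots) and the pairing in step (iii), which is what keeps the rate at $\epsilon_n^{-3/2}$ instead of $\epsilon_n^{-2}$; the moment bound $E\|\hatC_{ij}-\C_{ij}\|=O(n^{-1/2})$ and the attendant Markov estimate are routine once the kernels are known to be bounded.
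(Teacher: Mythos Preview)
Your argument is correct and reaches the right rate; the overall architecture---a three--term telescoping of $\hV_{ij}-\tV_{ij}$, followed by the key ``pairing'' of each resolvent factor with a compatible covariance square root so as to land at $\epsilon_n^{-3/2}$ rather than $\epsilon_n^{-2}$---is exactly the paper's strategy. The technical tools differ, however. For the inverse--square--root perturbation you use the integral representation $A^{-1/2}=\pi^{-1}\!\int_0^\infty \lambda^{-1/2}(A+\lambda\I)^{-1}d\lambda$ together with the resolvent identity to get $\|A^{-1/2}-B^{-1/2}\|\le \tfrac12 c^{-3/2}\|A-B\|$, whereas the paper instead invokes the algebraic identity $A^{-1/2}-B^{-1/2}=[A^{-1/2}(B^{3/2}-A^{3/2})+A-B]\,B^{-3/2}$ and controls $\|B^{3/2}-A^{3/2}\|$ via Lemma~\ref{Fukumizu8}. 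For the pairing step, you apply Baker's factorization $\C_{ij}=\C_{ii}^{1/2}\bV_{ij}\C_{jj}^{1/2}$ (and its empirical analogue) and bound $\|\C_{jj}^{1/2}T_j\|\le 1$ from the operator inequality $\C_{jj}\preceq\C_{jj}+\epsilon_n\bP_j^1$; the paper instead splits the trailing $B^{-3/2}$ into $B^{-1}\cdot B^{-1/2}$ and absorbs the $B^{-1/2}$ into the uniformly bounded products of Lemmas~\ref{facts} and~\ref{condEmp}. Your route is arguably more transparent (and yields the sharper constant $\|\C_{jj}^{1/2}T_j\|\le 1$ versus the paper's $\le 2$); the paper's route has the advantage of reusing lemmas already established for other parts of the consistency proof. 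Both handle the passage from $E\|\hatC-\C\|=O(n^{-1/2})$ to a high--probability statement somewhat informally; your remark about enlarging $G_n$ via Markov at the same probability cost is the right idea, though note that a Markov event of the form $\{\|\hatC_{ii}-\C_{ii}\|\le Mn^{-1/2}\}$ contributes a term of order $d_i/M$ rather than $d_i\epsilon_n^{-1}n^{-1/2}$, so the constants in the stated failure probability need minor adjustment if one insists on an explicit~$C$.
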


The second lemma deals with the \emph{deterministic} difference
between $\tV_{ij}$ and $\bV_{ij}$ which can be viewed as approximation
error.

\begin{Lemma}\label{ApproxError}
  Suppose that Assumptions~\ref{assumeRKHS}, \ref{compact}
  and \ref{invertible} hold, and let $(\epsilon_n)_{n=1}^\infty$ be a
  sequence of positive numbers such that
  $\lim_{n\rightarrow\infty}\epsilon_n=0$.  Then
\begin{equation}
  \|\tV_{ij} - \bV_{ij}\| \longrightarrow 0\quad (\text{as }n\longrightarrow\infty). \label{approx}
\end{equation}
\end{Lemma}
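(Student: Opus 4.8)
The plan is to reduce the statement to Baker's factorization \eqref{Vij} together with a strong-operator-convergence argument that is then upgraded to operator-norm convergence using the compactness of $\bV_{ij}$. Writing $\C_{ij} = \C_{ii}^{1/2}\bV_{ij}\C_{jj}^{1/2}$ and setting $\bB_\ell^{(n)} := (\C_{\ell\ell}+\epsilon_n\bP_\ell^1)^{-1/2}\C_{\ell\ell}^{1/2}$ for $\ell\in\{i,j\}$ (well-defined for large $n$ since $\C_{\ell\ell}+\epsilon_n\bP_\ell^1\succeq\epsilon_n\I$ by Lemma~\ref{lmm:quad-form}), one has $\tV_{ij} = \bB_i^{(n)}\bV_{ij}(\bB_j^{(n)})^*$ because $(\bB_j^{(n)})^* = \C_{jj}^{1/2}(\C_{jj}+\epsilon_n\bP_j^1)^{-1/2}$. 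Since $\C_{jj}\preceq\C_{jj}+\epsilon_n\bP_j^1$ with the latter invertible, the standard operator inequality gives $\|\bB_\ell^{(n)}\|\leq 1$. Using $\bV_{ij}=\Q_i\bV_{ij}\Q_j$ from \eqref{Vij}, I would telescope
\[
\tV_{ij}-\bV_{ij} = (\bB_i^{(n)}-\Q_i)\bV_{ij}(\bB_j^{(n)})^* + \Q_i\bV_{ij}\big((\bB_j^{(n)})^*-\Q_j\big),
\]
and then, bounding the contractive factors and taking an adjoint in the second term (with $\bV_{ij}^*=\bV_{ji}$),
\[
\|\tV_{ij}-\bV_{ij}\| \;\leq\; \big\|(\bB_i^{(n)}-\Q_i)\bV_{ij}\big\| + \big\|(\bB_j^{(n)}-\Q_j)\bV_{ji}\big\| .
\]
So it suffices to show $\|(\bB_\ell^{(n)}-\Q_\ell)\mK\|\to 0$ for any compact $\mK$ with $\mR(\mK)\subseteq\overline{\mR(\C_{\ell\ell})}$, which covers $\mK=\bV_{ij},\bV_{ji}$ by Assumption~\ref{compact} and the range inclusion noted after \eqref{Vij}.

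The first ingredient is that $\bB_\ell^{(n)}\to\Q_\ell$ in the strong operator topology. Using the orthogonal decomposition $\Hsp_\ell = \mN(\C_{\ell\ell})\oplus\overline{\mR(\C_{\ell\ell})}$, both $\bB_\ell^{(n)}$ and $\Q_\ell$ vanish on $\mN(\C_{\ell\ell})$, so (the uniform bound $\|\bB_\ell^{(n)}\|\le1$ in hand) it is enough to check convergence on the dense subspace $\mR(\C_{\ell\ell}^{1/2})$ of $\overline{\mR(\C_{\ell\ell})}$. For $f=\C_{\ell\ell}^{1/2}g$, the identity $\C_{\ell\ell}=(\C_{\ell\ell}+\epsilon_n\bP_\ell^1)-\epsilon_n\bP_\ell^1$ gives
\[
\bB_\ell^{(n)}f = (\C_{\ell\ell}+\epsilon_n\bP_\ell^1)^{-1/2}\C_{\ell\ell}\,g = (\C_{\ell\ell}+\epsilon_n\bP_\ell^1)^{1/2}g - \epsilon_n(\C_{\ell\ell}+\epsilon_n\bP_\ell^1)^{-1/2}\bP_\ell^1 g .
\]
Since $\|\epsilon_n\bP_\ell^1\|\leq\epsilon_n\to0$, we have $\C_{\ell\ell}+\epsilon_n\bP_\ell^1\to\C_{\ell\ell}$ in operator norm, and continuity of the operator square root in operator norm yields $(\C_{\ell\ell}+\epsilon_n\bP_\ell^1)^{1/2}g\to\C_{\ell\ell}^{1/2}g = f = \Q_\ell f$; the second term is bounded by $\epsilon_n\cdot\epsilon_n^{-1/2}\|g\| = \epsilon_n^{1/2}\|g\|\to0$ using $\|(\C_{\ell\ell}+\epsilon_n\bP_\ell^1)^{-1/2}\|\leq\epsilon_n^{-1/2}$ from Lemma~\ref{lmm:quad-form}. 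This establishes the claimed strong convergence.

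The second ingredient is the standard fact that a uniformly bounded sequence of operators converging strongly, composed on the right with a compact operator $\mK$, converges in operator norm — because $\{\mK x:\|x\|\leq1\}$ is relatively compact and strong convergence is uniform on compact sets. Applying it to $\mK=\bV_{ij}$ and to $\mK=\bV_{ji}$ (both compact by Assumption~\ref{compact}) gives $\|(\bB_i^{(n)}-\Q_i)\bV_{ij}\|\to0$ and $\|(\bB_j^{(n)}-\Q_j)\bV_{ji}\|\to0$, and the displayed bound above then yields $\|\tV_{ij}-\bV_{ij}\|\to0$, which is \eqref{approx}.

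The main obstacle is that $\C_{jj}$ and the projection $\bP_j^1$ need not commute, so one cannot simultaneously diagonalize $\C_{jj}+\epsilon_n\bP_j^1$ and $\C_{jj}$ and argue by scalar spectral calculus; the resolution is the algebraic rewriting $\C_{\ell\ell}=(\C_{\ell\ell}+\epsilon_n\bP_\ell^1)-\epsilon_n\bP_\ell^1$ combined with norm-continuity of $A\mapsto A^{1/2}$. The second, more essential point is that bare strong convergence of $\bB_\ell^{(n)}$ is genuinely insufficient for an operator-norm conclusion — $\bB_\ell^{(n)}-\Q_\ell$ does not go to $0$ in norm when $\C_{\ell\ell}$ has non-closed range — so the compactness of $\bV_{ij}$ (Assumption~\ref{compact}) must be used in an essential way, exactly as in the analogous regularization estimates of \cite{Fukumizu2007}.
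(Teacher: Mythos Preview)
Your proof is correct and follows the same overall architecture as the paper's: telescope $\tV_{ij}-\bV_{ij}$ into a left piece and a right piece, reduce each to $(\text{something converging strongly})\cdot(\text{compact operator})$, and invoke the standard fact (the paper's Lemma~\ref{Fukumizu9}) that strong convergence upgrades to norm convergence after composition with a compact operator. The difference lies in how the strong convergence is established. The paper expands in eigenfunctions of $\C_{ii}$ via Lemma~\ref{invsqrtform}, obtains an explicit bound of the form $2(\lambda\epsilon)^{1/2}+2\lambda^{3/4}\epsilon^{1/4}$ on each eigenvector, and then sums using the trace-class property of $\C_{ii}$; it also uses only the cruder estimate $\|\bB_\ell^{(n)}\|\le 2$ from Lemma~\ref{facts}. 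Your route is cleaner: the contraction bound $\|\bB_\ell^{(n)}\|\le 1$ follows immediately from $\C_{\ell\ell}\preceq\C_{\ell\ell}+\epsilon_n\bP_\ell^1$, and the strong convergence of $\bB_\ell^{(n)}$ to $\Q_\ell$ is obtained directly from the algebraic rewriting $\C_{\ell\ell}=(\C_{\ell\ell}+\epsilon_n\bP_\ell^1)-\epsilon_n\bP_\ell^1$ together with norm-continuity of the square root (which is exactly the content of $\|\C_{\ell\ell}^{1/2}-(\C_{\ell\ell}+\epsilon_n\bP_\ell^1)^{1/2}\|\le\epsilon_n^{1/2}$ in Lemma~\ref{facts}), sidestepping any spectral bookkeeping. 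Working with $\Q_\ell$ rather than writing the formal $\C_{\ell\ell}^{-1/2}$ is also a nice touch that makes the argument rigorous without caveats. What the paper's approach buys is a slightly more explicit quantitative handle on the rate, whereas yours buys transparency and brevity; for the purposes of this lemma (which is purely qualitative) your argument is at least as good.
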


Combining Lemma~\ref{SampleError} and \ref{ApproxError} leads to the
following result:

\begin{Theorem}\label{MainThm1}
  Suppose that Assumptions~\ref{assumeRKHS}, \ref{compact}
  and \ref{invertible} hold, and that the eigenspace which attain the
  eigenvalue problem
  \[
  \min_{\f\in\bHsp}\langle\f, \bV\f\rangle_k \qquad\text{subject to
  }\langle\f, \f\rangle_k = 1
  \]
  is one-dimensional, spanned by $\f^* = (f_1^*, \ldots, f_p^*)$ with $\langle
  \f^*,\f^*\rangle_k = 1$.  Let
  $(\epsilon_n)_{n=1}^\infty$ be a sequence of positive numbers which
  satisfies \eqref{condeps}.  Let $\hf$ be the unit eigenfunction for
  the smallest eigenvalue of $\hV$.  Then,
  \[
  |\langle\hf, \f^*\rangle_k| \longrightarrow 1
  \]
  in probability, as $n\longrightarrow\infty$.
\end{Theorem}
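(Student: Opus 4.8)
The plan is to combine Lemmas~\ref{SampleError} and~\ref{ApproxError} into an operator-norm bound $\|\hV - \bV\| \to 0$ in probability, and then invoke a spectral-perturbation argument for compact self-adjoint operators to pass from convergence of operators to convergence of the bottom eigenfunction. First I would note that by the block structure $\bV = (\bV_{ij})$, $\tV = (\tV_{ij})$, $\hV = (\hV_{ij})$ (with identity diagonal blocks), the operator norm of a difference is controlled by the sum (over the finitely many pairs $i \neq j$) of the block norms, so it suffices to bound $\|\hV_{ij} - \bV_{ij}\|$ for each $i \neq j$. By the triangle inequality $\|\hV_{ij} - \bV_{ij}\| \le \|\hV_{ij} - \tV_{ij}\| + \|\tV_{ij} - \bV_{ij}\|$. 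For the first term, Lemma~\ref{SampleError} gives the bound $C \epsilon_n^{-3/2} n^{-1/2}$ on an event of probability at least $1 - 2(d_i + d_j)\epsilon_n^{-1} n^{-1/2}$; under the condition \eqref{condeps}, namely $n^{-1/3}/\epsilon_n \to 0$, we have $\epsilon_n^{-3/2} n^{-1/2} = (n^{-1/3}\epsilon_n^{-1})^{3/2} \to 0$ and also $\epsilon_n^{-1} n^{-1/2} \to 0$, so this term goes to $0$ in probability. For the second term, Lemma~\ref{ApproxError} gives deterministic convergence $\|\tV_{ij} - \bV_{ij}\| \to 0$. Hence $\|\hV - \bV\| \to 0$ in probability.

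Next I would carry out the spectral-perturbation step. By Assumption~\ref{compact} and the Corollary preceding this theorem, $\bV - \I$ and (on the high-probability event of Lemma~\ref{lmm:quad-form}) $\hV - \I$ are compact and self-adjoint, so $\bV$ and $\hV$ have discrete spectra accumulating only at $+1$, and their smallest eigenvalues $\lambda_1 = \lambda_1(\bV)$ and $\hlambda_1 = \lambda_1(\hV)$ are attained. The hypothesis says the bottom eigenspace of $\bV$ is one-dimensional, spanned by the unit vector $\f^*$; let $\gamma > 0$ be the spectral gap between $\lambda_1$ and the rest of $\mathrm{spec}(\bV)$. On the event $\|\hV - \bV\| < \gamma/3$ (which has probability tending to $1$), Weyl's inequality gives $|\hlambda_1 - \lambda_1| \le \|\hV - \bV\| < \gamma/3$, so $\hlambda_1$ is separated from the remaining eigenvalues of $\hV$ by at least $\gamma/3$. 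Writing $\hf$ for a unit eigenvector of $\hV$ at $\hlambda_1$ and decomposing $\hf = \langle \hf, \f^*\rangle_k\, \f^* + \bu$ with $\bu \perp \f^*$, I would estimate $(\hV - \lambda_1 \I)\hf = (\hV - \bV)\hf + (\bV - \lambda_1 \I)\hf$; the left side has norm $|\hlambda_1 - \lambda_1| \le \|\hV - \bV\|$, the first term on the right has norm $\le \|\hV - \bV\|$, and $(\bV - \lambda_1 \I)\hf = (\bV - \lambda_1\I)\bu$ has norm $\ge \gamma \|\bu\|$ since $\bu$ lies in the spectral subspace of $\bV$ complementary to $\f^*$. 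This yields $\gamma \|\bu\| \le 2\|\hV - \bV\|$, hence $\|\bu\| \to 0$ in probability, and therefore $|\langle \hf, \f^*\rangle_k|^2 = 1 - \|\bu\|^2 \to 1$, which is the claim.

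The main obstacle I anticipate is not the perturbation bookkeeping but making the two convergence inputs fit together cleanly: Lemma~\ref{SampleError} holds only on an $n$-dependent high-probability event, and the event of Lemma~\ref{lmm:quad-form} (on which $\hV$ is even well-defined) is also random, so some care is needed to intersect these events, verify their joint probability still tends to $1$ under \eqref{condeps}, and confirm that the rate $\epsilon_n^{-3/2} n^{-1/2} \to 0$ dominates. Everything else — the reduction to finitely many off-diagonal blocks, Weyl's inequality, and the one-dimensional eigenspace gap argument — is standard once $\|\hV - \bV\| \to 0$ in probability is in hand.
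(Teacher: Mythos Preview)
Your proposal is correct and matches the paper's approach: combine Lemmas~\ref{SampleError} and~\ref{ApproxError} via the triangle inequality to get $\|\hV_{ij}-\bV_{ij}\|\to 0$ in probability for each $i\neq j$, sum over the finitely many off-diagonal blocks to obtain $\|\hV-\bV\|\to 0$ in probability, and then pass to convergence of the bottom eigenfunction. The only difference is in the last step: the paper invokes a prepared lemma (Lemma~\ref{evecconv}, adapted from \cite{Fukumizu2007} and proved via a Rayleigh-quotient expansion of $\langle\hf,\bV\hf\rangle_k$), whereas you give an equivalent inline Davis--Kahan/Weyl estimate using the spectral gap---both are standard once $\|\hV-\bV\|\to 0$ is established.
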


Finally,
Theorem~\ref{MainThm2} follows from Theorem~\ref{MainThm1} once we
transform $\f^*$ and $\hf$ back to $\bPhi^*$ and $\hPhi$ using the
relationship in \eqref{re7} and \eqref{re9}. For details of the proof
for the two key lemmas and main theorems, see
Appendices~\ref{mainLemmaProof} and \ref{mainThmProof}.




\section{Estimation and Computation} 
\label{compute}

In this section, we motivate an iterative method for computing
kernelized APCs.  This involves the use of power algorithm, an
iterative algorithm for extracting the first few largest (or smallest)
eigenfunctions of a bounded linear operator.  In addition to detailing
out the algorithm, we provide rigorous theoretical justification of 
the use of power algorithm in the RKHS framework.

Consider a matrix $\bM$ with the eigen-decomposition
$\bM=\sum_{i=1}^m\lambda_i\bM_i$, where $\bM_i = \bu_i\bu_i^T$ and the
eigenvalues $\lambda_1>\lambda_2>\cdots>\lambda_m$ are distinct.  The
power algorithm allows us to compute the eigenvector $\bu_1$
corresponding to the largest eigenvalue $\lambda_1$ (see, e.g., \cite{Golub2013}) by forming normalized powers
$\bM^n \bu_0 / \|\bM^n \bu_0\|$ which can be shown to converge to
$\bu_1$ as long as $\bu_0$ is not orthogonal to $\bu_1$.


To compute the eigenvector $\bu_m$ corresponding to the smallest
eigenvalue $\lambda_m$, the spectrum needs to be flipped and shifted
by replacing $\bM$ with $\gamma\I-\bM$ in the power algorithm. If
$0\leq \lambda_1<\lambda_m\leq B$ for some $B>0$, then using
$\gamma = (B+1)/2$, we have
\begin{align*}
-\frac{B-1}{2} \leq &\gamma-\lambda_i \leq \frac{B-1}{2} &&\qquad\text{if }1\leq\lambda_i\leq B,\\
\frac{B-1}{2} \leq &\gamma-\lambda_i \leq \frac{B-1}{2}+1 &&\qquad\text{if }0\leq\lambda_i\leq 1.
\end{align*}
In this case, the large eigenvalues of $\bM$, $\{\lambda: \lambda>1\}$, are mapped to an interval centered at 0, while the small eigenvalues $\{\lambda: \lambda<1\}$ are affixed to the right end of this interval.

\subsection{Eigencharacterization of Kernelized APCs}
To relate power algorithm to kernelized APCs, we first show that the kernelized APC problem \eqref{statePenSample} can be reformulated as an eigenvalue problem with respect to a new inner product $\langle\cdot, \cdot\rangle_\star$ defined on $\bHsp$ (which is initially endowed with $\langle\cdot, \cdot\rangle_k$).  As a consequence, kernelized APC can be obtained as the eigenfunction corresponding to the smallest eigenvalue of an operator $\tS$ defined on $(\bHsp, \langle\cdot, \cdot\rangle_\star)$.  Then, computation of kernelized sample APC reduces to an application of power algorithm on an empirical version of $\tS$.

\begin{Theorem}\label{RKHSstar}
Suppose that Assumption~\ref{assumeRKHS} holds, and $\Hsp_j^0$ excludes constants.  Then $\Hsp_j$ is a reproducing kernel Hilbert space with respect to the inner product
\[\langle f, g\rangle_{\star_j} := \Cov[f(X_j), g(X_j)] + \alpha_j\langle f, g\rangle_{k_j^1}, \qquad f, g\in\Hsp_j, \alpha_j>0.\]
\end{Theorem}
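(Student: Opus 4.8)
The plan is to exhibit a reproducing kernel for $(\Hsp_j, \langle\cdot,\cdot\rangle_{\star_j})$ explicitly, or at least to show that all evaluation functionals on this space are bounded, which is the cleanest route. Fix $j$ and drop the subscript. The first step is to observe that the new inner product $\langle f,g\rangle_\star = \Cov[f(X),g(X)] + \alpha\langle f,g\rangle_{k^1}$ is indeed an inner product on $\Hsp$: symmetry and bilinearity are immediate, and positive-definiteness needs $\langle f,f\rangle_\star = \Var[f(X)] + \alpha\|f^1\|_{k^1}^2 = 0 \Rightarrow f = 0$. Here $\Var[f(X)]=0$ forces $f$ to be constant (by the discussion after the cross-covariance operator definition, using Assumption~\ref{assumeRKHS}(d), full support), and $\|f^1\|_{k^1}=0$ forces $f\in\Hsp^0$; since $\Hsp^0$ excludes constants, the only constant in $\Hsp^0$ is $0$, so $f=0$. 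This is where the hypothesis ``$\Hsp_j^0$ excludes constants'' is used.

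The second step is to show $(\Hsp,\langle\cdot,\cdot\rangle_\star)$ is complete, so that it is a Hilbert space, and then that evaluation functionals are bounded. For completeness, I would argue that the $\star$-norm is equivalent to (or at least comparable from below to a multiple of) the original RKHS norm $\|\cdot\|_k$ on $\Hsp$ — or handle the two summands via the decomposition $\Hsp=\Hsp^0\oplus\Hsp^1$. On $\Hsp^1$ the relevant part of the $\star$-norm dominates $\alpha\|\cdot\|_{k^1}^2$, which is the genuine RKHS norm of $\Hsp^1$; on the finite-dimensional $\Hsp^0$ all norms are equivalent and $\Var[\cdot]$ is a genuine norm there (again because $\Hsp^0$ has no nonzero constants). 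One must be a little careful because $\Var[f(X)]$ mixes the two components through the cross term $2\Cov[f^0(X),f^1(X)]$; Assumption~\ref{assumeRKHS} guarantees $\Hsp\subset L^2(\mX,dP)$ (Lemma~\ref{LemmaRKHScont}) so all these covariances are finite and the covariance operator $\C$ is bounded, hence $\Var[f(X)]\le \|\C\|\,\|f\|_k^2$, giving $\langle f,f\rangle_\star \le (\|\C\| + \alpha)\|f\|_k^2$; the reverse comparison $\|f\|_k^2 \le c\langle f,f\rangle_\star$ is the substantive claim and follows from a standard argument: if it failed there would be a sequence $f_n$ with $\|f_n\|_k=1$ and $\langle f_n,f_n\rangle_\star\to 0$; passing to a weakly convergent subsequence and using compactness of $\C$ (or of $\bV$, Assumption~\ref{compact}) together with the coercivity on $\Hsp^0$ yields a contradiction.

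Once norm equivalence is in hand, completeness of $(\Hsp,\langle\cdot,\cdot\rangle_\star)$ is immediate (it has the same Cauchy sequences and the same limits as the complete space $(\Hsp,\langle\cdot,\cdot\rangle_k)$), and boundedness of $\delta_x$ transfers directly: $|f(x)| = |\langle f,k_x\rangle_k| \le \|k_x\|_k\|f\|_k \le \sqrt{c}\,\|k_x\|_k\,\|f\|_\star$, where $\|k_x\|_k^2 = k(x,x) < \infty$ is bounded uniformly by Assumption~\ref{assumeRKHS} (Lemma~\ref{LemmaRKHScont}). Therefore $(\Hsp,\langle\cdot,\cdot\rangle_\star)$ is an RKHS. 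The main obstacle is the reverse norm comparison $\|f\|_k^2 \lesssim \langle f,f\rangle_\star$: this is the only place real work is needed, and it is essentially the assertion that $\Var[f(X)] + \alpha\|f^1\|_{k^1}^2$ controls $\|f^0\|_k^2 + \|f^1\|_k^2$, i.e. that $\Var$ is bounded below on the ``unpenalized'' finite-dimensional part — which is exactly the role of the hypothesis that $\Hsp^0$ contains no nonzero constant, combined with finite-dimensionality of $\Hsp^0$ and boundedness of the covariance operator linking the two components. (Alternatively, one can give the reproducing kernel of $\langle\cdot,\cdot\rangle_\star$ in closed form using the operator $\C + \alpha\bP^1$ and its inverse on $\Hsp$, but the norm-equivalence route avoids having to invert an operator and is the cleaner exposition.)
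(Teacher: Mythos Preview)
Your approach is correct but proceeds by a genuinely different route from the paper. The paper does \emph{not} establish norm equivalence between $\|\cdot\|_\star$ and $\|\cdot\|_k$; instead it argues completeness and boundedness of evaluation functionals directly via the decomposition $f_n=f_n^0+f_n^1$. For completeness, the paper observes that a $\|\cdot\|_\star$-Cauchy sequence has $\{f_n^1\}$ Cauchy in $(\Hsp^1,\|\cdot\|_{k^1})$, hence convergent there and (via Corollary~\ref{corunifconv}) uniformly, hence in $\|\cdot\|_P$; it then shows $\{f_n^0\}$ is $\|\cdot\|_P$-Cauchy and invokes completeness of the finite-dimensional space $(\Hsp^0,\|\cdot\|_P)$. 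Boundedness of $\delta_x$ is handled similarly, bounding $|f^1(x)|$ and $|f^0(x)|$ separately. Your norm-equivalence route packages all of this into a single inequality $\|f\|_k^2\le c\|f\|_\star^2$, after which completeness and the RKHS property transfer trivially from $(\Hsp,\|\cdot\|_k)$; this is cleaner and more reusable, at the cost of one extra abstract step.

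Two small corrections to your sketch. First, the appeal to Assumption~\ref{compact} is out of place: Theorem~\ref{RKHSstar} assumes only Assumption~\ref{assumeRKHS}, and compactness of the cross-variable operators $\bV_{ij}$ plays no role here. Your contradiction argument works without any compactness hypothesis: from $\|f_n\|_k=1$ and $\|f_n\|_\star\to0$ you get $\|f_n^1\|_{k^1}\to0$, hence $\|f_n^0\|_{k^0}\to1$; extract a convergent subsequence in the \emph{finite-dimensional} space $\Hsp^0$, use continuity of $\Hsp\hookrightarrow L^2$ (Lemma~\ref{LemmaRKHScont}) to pass $\Var[f_n]\to0$ to the limit, and conclude the limit is a nonzero constant in $\Hsp^0$, a contradiction. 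Second, and more simply, you can avoid the contradiction argument entirely and prove the reverse inequality constructively: $\|f^1\|_{k^1}^2\le\alpha^{-1}\|f\|_\star^2$ is immediate, and then $\Var[f^0(X)]\le 2\Var[f(X)]+2\Var[f^1(X)]\le 2\|f\|_\star^2+2\big(\sup_x k^1(x,x)\big)\alpha^{-1}\|f\|_\star^2$, after which equivalence of norms on the finite-dimensional $\Hsp^0$ gives $\|f^0\|_{k^0}^2\le c_0\Var[f^0(X)]$.
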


From Theorem~\ref{RKHSstar}, we see that it is reasonable to endow $\bHsp = \Hsp_1\times\cdots\times\Hsp_p$ with the new inner product
\[\langle\bPhi, \bPsi\rangle_\star := \sum_{j=1}^p\langle\phi_j, \psi_j\rangle_{\star_j},\]
where $\bPhi = (\phi_1, \ldots, \phi_p), \bPsi = (\psi_1, \ldots, \psi_p)$ are both elements of $\bHsp$.  

We now introduce the smoothing operator $\bS_{ij}$, defined through a ``generalized" regularized population regression problem: 
\begin{align}
\bS_{ij}: (\Hsp_j, \langle\cdot, \cdot\rangle_{\star_j}) &\longrightarrow (\Hsp_i, \langle\cdot, \cdot\rangle_{\star_i}), \label{smoothij}\\
\phi_j &\longmapsto\argmin_{f\in\Hsp_i} \left\{\Var[\phi_j(X_j)-f(X_i)] + \alpha_i \|f\|_{k^1_i}^2\right\}. \nonumber
\end{align}
Indeed, the problem reduces to the population version of the usual regularized regression problem, when $\phi_j$ and $f$ are both required to have mean zero.  With the establishment of existence and uniqueness of solution to the problem, the smoothing operator $\bS_{ij}$ in \eqref{smoothij} mapping $\phi_j$ to its ``smoothed" version in $\Hsp_i$ is well-defined.  In addition, it enjoys some nice properties:

\begin{Theorem} \label{milestone}
Suppose that $(\Hsp_j, \langle\cdot, \cdot\rangle_{k_j})$ is a reproducing kernel Hilbert space satisfying Assumption~\ref{assumeRKHS} with $\Hsp_j^0$ excluding constants, and $\alpha_j>0$, for $1\leq j\leq p$.  Then $\bS_{ij}$ is well-defined, bounded, linear, compact, and
\begin{equation}
\langle \phi_i, \bS_{ij}\phi_j\rangle_{\star_i} = \Cov[\phi_i(X_i), \phi_j(X_j)], \qquad\forall\phi_i\in\Hsp_i, \phi_j\in\Hsp_j.
\end{equation}
Moreover, 
\begin{equation}
\|\bS_{ij}\phi_j\|_{\star_i} \leq \big(\Var[\phi_j(X_j)]\big)^{1/2} \leq \|\phi_j\|_{\star_j}, \qquad\forall\phi_j\in\Hsp_j. \label{contraction}
\end{equation}
\end{Theorem}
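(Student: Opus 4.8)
Our strategy is to identify $\bS_{ij}$ explicitly as a resolvent‑type operator and then read off each claimed property. Introduce the self‑adjoint operator $\bA_i := \C_{ii} + \alpha_i\bP_i^1$ on $(\Hsp_i,\langle\cdot,\cdot\rangle_{k_i})$; it is bounded (as $\C_{ii}$ is bounded and $\|\bP_i^1\|\le 1$) and positive. Using the defining identities $\langle f,\C_{ii}g\rangle_{k_i} = \Cov[f(X_i),g(X_i)]$, $\langle f,\C_{ij}\phi_j\rangle_{k_i} = \Cov[f(X_i),\phi_j(X_j)]$, and $\langle f,g\rangle_{k_i^1} = \langle f,\bP_i^1 g\rangle_{k_i}$ (a consequence of \eqref{decompinnerprod}), one checks the two algebraic facts
\[
\langle f,g\rangle_{\star_i} = \langle f,\bA_i g\rangle_{k_i}, \qquad \Var[\phi_j(X_j)-f(X_i)] + \alpha_i\|f\|_{k_i^1}^2 = \Var[\phi_j(X_j)] - 2\langle f,\C_{ij}\phi_j\rangle_{k_i} + \langle f,\bA_i f\rangle_{k_i}.
\]
In particular $\|f\|_{\star_i}^2 = \langle f,\bA_i f\rangle_{k_i}$, so the $\star_i$‑geometry is exactly the $\bA_i$‑weighted $k_i$‑geometry.

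The key step, and the main obstacle, is to show $\bA_i$ is bounded below, $\bA_i \succeq c_i\I$ for some $c_i>0$ — equivalently that $\|\cdot\|_{\star_i}$ and $\|\cdot\|_{k_i}$ are equivalent norms on $\Hsp_i$. The upper bound $\|f\|_{\star_i}^2 \le (E[k_i(X_i,X_i)]+\alpha_i)\|f\|_{k_i}^2$ is immediate from Lemma~\ref{LemmaRKHScont} and the reproducing property. For the lower bound I would argue by contradiction: if $\|f_n\|_{k_i}=1$ and $\langle f_n,\bA_i f_n\rangle_{k_i}\to 0$, then $\Var[f_n(X_i)]\to 0$ and $\|\bP_i^1 f_n\|_{k_i}\to 0$; writing $f_n = f_n^0 + \bP_i^1 f_n$ with $f_n^0\in\Hsp_i^0$ gives $\|f_n^0\|_{k_i}\to 1$, and since $\dim\Hsp_i^0<\infty$ a subsequence of $(f_n^0)$ converges to some $g\in\Hsp_i^0$ with $\|g\|_{k_i}=1$; along that subsequence $f_n\to g$ in $\Hsp_i$, so continuity of $\phi\mapsto\Var[\phi(X_i)]=\langle\phi,\C_{ii}\phi\rangle_{k_i}$ forces $\Var[g(X_i)]=0$, whence $g$ is a constant function (Assumption~\ref{assumeRKHS}), contradicting the hypothesis that $\Hsp_i^0$ excludes constants. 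This is precisely where both standing hypotheses on $\Hsp_i^0$ (finite dimension, no constants) enter. It follows that $\bA_i$ is boundedly invertible with $\|\bA_i^{-1}\|\le c_i^{-1}$.

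Everything else is then routine. Completing the square in the second display gives $L(f) = \Var[\phi_j(X_j)] - \langle\C_{ij}\phi_j,\bA_i^{-1}\C_{ij}\phi_j\rangle_{k_i} + \langle f-\bA_i^{-1}\C_{ij}\phi_j,\ \bA_i(f-\bA_i^{-1}\C_{ij}\phi_j)\rangle_{k_i}$, and coercivity of $\bA_i$ shows the minimizer in \eqref{smoothij} exists and is the unique point $\bS_{ij}\phi_j = \bA_i^{-1}\C_{ij}\phi_j$; hence $\bS_{ij}$ is well‑defined and linear. The covariance identity is then immediate: $\langle\phi_i,\bS_{ij}\phi_j\rangle_{\star_i} = \langle\phi_i,\bA_i\bS_{ij}\phi_j\rangle_{k_i} = \langle\phi_i,\C_{ij}\phi_j\rangle_{k_i} = \Cov[\phi_i(X_i),\phi_j(X_j)]$. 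For \eqref{contraction}, apply this with $\phi_i = \bS_{ij}\phi_j$ and combine the Cauchy--Schwarz inequality for covariances with $\Var[g(X_i)]\le\|g\|_{\star_i}^2$:
\[
\|\bS_{ij}\phi_j\|_{\star_i}^2 = \Cov[(\bS_{ij}\phi_j)(X_i),\phi_j(X_j)] \le \|\bS_{ij}\phi_j\|_{\star_i}\,\big(\Var[\phi_j(X_j)]\big)^{1/2},
\]
so $\|\bS_{ij}\phi_j\|_{\star_i}\le(\Var[\phi_j(X_j)])^{1/2}\le\|\phi_j\|_{\star_j}$, which also yields boundedness with operator norm at most $1$. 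Finally, by the representation \eqref{Vij}, $\C_{ij} = \C_{ii}^{1/2}\bV_{ij}\C_{jj}^{1/2}$, and since $\C_{jj}$ is trace‑class its square root is compact, so $\C_{ij}$ is compact; hence $\bS_{ij} = \bA_i^{-1}\C_{ij}$ is compact as the composition of a bounded and a compact operator, and — the $\star$‑ and $k$‑norms on each factor being equivalent — it remains compact between the $\star$‑spaces.
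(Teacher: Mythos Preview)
Your argument is correct, and it follows a genuinely different route from the paper's. The paper never writes $\bS_{ij}$ as $\bA_i^{-1}\C_{ij}$; instead it invokes Riesz representation on $(\Hsp_i,\langle\cdot,\cdot\rangle_{\star_i})$ to produce the minimizer, then factors $\bS_{ij}=\bR\bT\I$ through $L^2(\mX_j,dP_j)$ and the dual $\Hsp_i^*$, checking linearity and boundedness on each factor. For compactness the paper proves that the inclusion $\I:(\Hsp_j,\|\cdot\|_{\star_j})\hookrightarrow L^2(\mX_j,dP_j)$ is compact via Arzel\`a--Ascoli (bounded $\star$-sequences are bounded in $\|\cdot\|_{k_j}$, hence uniformly bounded and equicontinuous), rather than using the trace-class property of $\C_{jj}$ as you do.

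Your approach is more explicit and algebraic: the formula $\bS_{ij}=\bA_i^{-1}\C_{ij}$ ties directly into the operators $\tV_{ij}=(\C_{ii}+\epsilon\bP_i^1)^{-1/2}\C_{ij}(\C_{jj}+\epsilon\bP_j^1)^{-1/2}$ that drive the consistency theory in Section~\ref{Consistency}, and your norm-equivalence step (the contradiction argument for $\bA_i\succeq c_i\I$) effectively reproves Theorem~\ref{RKHSstar} without the paper's auxiliary convention $\langle\cdot,\cdot\rangle_{k_j^0}:=\langle\cdot,\cdot\rangle_{P_j}$. The paper's decomposition, on the other hand, isolates the compactness of $\Hsp_j\hookrightarrow L^2$ as a standalone fact of independent interest and does not require knowing that $\C_{jj}$ is trace-class. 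Both routes use the finite-dimensionality of $\Hsp_i^0$ and the exclusion of constants at the same essential point: controlling the $\Hsp_i^0$-component when only $\Var$ and $\|\cdot\|_{k_i^1}$ are small.
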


Theorem~\ref{milestone} says that the operator $\bS_{ij}$ is not only well-defined, but acts as the ``covariance operator" with respect to $\langle\cdot, \cdot\rangle_{\star_i}$ (whereas the covariance operator $\C_{ij}$ in Section~\ref{Consistency} is defined with respect to to $\langle\cdot, \cdot\rangle_{k_i}$).  In addition, it is a contraction operation by \eqref{contraction}.  We are now ready to restate the kernelized APC problem as an eigenvalue problem with respect to the inner product $\langle\cdot, \cdot\rangle_\star$.  

\begin{Theorem}\label{restateAPCwithS}
Let $\bHsp = \Hsp_1\times\cdots\times\Hsp_p$, where $\Hsp_j$ is a reproducing kernel Hilbert space with respect to $\langle\cdot, \cdot\rangle_{\star_j}$, for $1\leq j\leq p$. Then the kernelized APC problem \eqref{statePenSample} can be restated as
\begin{equation}
\min_{\bPhi\in\bHsp}\langle\bPhi, \tS\bPhi\rangle_\star
\qquad\text{subject to}\qquad
\langle\bPhi, \bPhi\rangle_\star = 1, \label{SmoothAPC}
\end{equation}
where $\tS: \bHsp\longrightarrow\bHsp$ is defined by the component mapping
\begin{equation}
[\tS\bPhi]_i = \sum_{j\neq i}\bS_{ij}\phi_j + \phi_i, \label{compmap}
\end{equation}
and $\bS_{ij}$ is the smoothing operator as defined in \eqref{smoothij}.  Moreover, $\tS$ is self-adjoint, positive, and bounded above by $p$.
\end{Theorem}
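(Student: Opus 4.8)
The plan is to show that the reformulated problem \eqref{SmoothAPC} is \emph{literally} the same constrained minimization as \eqref{statePenSample}, after which the three operator-theoretic properties of $\tS$ drop out by inspecting the associated quadratic form. First I would note that, by the definition of $\langle\cdot,\cdot\rangle_{\star_j}$ from Theorem~\ref{RKHSstar} and of $\langle\cdot,\cdot\rangle_\star$, one has $\langle\bPhi,\bPhi\rangle_\star = \sum_{j=1}^p\big(\Var[\phi_j(X_j)] + \alpha_j\|\phi_j\|_{k_j^1}^2\big)$, so the constraint $\langle\bPhi,\bPhi\rangle_\star = 1$ in \eqref{SmoothAPC} is verbatim the constraint in \eqref{statePenSample}. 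It then remains only to match the two objective functionals.

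For the objective I would expand the variance of the additive function into diagonal and off-diagonal pieces,
\[
\Var\Big(\sum_{j=1}^p\phi_j\Big) = \sum_{j=1}^p\Var[\phi_j(X_j)] + \sum_{i\neq j}\Cov[\phi_i(X_i),\phi_j(X_j)],
\]
so that the objective of \eqref{statePenSample} equals $\langle\bPhi,\bPhi\rangle_\star + \sum_{i\neq j}\Cov[\phi_i(X_i),\phi_j(X_j)]$. Now invoke the key identity of Theorem~\ref{milestone}, namely $\Cov[\phi_i(X_i),\phi_j(X_j)] = \langle\phi_i,\bS_{ij}\phi_j\rangle_{\star_i}$, to rewrite the off-diagonal sum as $\sum_{i=1}^p\big\langle\phi_i,\sum_{j\neq i}\bS_{ij}\phi_j\big\rangle_{\star_i}$. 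Combining this with $\langle\bPhi,\bPhi\rangle_\star = \sum_i\langle\phi_i,\phi_i\rangle_{\star_i}$ gives $\sum_i\big\langle\phi_i,\phi_i+\sum_{j\neq i}\bS_{ij}\phi_j\big\rangle_{\star_i} = \sum_i\langle\phi_i,[\tS\bPhi]_i\rangle_{\star_i} = \langle\bPhi,\tS\bPhi\rangle_\star$, using the component map \eqref{compmap}. Hence the objectives agree and \eqref{statePenSample} and \eqref{SmoothAPC} are the same problem.

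For the operator properties: $\tS$ is well-defined, bounded, and linear because it is a finite sum of the $\bS_{ij}$ (each such by Theorem~\ref{milestone}) together with the identity. Self-adjointness I would read off the computation just performed, which shows $\langle\bPhi,\tS\bPsi\rangle_\star = \sum_i\langle\phi_i,\psi_i\rangle_{\star_i} + \sum_{i\neq j}\Cov[\phi_i(X_i),\psi_j(X_j)]$; this bilinear form is symmetric under $\bPhi\leftrightarrow\bPsi$ (relabel $i\leftrightarrow j$ in the double sum and use symmetry of covariance), so $\langle\bPhi,\tS\bPsi\rangle_\star = \langle\tS\bPhi,\bPsi\rangle_\star$ — equivalently $\bS_{ij}^\star = \bS_{ji}$ with respect to the $\star$-inner products. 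Positivity follows since $\langle\bPhi,\tS\bPhi\rangle_\star = \Var(\sum_j\phi_j) + \sum_j\alpha_j\|\phi_j\|_{k_j^1}^2 \ge 0$. Finally, for the bound $\tS\preceq p\I$ I would use the elementary inequality $\Var(\sum_{j=1}^p\phi_j)\le p\sum_{j=1}^p\Var[\phi_j(X_j)]$ (Cauchy--Schwarz in $L^2(P_{1:p})$) together with $p\ge 1$, which yields $\langle\bPhi,\tS\bPhi\rangle_\star \le p\big(\sum_j\Var[\phi_j(X_j)] + \sum_j\alpha_j\|\phi_j\|_{k_j^1}^2\big) = p\langle\bPhi,\bPhi\rangle_\star$.

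The argument is essentially bookkeeping; the one point requiring care is tracking which copy of $\Hsp_j$ carries which inner product when applying Theorem~\ref{milestone}, and recognizing that $\Cov[\phi_i(X_i),\phi_j(X_j)] = \langle\phi_i,\bS_{ij}\phi_j\rangle_{\star_i}$ is symmetric in the roles of $i$ and $j$. All the genuine analytic content — the RKHS structure under $\langle\cdot,\cdot\rangle_{\star_j}$ and the existence, uniqueness, boundedness, and covariance identity for the smoothing operators $\bS_{ij}$ — has already been established in Theorems~\ref{RKHSstar} and~\ref{milestone}, so no further limiting or compactness argument is needed here.
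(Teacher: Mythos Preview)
Your proof is correct and tracks the paper's argument closely for the reformulation, positivity, and self-adjointness. The one substantive difference is the upper bound $\tS\preceq p\I$: the paper bounds the operator norm directly, estimating $\|\tS\bPhi\|_\star^2 = \sum_i\|\phi_i + \sum_{j\neq i}\bS_{ij}\phi_j\|_{\star_i}^2$ via the triangle inequality and the contraction property $\|\bS_{ij}\phi_j\|_{\star_i}\le\|\phi_j\|_{\star_j}$ from Theorem~\ref{milestone}, whereas you bound the quadratic form using $\Var(\sum_j\phi_j)\le p\sum_j\Var(\phi_j)$ and then invoke (implicitly) that for a positive self-adjoint operator the norm equals the supremum of the Rayleigh quotient. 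Your route is arguably more economical here since it reuses the quadratic-form identity you just established; the paper's route has the minor advantage of yielding $\|\tS\|\le p$ without appealing to self-adjointness. Both are valid and short.
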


We see that the smoothing operators $\bS_{ij}$'s play an important role in the kernelized APC problem.  This is reminiscent of the role of orthogonal projection operators $\bP_{ij}$'s in the original APC problem in \cite{DonnellBuja1994}.

To address the issue of existence of eigenvalues, and hence the solution of \eqref{SmoothAPC} when $\bHsp$ is infinite-dimensional, we recall the usual compactness condition.  We know that $\bS_{ij}$ is compact from Theorem~\ref{milestone}.  Although this does not imply the compactness of $\tS$, we have the following result:
\begin{Theorem}\label{SminusIcompact}
The operator $\tS-\I: (\bHsp, \langle\cdot, \cdot\rangle_\star)\longrightarrow(\bHsp, \langle\cdot, \cdot\rangle_\star)$,
\[\tS -\I = \begin{pmatrix} \0 & \bS_{12} & \cdots & \bS_{1p}\\ \bS_{21} & \0 & \cdots & \bS_{2p} \\ 
\vdots & \vdots & \ddots & \vdots \\ \bS_{p1} & \bS_{p2} & \cdots & \0 \end{pmatrix},\] 
where $\bS_{ij}: (\Hsp_j, \langle\cdot, \cdot\rangle_{\star_j}) \longrightarrow (\Hsp_i, \langle\cdot, \cdot\rangle_{\star_i})$ is as given in \eqref{smoothij}, is compact.
\end{Theorem}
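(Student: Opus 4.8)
The plan is to exploit the $p\times p$ block structure of $\tS-\I$ displayed in the statement: its diagonal blocks vanish and its off-diagonal blocks are the smoothing operators $\bS_{ij}$, each of which is already known to be compact by Theorem~\ref{milestone}. Since $\bHsp$ is a \emph{finite} product, a block operator whose entries are compact is itself compact, and that is all we need. First I would record the ambient set-up: under the standing hypotheses (each $\Hsp_j$ an RKHS satisfying Assumption~\ref{assumeRKHS} with $\Hsp_j^0$ excluding constants, and $\alpha_j>0$), Theorem~\ref{RKHSstar} guarantees that $\langle\cdot,\cdot\rangle_{\star_j}$ is a genuine inner product making $(\Hsp_j,\langle\cdot,\cdot\rangle_{\star_j})$ a Hilbert space, so that $(\bHsp,\langle\cdot,\cdot\rangle_\star)$ is a Hilbert space as well; the canonical inclusion $\iota_j\colon(\Hsp_j,\langle\cdot,\cdot\rangle_{\star_j})\to(\bHsp,\langle\cdot,\cdot\rangle_\star)$ and the coordinate projection $\iota_j^*\colon(\bHsp,\langle\cdot,\cdot\rangle_\star)\to(\Hsp_j,\langle\cdot,\cdot\rangle_{\star_j})$ are bounded of norm one and satisfy $\iota_i^*\iota_j=\delta_{ij}\I$.

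With this notation, the component-map formula \eqref{compmap} yields the operator identity $\tS-\I=\sum_{i\neq j}\iota_i\bS_{ij}\iota_j^*$. By Theorem~\ref{milestone} every $\bS_{ij}$ with $i\neq j$ is compact, hence each summand $\iota_i\bS_{ij}\iota_j^*$ is compact (a compact operator remains compact after pre- and post-composition with bounded operators), and a finite sum of compact operators is compact; therefore $\tS-\I$ is compact. If a self-contained sequential argument is preferred: given a bounded sequence $\{\bPhi^{(n)}\}$ in $\bHsp$, each coordinate sequence $\{\phi_j^{(n)}\}_n$ is bounded in $\Hsp_j$, so compactness of the finitely many operators $\bS_{ij}$ lets one pass to successive subsequences and extract a single subsequence along which $\bS_{ij}\phi_j^{(n)}$ converges in $\Hsp_i$ for every pair $i\neq j$; then, by \eqref{compmap}, the $i$-th coordinate $\sum_{j\neq i}\bS_{ij}\phi_j^{(n)}$ of $(\tS-\I)\bPhi^{(n)}$ converges for each $i$, and since $\bHsp$ is a finite product this is exactly convergence of $(\tS-\I)\bPhi^{(n)}$ in $\langle\cdot,\cdot\rangle_\star$.

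I do not expect a genuine obstacle here; the delicate points are bookkeeping rather than ideas. One must confirm that $\langle\cdot,\cdot\rangle_{\star_j}$ is truly positive definite, which is precisely where the hypothesis that $\Hsp_j^0$ excludes constants is used — otherwise $\Cov[f(X_j),f(X_j)]+\alpha_j\|f\|_{k_j^1}^2$ would vanish on a nonzero constant function, $\langle\cdot,\cdot\rangle_\star$ would be only a semi-inner product, and ``compactness'' would not be the right notion. One must also note that finiteness of $p$ is what makes the block-matrix argument (equivalently, the iterated subsequence extraction and coordinatewise convergence) go through with no further hypotheses; this is exactly why $\tS$ itself, with its identity diagonal, need not be compact, as already flagged before the statement.
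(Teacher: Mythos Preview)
Your proof is correct and follows essentially the same approach as the paper: both decompose $\tS-\I$ into a finite sum of block pieces built from the compact operators $\bS_{ij}$ (the paper groups them column-by-column as $\bT_j=\sum_{i\neq j}\iota_i\bS_{ij}\iota_j^*$, whereas you sum over all individual blocks) and conclude compactness from Theorem~\ref{milestone} together with closure of compact operators under finite sums and composition with bounded operators. The only cosmetic difference is that the paper phrases the final step via relative compactness of images of unit balls in the finite product, while you invoke the ideal property directly; the content is the same.
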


The compactness attribute implies that $\tS-\I$ has an eigendecomposition with eigenvalues that can only accumulate at 0, which in turn implies that the only possible accumulation point of the eigenvalues of $\tS$ is $+1$.  To this end, we see that kernelized APC can be obtained as the eigenfunction corresponding to the smallest eigenvalue of $\tS$.  In addition, we see that $+1$ is a natural dividing line between smallest and largest kernelized APCs.  

\subsection{Power Algorithm for Kernelized APCs}
Applying the knowledge that kernelized APCs are the eigenfunctions of $\tS$ from a population standpoint, we execute the power algorithm on $\gamma\I-\tS$ to solve for the (smallest) kernelized APC.  The pseudocode is given below.  Here $\gamma$ is taken to be $(p+1)/2$ since the spectrum of $\tS$ is bounded above by $p$, as claimed in Theorem~\ref{restateAPCwithS}.  We see that solving for kernelized APC reduces to iterative smoothing of each component $\phi_j$ against $X_i$, for $j\neq i$.  

\begin{algorithm}[htp]
\caption{Computation of kernelized APC}
\begin{algorithmic}
\STATE Let $\gamma=(p+1)/2$.  Initialize $t = 0$, $\bPhi^{[0]} = (\phi_1^{[0]}, \phi_2^{[0]}, \ldots, \phi_p^{[0]})$.
\REPEAT
\FOR{$i=1, \ldots, p$}
\STATE $\phi_i\leftarrow\gamma\phi_i^{[t]}-(\sum_{j\neq i}\bS_{ij}\phi_j^{[t]}+\phi_i^{[t]})$ 
\COMMENT{Update steps}
\ENDFOR
\STATE Standardize with $c = (\sum\|\phi_i\|_{\star_i}^2)^{-1/2}$\textcolor{ZurichRed}
\STATE $(\phi_1^{[t+1]}, \phi_2^{[t+1]}, \ldots, \phi_p^{[t+1]})\leftarrow(c\phi_1, c\phi_2, \ldots, c\phi_p)$
\STATE $t \leftarrow t+1$
\UNTIL{$\Var\sum\phi_i^{[t]}+\sum\alpha_i\|\phi_i^{[t]}\|_{k_i^1}^2$ converges}
\end{algorithmic}
\label{PowerAlgo}
\end{algorithm}

To compute the $k^{th}$ smallest kernelized APCs for $k>1$, we just need to add a series of Gram-Schmidt steps 
\[\phi_i\leftarrow\phi_i-\bigg(\sum_{j=1}^p\langle\phi_{\ell, j}, \phi_j^{[t]}\rangle_{\star_j}\bigg)\phi_{\ell, i}, \qquad 1\leq\ell\leq k-1\]
following the update steps in Algorithm 1, to ensure that the orthogonality requirements \eqref{orthogonal} are satisfied.  Here $\bPhi_\ell = (\phi_{\ell, 1}, \cdots, \phi_{\ell, p})$ is the previously obtained $\ell^{th}$ smallest kernelized APC, where $1\leq\ell\leq k-1$. 

The power algorithm is guaranteed to converge under mild conditions:
\begin{Proposition}\label{powerconv}
Suppose that the smallest eigenvalue of $\tS$ is of multiplicity one with corresponding unit eigenfunction $\tilde{\bPhi}$.  If the power algorithm is initialized with $\bPhi^{[0]}$ that has a nontrivial projection onto $\tilde{\bPhi}$, then the power algorithm converges.
\end{Proposition}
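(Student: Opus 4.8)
The plan is to recognize Algorithm~\ref{PowerAlgo} as the classical power iteration applied to the bounded self-adjoint operator $\bT:=\gamma\I-\tS$ on $(\bHsp,\langle\cdot,\cdot\rangle_\star)$, to identify its dominant eigenpair, and then to invoke the elementary convergence theorem for power iteration. First I would unwind one pass of the \textsc{repeat} loop: by the component map~\eqref{compmap}, the update step replaces $\phi_i^{[t]}$ by $\gamma\phi_i^{[t]}-[\tS\bPhi^{[t]}]_i$, so the vector produced before standardization is $\gamma\bPhi^{[t]}-\tS\bPhi^{[t]}=\bT\bPhi^{[t]}$, and the standardization step divides it by $c^{-1}=\big(\sum_i\|\phi_i\|_{\star_i}^2\big)^{1/2}=\|\bT\bPhi^{[t]}\|_\star$. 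Hence $\bPhi^{[t+1]}=\bT\bPhi^{[t]}/\|\bT\bPhi^{[t]}\|_\star$, and by induction $\bPhi^{[t]}=\bT^t\bPhi^{[0]}/\|\bT^t\bPhi^{[0]}\|_\star$; the iterates are well defined because the nontrivial-projection hypothesis forces $\bT^t\bPhi^{[0]}\neq\0$ for all $t$, as the spectral picture below makes clear. So the claim reduces to convergence of the normalized powers of $\bT$.

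Next I would pin down the spectrum of $\bT$. By Theorem~\ref{restateAPCwithS}, $\tS$ is self-adjoint, positive, and $\tS\preceq p\I$, so $\bT$ is self-adjoint with spectrum in $[\gamma-p,\gamma]=[-(p-1)/2,(p+1)/2]$; by Theorem~\ref{SminusIcompact}, $\tS-\I$ is compact, so $\bT=(\gamma-1)\I-(\tS-\I)$ is a scalar operator plus a compact self-adjoint one, and the spectral theorem provides a $\langle\cdot,\cdot\rangle_\star$-orthonormal eigenbasis of $\bHsp$ whose eigenvalues accumulate only at $\gamma-1=(p-1)/2$. Listing the eigenvalues of $\tS$ as $\lambda_{\min}=\lambda^{(1)}<\lambda^{(2)}\le\cdots$ (the first inequality strict by the simplicity hypothesis), the eigenvalues of $\bT$ are $\nu_k=\gamma-\lambda^{(k)}$, with $\nu_1=\gamma-\lambda_{\min}$ having one-dimensional eigenspace $\mathrm{span}\{\tilde\bPhi\}$. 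I would then check that $\nu_1$ is \emph{strictly} dominant: evaluating the Rayleigh quotient $\langle\bPhi,\tS\bPhi\rangle_\star/\langle\bPhi,\bPhi\rangle_\star$ on a $\bPhi$ supported on a single coordinate gives the value $1$, so $\lambda_{\min}\le 1$ ($<1$ in the concurvity situation of interest), hence $\nu_1>(p-1)/2\ge 0$; while for $k\ge 2$, $\lambda^{(k)}\ge 1$ gives $|\nu_k|\le (p-1)/2<\nu_1$ and $\lambda_{\min}<\lambda^{(k)}<1$ gives $0<\nu_k<\nu_1$. Thus $\nu_1>0$ is simple and $|\nu_1|>|\nu_k|$ for all $k\ge 2$. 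I expect this placement step to be the main obstacle: it is exactly where one must rule out a spectral competitor of equal magnitude at the negative end, which is what the choice $\gamma=(p+1)/2$ together with $\tS\preceq p\I$ (the ``$+1$ as a dividing line'' remark) is designed to guarantee.

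Finally, with a simple, strictly dominant, positive eigenvalue $\nu_1$ and unit eigenfunction $\tilde\bPhi$, convergence is routine. Decompose $\bPhi^{[0]}=c_1\tilde\bPhi+\bPsi$ with $\bPsi$ $\star$-orthogonal to $\tilde\bPhi$ and $c_1=\langle\bPhi^{[0]},\tilde\bPhi\rangle_\star\neq 0$ by hypothesis. The closed subspace $\{\tilde\bPhi\}^{\perp}$ is $\bT$-invariant with operator norm $\rho:=\sup\{|\nu|:\nu\in\sigma(\bT),\ \nu\neq\nu_1\}<\nu_1$, so $\bT^t\bPhi^{[0]}=c_1\nu_1^t\tilde\bPhi+\bT^t\bPsi$ with $\|\bT^t\bPsi\|_\star\le\rho^t\|\bPsi\|_\star$; dividing by $\nu_1^t$ gives $\nu_1^{-t}\bT^t\bPhi^{[0]}\to c_1\tilde\bPhi\neq\0$, and therefore $\bPhi^{[t]}=\bT^t\bPhi^{[0]}/\|\bT^t\bPhi^{[0]}\|_\star\to(c_1/|c_1|)\tilde\bPhi$ in $(\bHsp,\langle\cdot,\cdot\rangle_\star)$ (there is no sign oscillation since $\nu_1>0$). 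In particular the iterates converge, and since the stopping functional $\bPhi\mapsto\Var(\sum_i\phi_i)+\sum_i\alpha_i\|\phi_i\|_{k_i^1}^2=\langle\bPhi,\tS\bPhi\rangle_\star$ is continuous, the quantity monitored by the \textsc{until} clause converges (to $\lambda_{\min}$). The remaining steps---identifying the iteration, invoking the spectral theorem, and the power-iteration estimate---are bookkeeping resting on Theorems~\ref{restateAPCwithS} and~\ref{SminusIcompact}.
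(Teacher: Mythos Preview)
Your proof is correct and follows essentially the same route as the paper: recognize the iteration as power iteration on $\gamma\I-\tS$, decompose the initial vector along $\tilde\bPhi$ and its $\star$-orthogonal complement, and use a strict spectral gap to force the orthogonal part to die. The paper simply \emph{asserts} the key bound $\|\bM\bPsi\|_\star\le r\|\bPsi\|_\star$ with $r<\lambda$ for $\bPsi\perp\tilde\bPhi$, whereas you actually justify it via the Rayleigh-quotient argument showing $\lambda_{\min}(\tS)<1$ under the simplicity hypothesis together with $\tS\preceq p\I$; in that respect your write-up is slightly more complete than the paper's own proof.
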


For implementation details see Appendix~\ref{sec:power-details}.



\section{Concluding Remarks}
\label{conclusion}

APCs are a useful tool for exploring additive degeneracy in data.  In
this paper, we propose the estimation of APCs using the shrinkage
regularization approach through kernelizing, and we establish the
consistency of the resulting kernelized sample APCs.

It would be interesting to generalize our study of APCs in several
directions.  Due to the nonparametric nature of APC estimation, so far
we have implicitly assumed that the sample size $n$ is large relative
to the total number of variables $p$.  It would be interesting to
extend APCs to the high-dimensional setting where $p$ can be
comparable to $n$.  It would then be natural to impose additional
structure such as sparsity in a flavor similar to the sparse additive
models proposed by \cite{Ravikumar2009} in the regression framework.
It would also be interesting to study the largest APCs and to examine
whether it provides meaningful interpretation through dimensionality
reduction as in conventional PCA.  

Estimation of APCs is non-trivial due to its unsupervised learning
nature.  We have left open the problems of kernel choice and how to
optimally and differentially select smoothing parameters for different
variables within an APC and also across different APCs.


\vskip 0.2in
\noindent{\bf Acknowledgements: }{We would like to acknowledge support for this project from the
  National Science Foundation under NSF grant DMS-1310795 to A.B. and
  NSF career grant DMS-1352060 to Z.M. as well as a grant from the
  Simons Foundation (SFARI) to A.B.  We also thank Ming Yuan for
  valuable discussions.}


\vskip 0.2in
\bibliographystyle{chicago}
\bibliography{Ref.bib}

\newpage
\appendix
\newpage



\section{Consistency Proof of Section~\ref{Consistency}}
\label{consproof}

In this section, we give the consistency proof for kernelized sample APCs.  We first consider some useful facts from functional analysis in Section~\ref{FunctionalFacts}.  We then present in Section~\ref{PrepFacts} some basic properties of the operator $\C_{jj}+\epsilon_n\bP_j^1$ and the square root of its inverse (which exists under mild conditions), which forms the building blocks of the proof of two key lemmas and main theorems in Section~\ref{mainLemmaProof} and Section~\ref{mainThmProof}.
 
\subsection{Some Facts from Functional Analysis}\label{FunctionalFacts}
\begin{Lemma}\label{lambdamin}
Let $A, B$ be positive, self-adjoint operators on a Hilbert space.  Then
\[\|A^{1/2}-B^{1/2}\| \leq \frac{\|A-B\|}{\lambda_{\text{\rm min}}(A^{1/2}) + \lambda_{\text{\rm min}}(B^{1/2})}.\]
\end{Lemma}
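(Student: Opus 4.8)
The plan is to work directly with the square roots $X := A^{1/2}$ and $Y := B^{1/2}$, which are bounded, self-adjoint and positive, rather than with $A$ and $B$, and to set $D := X - Y$. The starting point is the ``polarized'' identity
\[
2(A-B) \;=\; 2(X^2-Y^2) \;=\; (XD + DX) + (DY + YD),
\]
which follows by adding the two elementary identities $XD + DY = X^2 - Y^2$ and $DX + YD = X^2 - Y^2$ and regrouping the sum. Taking the inner product with a unit vector $f$ and using that $X$ and $D$ are self-adjoint (so that $\langle f, (XD+DX)f\rangle = 2\langle Xf, Df\rangle$ in the real Hilbert spaces relevant here, with the obvious $\mathrm{Re}$ inserted in the complex case, and likewise for $DY+YD$), this reduces to the key identity
\[
\langle f, (A-B)f\rangle \;=\; \langle (X+Y)f,\; Df\rangle .
\]

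Next I would exploit that $D$ is bounded and self-adjoint, so $\|D\| = \sup_{\|f\|=1}|\langle f, Df\rangle|$. Pick unit vectors $f_n$ with $|\langle f_n, Df_n\rangle| \to \|D\|$; since the asserted inequality is symmetric under $A \leftrightarrow B$ (equivalently $D \mapsto -D$), after passing to a subsequence I may assume $\langle f_n, Df_n\rangle \to \|D\|$. Then the standard estimate
\[
\|Df_n - \|D\|\,f_n\|^2 \;\le\; \|D\|^2 - 2\|D\|\,\langle f_n, Df_n\rangle + \|D\|^2 \;\longrightarrow\; 0
\]
shows the $f_n$ are approximate eigenvectors of $D$ for $\|D\|$, i.e. $Df_n = \|D\|\,f_n + o(1)$. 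Substituting into the key identity and using boundedness of $X+Y$ gives $\langle f_n, (A-B)f_n\rangle = \|D\|\,\langle (X+Y)f_n, f_n\rangle + o(1)$.

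To finish I would bound both sides: on the left, $|\langle f_n, (A-B)f_n\rangle| \le \|A-B\|$; on the right, since $X$ and $Y$ are positive, $\langle (X+Y)f_n, f_n\rangle \ge \lambda_{\min}(X) + \lambda_{\min}(Y) = \lambda_{\min}(A^{1/2}) + \lambda_{\min}(B^{1/2})$, where $\lambda_{\min}(\cdot)$ denotes the bottom of the spectrum. Letting $n\to\infty$ yields
\[
\|A-B\| \;\ge\; \|A^{1/2}-B^{1/2}\|\,\big(\lambda_{\min}(A^{1/2}) + \lambda_{\min}(B^{1/2})\big),
\]
which rearranges to the claim, the degenerate case $\lambda_{\min}(A^{1/2})+\lambda_{\min}(B^{1/2}) = 0$ making the stated bound vacuous. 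I expect the only delicate point to be the approximate-maximizer step — that $\|D\|$ is attained in the quadratic-form sense and that this forces $Df_n - \|D\|f_n \to 0$ — with everything else being routine operator algebra.

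A limit-free alternative, which I would at least mention, uses the resolvent representation $A^{1/2} = \tfrac1\pi\int_0^\infty A(s\I+A)^{-1}\,s^{-1/2}\,ds$, whence $A^{1/2}-B^{1/2} = \tfrac1\pi\int_0^\infty s\,(s\I+B)^{-1}(A-B)(s\I+A)^{-1}\,s^{-1/2}\,ds$; applying $\|(s\I+A)^{-1}\| \le (s+\lambda_{\min}(A))^{-1}$ and the analogous bound for $B$, and evaluating the scalar integral $\tfrac1\pi\int_0^\infty \sqrt{s}\,\big[(s+a^2)(s+b^2)\big]^{-1}\,ds = (a+b)^{-1}$ with $a = \lambda_{\min}(A^{1/2})$, $b = \lambda_{\min}(B^{1/2})$, gives the inequality directly.
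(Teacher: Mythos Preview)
Your argument is correct and considerably more direct than the paper's. The paper follows \cite{Schmitt1992}: writing $D=B-A$ and $X=B^{1/2}-A^{1/2}$, it starts from the Sylvester-type relation $XA^{1/2}+B^{1/2}X=D$, introduces resolvent factors $E_1=(qI+A^{1/2})^{-1}(qI-A^{1/2})$, $E_2=(qI+B^{1/2})^{-1}(qI-B^{1/2})$ and a remainder $F$ to obtain $X=E_2XE_1+F$, bounds $\|E_i\|<1$ and $\|F\|$ in terms of $q$ and the $\lambda_{\min}$'s, solves $\|X\|\le \|F\|/(1-\|E_1\|\|E_2\|)$, and finally sends $q\to\infty$.

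Your route avoids the auxiliary parameter entirely: the quadratic-form identity $\langle f,(A-B)f\rangle=\langle (A^{1/2}+B^{1/2})f,\,(A^{1/2}-B^{1/2})f\rangle$ together with the approximate-eigenvector trick for a bounded self-adjoint $D$ gives the bound in one stroke. The only place that needs care is exactly where you flagged it --- that $\|Df_n\|\le\|D\|$ forces $Df_n-\|D\|f_n\to 0$ once $\langle f_n,Df_n\rangle\to\|D\|$ --- and that is standard. The integral-representation alternative you sketch is also valid and is the textbook route to operator-monotonicity bounds; it has the advantage of being completely limit-free but requires knowing the closed form of the scalar integral. By contrast, the Schmitt argument the paper uses is heavier here but generalizes to other fractional powers and to Sylvester equations $XA+BX=D$ where your symmetrization trick is unavailable.
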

\begin{proof}
The proof essentially follows from \cite{Schmitt1992}.  Let $D = B-A$ and $X = B^{1/2}-A^{1/2}$.  It is straightforward to verify that
\[XA^{1/2}+B^{1/2}X = D.\]
It then follows that
\begin{equation}
X = E_2XE_1 + F, \label{eqalg}
\end{equation}
where
\begin{align*}
E_1 &= (qI+A^{1/2})^{-1}(qI-A^{1/2}), \\
E_2 &=  (qI+B^{1/2})^{-1}(qI-B^{1/2}), \\
F &= 2q(qI + B^{1/2})^{-1}D(qI+A^{1/2})^{-1},
\end{align*}
for all $q>0$.  Applying triangle inequality to \eqref{eqalg} gives 
\[\|X\| \leq \|E_2\|\|X\|\|E_1\| + \|F\|,\]
or, equivalently,
\begin{equation}
\|X\| \leq \frac{\|F\|}{1-\|E_1\|\|E_2\|}. \label{eve}
\end{equation}

Our goal now is to find upper bounds for $\|E_1\|\|E_2\|$ and $\|F\|$.  An upper bound for $\|F\|$ is given by
\begin{align}
\|F\| &\leq 2q\|D\|\|(qI+A^{1/2})^{-1}\|\|(qI + B^{1/2})^{-1}\| \nonumber\\
&\leq \frac{2q\|D\|}{[q+\lambda_{\text{min}}(A^{1/2})][q+\lambda_{\text{min}}(B^{1/2})]}. \label{firstlambdamin}
\end{align}
We now get an upper bound for $\|E_1\|^2$.  First note that
\begin{align*}
\frac{\|E_1x\|^2}{\|x\|^2} &= \frac{\|(qI+A^{1/2})^{-1}(qI-A^{1/2})x\|^2}{\|x\|^2}, \qquad\text{let }y = (qI+A^{1/2})^{-1}x \nonumber\\
&= \frac{\|(qI+A^{1/2})^{-1}(qI-A^{1/2})(qI+A^{1/2})y\|^2}{\|(qI+A^{1/2})y\|^2} \\
&= \frac{\|(qI-A^{1/2})y\|^2}{\|(qI+A^{1/2})y\|^2} \\
&= \frac{q^2\|y\|^2 - 2q\langle y, A^{1/2}y\rangle + \|A^{1/2}y\|^2}{q^2\|y\|^2 + 2q\langle y, A^{1/2}y\rangle + \|A^{1/2}y\|^2},
\end{align*}
where the third equality is due to commutativity between $qI-A^{1/2}$ and $qI+A^{1/2}$.  It then follows that
\begin{align}
\|E_1\|^2 &=\max_{x\neq 0}\frac{\|E_1x\|^2}{\|x\|^2} \nonumber\\
&= \max_{y\neq 0}\frac{q^2\|y\|^2 - 2q\langle y, A^{1/2}y\rangle + \|A^{1/2}y\|^2}{q^2\|y\|^2 + 2q\langle y, A^{1/2}y\rangle + \|A^{1/2}y\|^2} \nonumber\\
&\leq \max_{y\neq 0}\frac{q^2 - 2q\lambda_{\text{min}}(A^{1/2}) + \|A^{1/2}y\|^2/\|y\|^2}{q^2 + 2q\lambda_{\text{min}}(A^{1/2}) + \|A^{1/2}y\|^2/\|y\|^2} \nonumber\\
&= \max_{y\neq 0}\bigg[1-\frac{4q\lambda_{\text{min}}(A^{1/2})}{q^2 + 2q\lambda_{\text{min}}(A^{1/2}) + \|A^{1/2}y\|^2/\|y\|^2}\bigg] \nonumber\\
&= 1-\frac{4q\lambda_{\text{min}}(A^{1/2})}{q^2 + 2q\lambda_{\text{min}}(A^{1/2}) + \lambda^2_{\text{max}}(A^{1/2})}\bigg] \nonumber\\
&= 1 - \frac{4\lambda_{\text{min}}(A^{1/2})}{q} + O\bigg(\frac{1}{q^2}\bigg), \label{secondlambdamin}
\end{align}
the inequality holds because the rational expression is monotonically decreasing in $\langle y, A^{1/2}y\rangle$.  Similarly, an upper bound for $\|E_2\|^2$ is
\begin{equation}
\|E_2\|^2 \leq 1 - \frac{4\lambda_{\text{min}}(B^{1/2})}{q} + O\bigg(\frac{1}{q^2}\bigg). \label{thirdlambdamin}
\end{equation}
So we see that $\|E_1\|\|E_2\|<1$ when $q$ is sufficiently large.  Combining \eqref{eve}, \eqref{firstlambdamin}, \eqref{secondlambdamin} and \eqref{thirdlambdamin} gives
\begin{align*}
\|X\| &\leq \frac{\|F\|}{1-\|E_1\|\|E_2\|} \ \leq\ \frac{\|F\|}{1-\half(\|E_1\|^2 + \|E_2\|^2)} \\ 
&\leq \frac{2q\|D\|}{[q+\lambda_{\text{min}}(A^{1/2})][q+\lambda_{\text{min}}(B^{1/2})]}\cdot\frac{1}{2q^{-1}[\lambda_{\text{min}}(A^{1/2})+\lambda_{\text{min}}(B^{1/2})+O(q^{-1})]} \\
&\leq \frac{\|D\|}{\lambda_{\text{min}}(A^{1/2})+\lambda_{\text{min}}(B^{1/2})+O(q^{-1})}.
\end{align*}
Sending $q$ to infinity gives the desired bound.
\end{proof}

The following lemmas are from Appendix B of \cite{Fukumizu2007}, and correspond to Lemma 8 and 9 in \cite{Fukumizu2007}, respectively.
\begin{Lemma}\label{Fukumizu8}
Let $A$ and $B$ be positive, self-adjoint operators on a Hilbert space, with $0\preceq A\preceq\lambda\I$ and $0\preceq B\preceq\lambda\I$ for some $\lambda>0$.  Then
\[\|A^{3/2}-B^{3/2}\| \leq 3\lambda^{1/2}\|A-B\|.\]
\end{Lemma}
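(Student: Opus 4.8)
The plan is to linearize $A\mapsto A^{3/2}$ via an integral/resolvent representation, so that the perturbation $A-B$ can be extracted from $A^{3/2}-B^{3/2}$ flanked by operators of controlled norm. I would begin from the scalar identity
\[
x^{3/2} \;=\; \frac{1}{\pi}\int_0^\infty \frac{x^2}{x+s}\, s^{-1/2}\, ds, \qquad x\ge 0,
\]
which follows from the substitution $s=xu$ together with $\int_0^\infty u^{-1/2}(1+u)^{-1}\,du=\pi$. Since $0\preceq A\preceq\lambda\I$, the operator-valued integrand $s\mapsto A^2(A+s\I)^{-1}s^{-1/2}$ has norm at most $\min(\lambda,\lambda^2/s)\,s^{-1/2}$, which is integrable on $(0,\infty)$; the bounded Borel functional calculus (or: test against pairs of vectors and apply the scalar identity together with Fubini, justified by that integrability bound) then gives
\[
A^{3/2} \;=\; \frac{1}{\pi}\int_0^\infty A^2(A+s\I)^{-1}\, s^{-1/2}\, ds ,
\]
and likewise for $B$. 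So it suffices to bound $\|A^2(A+s\I)^{-1}-B^2(B+s\I)^{-1}\|$ uniformly in $s$ and then integrate.

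For the integrand difference, I would write $R_A:=(A+s\I)^{-1}$ and $R_B:=(B+s\I)^{-1}$. The identity $A^2R_A = A - s\I + s^2R_A$ (verified by right-multiplying by $A+s\I$), together with the resolvent identity $R_A-R_B = -R_A(A-B)R_B$, gives
\[
A^2R_A - B^2R_B \;=\; (A-B) - s^2 R_A(A-B)R_B .
\]
I would then use $\I - sR_A = AR_A$ and $\I - sR_B = BR_B$ to split this as
\[
(A-B) - s^2R_A(A-B)R_B \;=\; (AR_A)(A-B) \;+\; (sR_A)(A-B)(BR_B) .
\]
Because the spectra of $A$ and $B$ lie in $[0,\lambda]$, the functional calculus gives $\|AR_A\|,\|BR_B\|\le \lambda/(\lambda+s)$ and $\|sR_A\|\le 1$, so $\|A^2R_A-B^2R_B\|\le \frac{2\lambda}{\lambda+s}\,\|A-B\|$.

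The final step is to integrate: using $\int_0^\infty s^{-1/2}(\lambda+s)^{-1}\,ds = \pi/\sqrt\lambda$ (substitute $s=t^2$),
\[
\|A^{3/2}-B^{3/2}\| \;\le\; \frac{1}{\pi}\int_0^\infty \frac{2\lambda}{\lambda+s}\,\|A-B\|\, s^{-1/2}\,ds \;=\; 2\sqrt\lambda\,\|A-B\| \;\le\; 3\sqrt\lambda\,\|A-B\| ,
\]
which is in fact slightly sharper than the stated bound. The step I expect to be the main obstacle is the middle one. A naive expansion such as $A^{3/2}-B^{3/2}=A^{1/2}(A-B)+(A^{1/2}-B^{1/2})B$, combined with the operator inequality $\|A^{1/2}-B^{1/2}\|\le\|A-B\|^{1/2}$, only produces the bound $\lambda^{1/2}\|A-B\|+\lambda\|A-B\|^{1/2}$, which is not $O(\|A-B\|)$ when $\|A-B\|$ is small; so some genuinely two-sided bounded factorization of the perturbation is needed. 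The resolvent decomposition supplies exactly that at every scale $s$, and the care lies in verifying the operator identities ($A^2R_A=A-s\I+s^2R_A$, $\I-sR_A=AR_A$, and the resulting formula for $A^2R_A-B^2R_B$) rather than in the estimates, which are then immediate from the functional calculus. The scalar identity, the lift to operators, and the one-line integration are routine.
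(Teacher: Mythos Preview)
Your proof is correct, and in fact yields the sharper constant $2\sqrt\lambda$ rather than $3\sqrt\lambda$. The algebraic identities check: from $A^2R_A = A - s\I + s^2R_A$ and the resolvent identity one obtains $A^2R_A - B^2R_B = (A-B) - s^2R_A(A-B)R_B$, and your two-term rewrite as $(AR_A)(A-B) + (sR_A)(A-B)(BR_B)$ is exactly right, since $AR_A + sR_A = \I$ lets the cross terms cancel. The norm estimates $\|AR_A\|,\|BR_B\|\le \lambda/(\lambda+s)$ and $\|sR_A\|\le 1$ are immediate from the functional calculus, and the final integral is computed correctly.

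As for comparison: the paper does not supply its own proof of this lemma; it simply cites it as Lemma~8 of Fukumizu et al.\ (2007). Your resolvent/integral representation argument is the standard route for Lipschitz-type bounds on fractional powers (and is essentially the approach taken in that reference as well), so there is no meaningful methodological difference to report---you have filled in what the paper left as a citation, with a slightly better constant.
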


\begin{Lemma}\label{Fukumizu9}
Let $\Hsp_1, \Hsp_2$ be Hilbert spaces, and let $\Hsp$ be a dense linear subspace of $\Hsp_2$.  Suppose that $A_n$ and $A$ are bounded operators on $\Hsp_2$, and $B: \Hsp_1\longrightarrow\Hsp_2$ is compact.  If
\[A_nx\longrightarrow Ax\]
for all $x\in\Hsp$, and
\[\sup_n\|A_n\| <\infty,\]
then $\|A_nB-AB\|\longrightarrow 0$.
\end{Lemma}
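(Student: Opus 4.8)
The plan is to first extend the pointwise convergence $A_n x \to A x$ from the dense subspace $\Hsp$ to all of $\Hsp_2$, and then to use the compactness of $B$ to upgrade this strong operator convergence to norm convergence of the compositions $A_n B$.

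\emph{First step.} I would put $M := \sup_n \|A_n\| < \infty$; taking pointwise limits on $\Hsp$ gives $\|A\| \le M$ (and $A$ is bounded by hypothesis anyway). For $y \in \Hsp_2$ and $\eta > 0$, density gives $x \in \Hsp$ with $\|y - x\| < \eta$, and the triangle inequality yields $\|A_n y - A y\| \le 2M\eta + \|A_n x - A x\|$. Hence $\limsup_n \|A_n y - A y\| \le 2M\eta$, and letting $\eta \downarrow 0$ shows $A_n y \to A y$ for every $y \in \Hsp_2$.

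\emph{Second step.} Let $K := \overline{\{B u : \|u\|_{\Hsp_1} \le 1\}}$, which is a compact subset of $\Hsp_2$ because $B$ is a compact operator. I would prove $\sup_{v \in K}\|(A_n - A)v\| \to 0$ by contradiction: otherwise there are $\epsilon > 0$, a subsequence $(n_k)$, and $v_k \in K$ with $\|(A_{n_k} - A)v_k\| \ge \epsilon$; by compactness of $K$ I can pass to a further subsequence with $v_k \to v \in K$, and then
\[
\|(A_{n_k} - A)v_k\| \le \|A_{n_k} - A\|\,\|v_k - v\| + \|(A_{n_k} - A)v\| \le 2M\|v_k - v\| + \|(A_{n_k} - A)v\|,
\]
which tends to $0$ by the first step together with $v_k \to v$, contradicting $\|(A_{n_k} - A)v_k\| \ge \epsilon$.

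\emph{Conclusion.} Since $v \mapsto \|(A_n - A)v\|$ is continuous, $\|A_n B - A B\| = \sup_{\|u\|_{\Hsp_1} \le 1}\|(A_n - A)Bu\| = \sup_{v \in K}\|(A_n - A)v\| \to 0$. The only substantive idea is the second step: a uniformly bounded, strongly convergent sequence of operators converges \emph{uniformly} on the relatively compact set $K$, whereas strong convergence alone would not give $\|A_n B - A B\| \to 0$. I foresee no real difficulty, but the point to watch is that the uniform bound $\sup_n\|A_n\| < \infty$ enters twice — to control the tail term in the density argument of the first step and to bound $\|A_{n_k} - A\|$ in the contradiction argument — and it cannot be dispensed with.
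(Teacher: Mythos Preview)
Your proof is correct. The paper does not actually supply its own proof of this lemma: it simply cites it as Lemma~9 of \cite{Fukumizu2007}, so there is no in-paper argument to compare against. Your two-step approach --- first extending strong convergence from the dense subspace $\Hsp$ to all of $\Hsp_2$ via the uniform bound, then upgrading strong convergence to uniform convergence on the compact set $K=\overline{B(\{\|u\|\le 1\})}$ --- is the standard route and matches what one finds in Fukumizu et al. One minor remark: in the conclusion you only need the inequality $\|A_nB-AB\|=\sup_{\|u\|\le 1}\|(A_n-A)Bu\|\le\sup_{v\in K}\|(A_n-A)v\|$, so the continuity observation used to claim equality is not strictly necessary.
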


The following lemma is adapted from Lemma 10 of \cite{Fukumizu2007}, but proved for the case of the smallest eigenfunction (rather than the largest eigenfunction).
\begin{Lemma}\label{evecconv}
Let $A$ be a compact operator on a Hilbert space $\Hsp$, and $A_n (n\in\N)$ be bounded  operators on $\Hsp$ such that $A_n$ converges to $A$ in norm.  Assume that the eigenspace of $A$ corresponding to the smallest eigenvalue is one-dimensional and spanned by a unit function $f_1$, and the minimum of the spectrum of $A_n$ is attained by a unit eigenfunction $\hfnob_1$.  Then
\[|\langle\hfnob_1, f_1\rangle | \longrightarrow1\qquad\text{as }n\longrightarrow\infty.\] 
\end{Lemma}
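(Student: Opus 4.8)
The plan is to mirror the proof of Lemma~10 in \cite{Fukumizu2007}, interchanging the roles of supremum and infimum so that the argument controls a spectral gap lying \emph{below} the smallest eigenvalue rather than above the largest one. Throughout I treat $A$ and the $A_n$ as self-adjoint, which is implicit since the statement refers to the smallest point of the spectrum; in the intended application one has $A=\bV-\I$ and $A_n=\hV-\I$, with $\bV-\I$ compact by Assumption~\ref{compact}. First I would record two elementary facts. (i) Writing $\lambda_1:=\min\sigma(A)$ and letting $P:=\langle\cdot,f_1\rangle f_1$ be the rank-one orthogonal projection onto the (one-dimensional) eigenspace, compactness and self-adjointness of $A$ make every nonzero spectral value an isolated eigenvalue with $0$ the only accumulation point; in the relevant situation $\lambda_1<0$, so $\lambda_1$ is isolated and $\delta:=\inf\bigl(\sigma(A)\setminus\{\lambda_1\}\bigr)-\lambda_1>0$. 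The summands $\Span\{f_1\}$ and $\{f_1\}^\perp$ are $A$-invariant, and the spectral theorem gives $\langle g,Ag\rangle\ge(\lambda_1+\delta)\|g\|^2$ for every $g\perp f_1$. (ii) For every unit vector $x$, $|\langle x,A_nx\rangle-\langle x,Ax\rangle|\le\|A_n-A\|$, so taking the infimum over the unit sphere and using the variational description of the bottom of the spectrum of a self-adjoint operator gives $|\mu_n-\lambda_1|\le\|A_n-A\|\to 0$, where $\mu_n:=\min\sigma(A_n)$.

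With these in hand, I would decompose the unit eigenfunction as $\hfnob_1=c_nf_1+g_n$ with $g_n:=(\I-P)\hfnob_1\perp f_1$ and $|c_n|^2+\|g_n\|^2=1$, where $c_n=\langle\hfnob_1,f_1\rangle$; the goal is $|c_n|\to 1$. I would then evaluate $\langle\hfnob_1,A\hfnob_1\rangle$ in two ways. Using $A_n\hfnob_1=\mu_n\hfnob_1$,
\[
\langle\hfnob_1,A\hfnob_1\rangle=\mu_n+\langle\hfnob_1,(A-A_n)\hfnob_1\rangle=\mu_n+r_n,\qquad |r_n|\le\|A-A_n\|.
\]
Using $A$-invariance of the two summands to annihilate the cross terms, together with fact (i),
\[
\langle\hfnob_1,A\hfnob_1\rangle=|c_n|^2\lambda_1+\langle g_n,Ag_n\rangle\ \ge\ |c_n|^2\lambda_1+(\lambda_1+\delta)\|g_n\|^2=\lambda_1+\delta\|g_n\|^2.
\]
Comparing the two displays yields $\delta\|g_n\|^2\le\mu_n+r_n-\lambda_1$, and the right-hand side tends to $0$ by fact (ii) together with $r_n\to 0$. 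Hence $\|g_n\|\to 0$ and $|c_n|^2=1-\|g_n\|^2\to 1$, i.e. $|\langle\hfnob_1,f_1\rangle|\to 1$, as required.

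I expect the only genuinely delicate point to be the existence of the gap $\delta>0$, i.e. that the smallest point of $\sigma(A)$ is an isolated eigenvalue: for an arbitrary compact self-adjoint operator this can fail when $\lambda_1=0$ (a sequence of $A_n$-eigenfunctions associated with shrinking positive perturbations can escape to infinity in the weak topology, so the conclusion itself would be false), and the argument therefore genuinely uses that in the application the bottom of the spectrum of $A=\bV-\I$ is strictly negative --- the exact analogue, for the smallest eigenvalue, of the fact that the largest eigenvalue of a compact \emph{positive} operator is automatically positive and isolated, as used in \cite{Fukumizu2007}. The remaining ingredients --- the rank-one spectral projection bookkeeping and the perturbation bound $|\mu_n-\lambda_1|\le\|A_n-A\|$ --- are routine.
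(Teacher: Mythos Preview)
Your proof is correct and follows essentially the same approach as the paper's: both decompose $\hfnob_1$ into its component along $f_1$ and its orthogonal complement, bound $\langle\hfnob_1,A\hfnob_1\rangle$ from below by $\lambda_1+\delta\|g_n\|^2$ (the paper writes this as $\lambda_1\delta_n^2+\lambda_2(1-\delta_n^2)$ via the full eigendecomposition), show $\langle\hfnob_1,A\hfnob_1\rangle\to\lambda_1$ via norm convergence, and conclude using the spectral gap. Your direct perturbation bound $|\mu_n-\lambda_1|\le\|A_n-A\|$ is a touch cleaner than the paper's two-sided sandwich, and your explicit discussion of why the gap $\delta>0$ requires $\lambda_1\neq 0$ (i.e., $\lambda_1<0$ in the application) is a point the paper leaves implicit.
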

\begin{proof}
Since $A$ is compact, the eigendecomposition
\[A = \sum_{i=1}^\infty\lambda_i\langle f_i, \cdot\rangle f_i\]
holds, where $\{\lambda_i\}$ is the eigenvalues and $\{f_i\}$ is the corresponding eigenfunctions so that $\{f_i\}$ forms a complete orthonormal basis system of $\Hsp$.  For convenience, we denote by $\lambda_1$ and $\lambda_2$ the smallest and second smallest eigenvalue of $A$, respectively.  By assumption $\lambda_1<\lambda_2$.

Let $\delta_n = |\langle \hfnob_1, f_1\rangle|$.  We have
\begin{align}
\langle\hfnob_1, A\hfnob_1\rangle &= \lambda_1\langle\hfnob_1, f_1\rangle^2 + \sum_{i=2}^\infty\lambda_i\langle\hfnob_1, f_i\rangle^2 \nonumber\\
&\geq \lambda_1\langle\hfnob_1, f_1\rangle^2 + \lambda_2\sum_{i=2}^\infty\langle\hfnob_1, f_i\rangle^2 \nonumber\\
&= \lambda_1\delta_n^2 + \lambda_2(1-\delta_n^2) \label{delta}\\
&\geq \lambda_1. \nonumber
\end{align}
On the other hand, since $A_n$ converges to $A$ in norm, we have
\[\sup_{\|f\|\leq 1}\|A_nf-Af\| \longrightarrow 0,\]
hence
\[\sup_{\|f\|\leq 1}\|\langle f, A_nf\rangle-\langle f, Af\rangle\| \leq \sup_{\|f\|\leq 1}\|A_nf-Af\| \longrightarrow 0.\]
This in turn implies that
\begin{equation}
\lambda_1 = \langle f_1, Af_1\rangle \leq \langle\hfnob_1, A\hfnob_1\rangle = \langle\hfnob_1, A_n\hfnob_1\rangle+o(1) = \hlambda_1+o(1), \label{gogeq}
\end{equation}
and
\begin{equation}
\hlambda_1 = \langle\hfnob_1, A_n\hfnob_1\rangle \leq \langle f_1, A_nf_1\rangle = \langle f_1, Af_1\rangle + o(1) = \lambda_1+o(1). \label{goleq}
\end{equation}
Combining \eqref{gogeq} and \eqref{goleq} gives
\[\hlambda_1 \longrightarrow \lambda_1,\]
and it follows that $\langle\hfnob_1, A\hfnob_1\rangle$ must converge to $\lambda_1$.  Applying this and $\lambda_1 < \lambda_2$ to \eqref{delta} gives $\delta_n^2\longrightarrow 1$.
\end{proof}

\subsection{Proofs of Supporting Lemmas} \label{PrepFacts}
In this subsection, we will consider some lemmas that will be directly useful for establishing the proofs in Sections~\ref{mainLemmaProof} and \ref{mainThmProof}.  The following lemma corresponds to Lemma 5 in \cite{Fukumizu2007}, and bounds the Hilbert-Schmidt norm of the difference between the empirical covariance operator and the (population) covariance operator.
\begin{Lemma}\label{HS}
The cross-covariance operator $\C_{ij}$ is Hilbert-Schmidt, and
\[E\|\hatC_{ij} - \C_{ij}\|_{\text{HS}} = O(n^{-1/2}),\]
where $\|\cdot\|_{\text{HS}}$ denotes the Hilbert-Schmidt norm of a Hilbert-Schmidt operator.
\end{Lemma}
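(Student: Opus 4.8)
The plan is to reproduce, in the RKHS setting, the classical bound on the deviation of an empirical covariance operator from its population counterpart (this is Lemma~5 of \cite{Fukumizu2007}). Write $\xi_j := k_{X_j} - m_j \in \Hsp_j$ for the centered feature element, and recall from Lemma~\ref{LemmaRKHScont} that under Assumption~\ref{assumeRKHS} the quantity $\kappa_j := \sup_{x\in\mX_j}\sqrt{k_j(x,x)}$ is finite. The first thing I would record is the uniform almost-sure bound $\|\xi_j\|_{k_j}\le 2\kappa_j$: this follows from $\|k_{X_j}\|_{k_j} = \sqrt{k_j(X_j,X_j)}\le\kappa_j$ together with $\|m_j\|_{k_j} = \sup_{\|\phi_j\|_{k_j}\le1}|E[\phi_j(X_j)]| \le E[\sqrt{k_j(X_j,X_j)}] \le \kappa_j$.

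To see that $\C_{ij}$ is Hilbert--Schmidt, I would identify it with the Bochner integral $E[\xi_i\otimes\xi_j]$ taken in the separable Hilbert space of Hilbert--Schmidt operators $\Hsp_j\to\Hsp_i$, where $u\otimes v$ denotes the rank-one map $\phi\mapsto\langle\phi,v\rangle_{k_j}u$; this agrees with the operator defined earlier through $\langle\phi_i,\C_{ij}\phi_j\rangle_{k_i} = \Cov[\phi_i(X_i),\phi_j(X_j)]$. Since $\|u\otimes v\|_{\text{HS}} = \|u\|_{k_i}\|v\|_{k_j}$, the integrand obeys $\|\xi_i\otimes\xi_j\|_{\text{HS}}\le 4\kappa_i\kappa_j$ almost surely, so the Bochner integral exists in Hilbert--Schmidt class and $\|\C_{ij}\|_{\text{HS}}\le 4\kappa_i\kappa_j<\infty$.

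For the $O(n^{-1/2})$ rate I would route $\hatC_{ij}-\C_{ij}$ through the oracle-centered operator $\widetilde{\C}_{ij} := \frac{1}{n}\sum_{\ell=1}^n (k_{X_{\ell i}}-m_i)\otimes(k_{X_{\ell j}}-m_j)$, an i.i.d.\ average of Hilbert--Schmidt-operator-valued elements with mean $\C_{ij}$ and finite second moment. The variance identity in the Hilbert space of Hilbert--Schmidt operators gives $E\|\widetilde{\C}_{ij}-\C_{ij}\|_{\text{HS}}^2 = \frac{1}{n}\big(E\|\xi_i\otimes\xi_j\|_{\text{HS}}^2 - \|\C_{ij}\|_{\text{HS}}^2\big)\le 16\kappa_i^2\kappa_j^2/n$, hence $E\|\widetilde{\C}_{ij}-\C_{ij}\|_{\text{HS}}\le 4\kappa_i\kappa_j\,n^{-1/2}$ by Jensen. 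For the centering correction, expanding the empirical operator yields the exact identity $\hatC_{ij} = \widetilde{\C}_{ij} - (\widehat{m}_i-m_i)\otimes(\widehat{m}_j-m_j)$ with $\widehat{m}_j := \frac{1}{n}\sum_a k_{X_{aj}}$, so $\|\hatC_{ij}-\widetilde{\C}_{ij}\|_{\text{HS}} = \|\widehat{m}_i-m_i\|_{k_i}\|\widehat{m}_j-m_j\|_{k_j}$; since $\widehat{m}_j-m_j = \frac{1}{n}\sum_\ell(k_{X_{\ell j}}-m_j)$ is a mean-zero i.i.d.\ average, $E\|\widehat{m}_j-m_j\|_{k_j}^2\le 4\kappa_j^2/n$, and Cauchy--Schwarz gives $E[\|\widehat{m}_i-m_i\|_{k_i}\|\widehat{m}_j-m_j\|_{k_j}]\le 4\kappa_i\kappa_j\,n^{-1}$. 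The triangle inequality then gives $E\|\hatC_{ij}-\C_{ij}\|_{\text{HS}}\le 4\kappa_i\kappa_j\,n^{-1/2} + 4\kappa_i\kappa_j\,n^{-1} = O(n^{-1/2})$, as claimed.

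I do not anticipate a real obstacle: the argument is essentially bookkeeping once the representations are in place. The two points needing care are (i) justifying that $\C_{ij}$, a priori defined only via its bilinear form, coincides with the Hilbert--Schmidt-valued Bochner integral $E[\xi_i\otimes\xi_j]$ --- which is exactly where the uniform bound supplied by Assumption~\ref{assumeRKHS} (through Lemma~\ref{LemmaRKHScont}) is used --- and (ii) getting the algebraic identity $\hatC_{ij} = \widetilde{\C}_{ij} - (\widehat{m}_i-m_i)\otimes(\widehat{m}_j-m_j)$ exactly right, so that the discrepancy between empirical and population centering contributes only at the lower order $O(n^{-1})$ rather than at order $n^{-1/2}$.
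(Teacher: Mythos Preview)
Your argument is correct. The paper does not actually supply its own proof of this lemma: it simply records that the result ``corresponds to Lemma~5 in \cite{Fukumizu2007}'' and moves on. What you have written is precisely the standard proof of that cited lemma --- identify $\C_{ij}$ with the Bochner integral $E[\xi_i\otimes\xi_j]$ in Hilbert--Schmidt class, bound the i.i.d.\ average $\widetilde{\C}_{ij}$ by a second-moment calculation, and control the empirical-versus-population centering discrepancy as a lower-order term. Your expansion $\hatC_{ij} = \widetilde{\C}_{ij} - (\widehat{m}_i-m_i)\otimes(\widehat{m}_j-m_j)$ is exactly right (the two cross terms and the correction term collapse to a single rank-one piece), so the $O(n^{-1})$ centering contribution is as you claim. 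There is nothing to compare here beyond noting that you have filled in what the paper left to the reference.
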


In the following, we first consider a sufficient condition for the operator $\C_{jj}+\epsilon\bP_j^1$ to be bounded below by $\epsilon$.  Once this is done, $\Vright$ is well-defined and is bounded above by $\epsilon^{-1/2}$.  To simplify notation, the subscript $j$ is omitted.

Let $\Hsp = \Hsp^0\oplus\Hsp^1$ be an RKHS with reproducing kernel $k = k^0 + k^1$.  Suppose that each $f\in\Hsp$ is a real-valued function with domain $\mX$, and the random variable $X$ also takes value in $\mX$.  Following \cite{Baker1973} and \cite{Fukumizu2007}, under the condition $E[k(X, X)]<\infty$, there exists a unique covariance operator $\C$ on $\Hsp$ such that
\[\Cov[f(X), g(X)] = \langle f, \C g\rangle_k, \qquad\forall f, g\in\Hsp.\]
Moreover, $\C$ induces the covariance operators $\C_{00}$ and $\C_{11}$ on $\Hsp^0$ and $\Hsp^1$, and the cross-covariance operator $\C_{01}: \Hsp^1\longrightarrow\Hsp^0$.  Also, $\C_{01}$ has the representation
\[\C_{01} = \C_{00}^{1/2}\bV_{01}\C_{11}^{1/2},\]
where $\bV_{01}: \Hsp^1\longrightarrow\Hsp^0$ is a unique bounded operator with $\|\bV_{01}\|\leq 1$.

\begin{Lemma}
\label{posdef}
Let $\Hsp = \Hsp^0\oplus\Hsp^1$ be a reproducing kernel Hilbert space with reproducing kernel $k = k^0 + k^1$, such that $E[k(X, X)]<\infty$, and let $0<a<b$ be absolute constants.  Suppose $\Hsp^0$ is finite-dimensional with $\C_{00}\succeq b\I$ and $\|\bV_{01}\|\leq\sqrt{1-a/b}$.  Then for any $\epsilon\in (0, a)$, 
\[\C-\epsilon\bP^0\succeq \0, \qquad \C+\epsilon\bP^1\succeq\epsilon\I.\]
\end{Lemma}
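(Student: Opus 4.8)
The plan is to notice first that the two asserted inequalities are one and the same statement. Since $\bP^0 + \bP^1 = \I$ on $\Hsp$, we have $\C + \epsilon\bP^1 - \epsilon\I = \C - \epsilon\bP^0$, so $\C - \epsilon\bP^0 \succeq \0$ holds if and only if $\C + \epsilon\bP^1 \succeq \epsilon\I$. Hence it suffices to prove the single bound $\langle f, \C f\rangle_k \geq \epsilon\,\|\bP^0 f\|_k^2$ for every $f\in\Hsp$.

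Next I would split $f = f^0 + f^1$ with $f^0 = \bP^0 f\in\Hsp^0$ and $f^1 = \bP^1 f\in\Hsp^1$, and expand in terms of the restricted (cross-)covariance operators $\C_{00},\C_{11},\C_{01}$ introduced just above the lemma:
\[
\langle f, \C f\rangle_k = \Var[f^0(X)] + 2\Cov[f^0(X), f^1(X)] + \Var[f^1(X)] = \langle f^0, \C_{00}f^0\rangle_k + 2\langle f^0, \C_{01}f^1\rangle_k + \langle f^1, \C_{11}f^1\rangle_k .
\]
Because $E[k(X,X)]<\infty$, all these operators are well-defined and bounded, $\C_{00}$ and $\C_{11}$ are positive self-adjoint and hence admit square roots, and the Baker representation $\C_{01} = \C_{00}^{1/2}\bV_{01}\C_{11}^{1/2}$ recalled above applies. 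Substituting it and putting $u = \C_{00}^{1/2}f^0\in\Hsp^0$ and $v = \C_{11}^{1/2}f^1\in\Hsp^1$ rewrites the right-hand side as $\|u\|_k^2 + 2\langle u, \bV_{01}v\rangle_k + \|v\|_k^2$, where $\|u\|_k^2 = \langle f^0,\C_{00}f^0\rangle_k$ and $\|v\|_k^2 = \langle f^1,\C_{11}f^1\rangle_k$.

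The final step is a Cauchy--Schwarz estimate followed by completing the square. Setting $c = \sqrt{1-a/b}\in(0,1)$, the hypothesis $\|\bV_{01}\|\leq c$ gives $|\langle u,\bV_{01}v\rangle_k|\leq c\,\|u\|_k\|v\|_k$, so $\langle f,\C f\rangle_k \geq \|u\|_k^2 - 2c\|u\|_k\|v\|_k + \|v\|_k^2$. Minimizing the right-hand side over $\|v\|_k\geq 0$ (the minimum is at $\|v\|_k = c\|u\|_k$) yields $\|u\|_k^2 - 2c\|u\|_k\|v\|_k + \|v\|_k^2 \geq (1-c^2)\|u\|_k^2$. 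Then the hypothesis $\C_{00}\succeq b\I$ gives $\|u\|_k^2 = \langle f^0,\C_{00}f^0\rangle_k \geq b\,\|f^0\|_k^2$, and since $1-c^2 = a/b$ we conclude
\[
\langle f,\C f\rangle_k \;\geq\; (1-c^2)\,b\,\|f^0\|_k^2 \;=\; a\,\|f^0\|_k^2 \;\geq\; \epsilon\,\|\bP^0 f\|_k^2,
\]
the last inequality because $\epsilon\in(0,a)$. This establishes both claimed operator inequalities.

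The argument is essentially bookkeeping once the reduction to a single inequality is spotted, so there is no serious obstacle; the only point deserving care is the legitimacy of the block expansion and the substitution $u=\C_{00}^{1/2}f^0$, $v=\C_{11}^{1/2}f^1$ — that is, that $\C_{00}$ and $\C_{11}$ are genuinely positive self-adjoint and that Baker's representation holds for the restricted operator $\C_{01}$, both of which are available from the material preceding the lemma. Finite-dimensionality of $\Hsp^0$ enters only through the hypothesis $\C_{00}\succeq b\I$.
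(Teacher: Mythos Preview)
Your proof is correct. Both you and the paper start from the same block expansion $\langle f,\C f\rangle_k = \langle f^0,\C_{00}f^0\rangle + 2\langle f^0,\C_{01}f^1\rangle + \langle f^1,\C_{11}f^1\rangle$ and control the cross term via Baker's representation $\C_{01}=\C_{00}^{1/2}\bV_{01}\C_{11}^{1/2}$, but the execution diverges. The paper aims for the inequality $|\langle f^0,\C_{01}f^1\rangle|\leq\|(\C_{00}-\epsilon\I)^{1/2}f^0\|\,\|\C_{11}^{1/2}f^1\|$ and proves it by introducing the inverse $(\C_{00}-\epsilon\I)^{-1/2}$ and bounding the operator norm $\|(\C_{00}-\epsilon\I)^{-1/2}\C_{00}^{1/2}\bV_{01}\|$ through the explicit eigendecomposition of $\C_{00}$ on the finite-dimensional space $\Hsp^0$. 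Your route is more elementary: a single Cauchy--Schwarz step $|\langle u,\bV_{01}v\rangle|\leq c\,\|u\|\,\|v\|$ followed by completing the square gives $\langle f,\C f\rangle_k\geq(1-c^2)\langle f^0,\C_{00}f^0\rangle\geq a\|f^0\|^2$ directly, avoiding both the operator inversion and the spectral decomposition. Your argument also yields the slightly sharper bound $\C-a\bP^0\succeq\0$ (independent of $\epsilon$), from which the stated conclusion for $\epsilon\in(0,a)$ is immediate. The paper's version has the minor advantage of isolating an operator-norm inequality that fits the style of the surrounding lemmas, but yours is shorter and uses finite-dimensionality only where it is genuinely needed, namely to make sense of $\C_{00}\succeq b\I$.
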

\begin{proof}
For any $f\in\Hsp$, there is a unique decomposition $f = f^0 + f^1$, where $f^0\in\Hsp^0$ and $f^1\in\Hsp^1$.  Thus, it is straightforward to verify that
\[\langle f, (\C-\epsilon\bP^0)f\rangle_k = \langle f^0, (\C_{00}-\epsilon\I)f^0\rangle_{k^0} + \langle f^1, \C_{11}f^1\rangle_{k^1} + 2\langle f^0, \C_{01}f^1\rangle_{k^0}.\]
The desired claim is true if we can show that 
\begin{equation}
|\langle f^0, \C_{01}f^1\rangle_{k^0}| \leq \|(\C_{00}-\epsilon\I)^{1/2}f^0\|_{k^0}\|\C_{11}^{1/2}f^1\|_{k^1}. \label{boundposdef}
\end{equation}
To this end, using the fact that $\C_{01} = \C_{00}^{1/2}\bV_{01}\C_{11}^{1/2}$, we get
\begin{align*}
|\langle f^0, \C_{01}f^1\rangle_{k^0}| &= |\langle f^0, \C_{00}^{1/2}\bV_{01}\C_{11}^{1/2}f^1\rangle_{k^0}| \\
&= |\langle (\C_{00}-\epsilon\I)^{1/2}f^0, (\C_{00}-\epsilon\I)^{-1/2}\C_{00}^{1/2}\bV_{01}\C_{11}^{1/2}f^1\rangle_{k^0}| \\
&\leq \|(\C_{00}-\epsilon\I)^{1/2}f^0\|_{k^0}\|(\C_{00}-\epsilon\I)^{-1/2}\C_{00}^{1/2}\bV_{01}\C_{11}^{1/2}f^1\|_{k^0} \\
&\leq \|(\C_{00}-\epsilon\I)^{1/2}f^0\|_{k^0}\|\C_{11}^{1/2}f^1\|_{k^1}\|(\C_{00}-\epsilon\I)^{-1/2}\C_{00}^{1/2}\bV_{01}\|.
\end{align*}
Therefore, to establish \eqref{boundposdef} we only need to show that 
\[\|(\C_{00}-\epsilon\I)^{-1/2}\C_{00}^{1/2}\bV_{01}\| \leq 1.\]
To this end, let $\Dim(\Hsp^0) = m$, then we have $\C_{00} = \sum_{i=1}^m\lambda_i\langle\psi_i, \cdot\rangle_{k^0}\psi_i$, where $\lambda_i\geq b$ and the $\psi_i$'s are of unit norm.  Hence,
\[(\C_{00}-\epsilon\I)^{-1/2}\C_{00}^{1/2} = \sum_{i=1}^m\bigg(\frac{\lambda_i}{\lambda_i-\epsilon}\bigg)^{1/2}\langle\psi_i, \cdot\rangle_{k^0}\psi_i = \sum_{i=1}^m\bigg(\frac{1}{1-\epsilon/\lambda_i}\bigg)^{1/2}\langle\psi_i, \cdot\rangle_{k^0}\psi_i,\]
and so 
\begin{align*}
\|(\C_{00}-\epsilon\I)^{-1/2}\C_{00}^{1/2}\bV_{01}\| &\leq \|(\C_{00}-\epsilon\I)^{-1/2}\C_{00}^{1/2}\|\|\bV_{01}\| \\
&\leq\max_{1\leq i\leq m}\bigg(\frac{1}{1-\epsilon/\lambda_i}\bigg)^{1/2}\|\bV_{01}\| \\
&\leq \bigg(\frac{1}{1-\epsilon/b}\bigg)^{1/2}\|\bV_{01}\| \\
&\leq \bigg(\frac{1}{1-a/b}\bigg)^{1/2}\|\bV_{01}\| \\
&\leq 1.
\end{align*}
This completes the proof.
\end{proof}
\begin{Remark}
{\rm
The condition that $\Hsp^0$ is finite-dimensional seems unavoidable.  Since $\C_{00}$ has to be trace class as in the definition of covariance operator, the additional condition $\C_{00}\succeq b\I$ only makes sense when $\Dim(\Hsp^0)<\infty$.
}
\end{Remark}

With $\Vright$ now well-defined, we are ready to derive more of its properties, which will be useful later on.
\begin{Lemma}\label{facts}
Suppose that Assumptions~\ref{assumeRKHS} and \ref{invertible} hold.  Then for $\epsilon$ sufficiently small, the operators $(\Vrightno)^{-1}$ and $\Vright$ are self-adjoint and bounded above by $\epsilon^{-1}$ and $\epsilon^{-1/2}$, respectively.  In addition,  
\begin{align*}
&\|\C_{jj}^{1/2}-(\Vrightno)^{1/2}\| \leq \epsilon^{1/2}, \\
&\|\Vright\C_{jj}^{1/2}\|  = \|\C_{jj}^{1/2}\Vright\| \leq 2, \\
&\|\Vleft\Vmid\Vright\| \leq 4.
\end{align*}
\end{Lemma}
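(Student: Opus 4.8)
The plan is to prove the four assertions in order, each resting on the previous one, with the positivity bound of Lemma~\ref{posdef} as the foundation. First I would verify the hypotheses of Lemma~\ref{posdef} for each index $j$: Assumption~\ref{assumeRKHS} together with Lemma~\ref{LemmaRKHScont} gives $E[k_j(X_j,X_j)]<\infty$ and $\dim\Hsp_j^0<\infty$, while Assumption~\ref{invertible} provides absolute constants $0<a<b$ with $\C_{j0,j0}\succeq b\I$ and $\|\bV_{j0,j1}\|\leq\sqrt{1-a/b}$. Lemma~\ref{posdef} then yields $\Vrightno\succeq\epsilon\I$ for every $\epsilon<a$. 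Since $\C_{jj}$ is self-adjoint and $\bP_j^1$ is an orthogonal projection, $\Vrightno$ is self-adjoint and positive with spectrum contained in $[\epsilon,\infty)$; hence it is boundedly invertible, $(\Vrightno)^{-1}$ is self-adjoint with $\|(\Vrightno)^{-1}\|\leq\epsilon^{-1}$, and its positive square root $\Vright$ is self-adjoint with $\|\Vright\|\leq\epsilon^{-1/2}$. This settles the first assertion.

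For the second assertion I would apply Lemma~\ref{lambdamin} with $A=\Vrightno$ and $B=\C_{jj}$. Both are positive and self-adjoint; $\|A-B\|=\|\epsilon\bP_j^1\|\leq\epsilon$; by the first step $\lambda_{\min}(A^{1/2})\geq\epsilon^{1/2}$, and trivially $\lambda_{\min}(B^{1/2})\geq 0$. Lemma~\ref{lambdamin} then gives $\|\C_{jj}^{1/2}-(\Vrightno)^{1/2}\|\leq\epsilon/(\epsilon^{1/2}+0)=\epsilon^{1/2}$.

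The third assertion is the heart of the matter and is proved by a short perturbation argument. The norm identity $\|\Vright\C_{jj}^{1/2}\|=\|\C_{jj}^{1/2}\Vright\|$ holds because both factors are self-adjoint, so the two operators are mutual adjoints. For the bound, I would split $\C_{jj}^{1/2}=(\Vrightno)^{1/2}+\bigl(\C_{jj}^{1/2}-(\Vrightno)^{1/2}\bigr)$ and multiply on the left by $\Vright=(\Vrightno)^{-1/2}$; since $(\Vrightno)^{-1/2}$ and $(\Vrightno)^{1/2}$ are functions of the same operator, their product is $\I$, so $\Vright\C_{jj}^{1/2}=\I+\Vright\bigl(\C_{jj}^{1/2}-(\Vrightno)^{1/2}\bigr)$. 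Taking norms and inserting $\|\Vright\|\leq\epsilon^{-1/2}$ from step one and $\|\C_{jj}^{1/2}-(\Vrightno)^{1/2}\|\leq\epsilon^{1/2}$ from step two gives $\|\Vright\C_{jj}^{1/2}\|\leq 1+\epsilon^{-1/2}\epsilon^{1/2}=2$.

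For the last assertion I would invoke Baker's representation \eqref{Vij}, $\Vmid=\C_{ij}=\C_{ii}^{1/2}\bV_{ij}\C_{jj}^{1/2}$ with $\|\bV_{ij}\|\leq 1$, and regroup the product as $\Vleft\,\Vmid\,\Vright=\bigl(\Vleft\C_{ii}^{1/2}\bigr)\bV_{ij}\bigl(\C_{jj}^{1/2}\Vright\bigr)$. Applying the third assertion once with index $i$ and once with index $j$ bounds each outer factor by $2$, so the whole product has norm at most $2\cdot 1\cdot 2=4$. The only point requiring genuine care is that the denominator in Lemma~\ref{lambdamin} must be kept away from zero --- recall $\lambda_{\min}(\C_{jj}^{1/2})$ is typically $0$ because $\C_{jj}$ is compact --- and this is precisely what Assumption~\ref{invertible}, via Lemma~\ref{posdef}, secures; everything else is the triangle inequality and the functional calculus.
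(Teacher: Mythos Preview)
Your proposal is correct and follows essentially the same route as the paper's proof: invoke Lemma~\ref{posdef} to get $\Vrightno\succeq\epsilon\I$, apply Lemma~\ref{lambdamin} for the square-root difference, write $\Vright\C_{jj}^{1/2}-\I$ as $\Vright$ times that difference to get the bound of~$2$, and then factor through Baker's representation~\eqref{Vij} for the final bound of~$4$. The only cosmetic differences are that the paper applies Lemma~\ref{lambdamin} with the roles of $A$ and $B$ swapped (harmless, the bound is symmetric) and carries out the perturbation step on $\C_{jj}^{1/2}\Vright$ rather than its adjoint $\Vright\C_{jj}^{1/2}$.
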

\begin{proof}
Since the adjoint of the inverse of an operator is equal to the inverse of the adjoint of the operator, and $\C_{jj}$ and $\bP_j^1$ are self-adjoint, we get
\[[(\Vrightno)^{-1}]^* = [(\Vrightno)^*]^{-1} = (\Vrightno)^{-1}.\]
Now given a positive self-adjoint operator $A$, if $B$ is the unique square root of $A$, i.e. $B^2 = A$, then $(B^*)^2 = (B^*)(B^*) = (B^2)^* = A^* = A$, so by uniqueness of the square root of a positive self-adjoint operator we have $B = B^*$.  This implies that $\Vright$ is self-adjoint. 

Under Assumptions~\ref{assumeRKHS} and \ref{invertible}, we have $\Vrightno\succeq\epsilon\I$ for $\epsilon$ sufficiently small, by Lemma~\ref{posdef}.  It follows immediately that $(\Vrightno)^{-1}$ and $\Vright$ are bounded above by $\epsilon^{-1}$ and $\epsilon^{-1/2}$, respectively.  

Applying the inequality
\[\|A^{1/2}-B^{1/2}\| \leq \frac{\|A-B\|}{\lambda_{\text{min}}(A^{1/2})+\lambda_{\text{min}}(B^{1/2})}\]
from Lemma~\ref{lambdamin} to $A = \C_{jj}$ and $B = \Vrightno$,
we get
\[\|\C_{jj}^{1/2}-(\Vrightno)^{1/2}\| \leq \frac{\epsilon}{0+\epsilon^{1/2}} = \epsilon^{1/2}.\]
It follows that
\begin{align*}
\|\C_{jj}^{1/2}\Vright - \I\| &= \|[\C_{jj}^{1/2}-(\Vrightno)^{1/2}]\Vright\| \\
&\leq \|\C_{jj}^{1/2}-(\Vrightno)^{1/2}\|\|\Vright\| \\
&\leq \epsilon^{1/2}\cdot\epsilon^{-1/2} \\
&= 1.
\end{align*}
By triangle inequality, we have
\[\|\C_{jj}^{1/2}\Vright\| \leq 2.\]
Since $\Vright\C_{jj}^{1/2}$ is the adjoint of $\C_{jj}^{1/2}\Vright$, we immediately get $\|\Vright\C_{jj}^{1/2}\| = \|\C_{jj}^{1/2}\Vright\| \leq 2$.  From $\|\bV\| \leq 1$, we get
\begin{align*}
&\|\Vleft\Vmid\Vright\| \\
&= \|\Vleft\C_{ii}^{1/2}\bV_{ij}\C_{jj}^{1/2}\Vright\| \\
&\leq \|\Vleft\C_{ii}^{1/2}\|\|\bV_{ij}\|\|\C_{jj}^{1/2}\Vright\| \\
&\leq 4.
\end{align*}
\end{proof}

\begin{Lemma}\label{condEmp}
Conditioned on the event 
\[E_n = \bigg\{\hatC_{ii}+\epsilon_n\bP_i^1\succeq\frac{\epsilon_n}{2}\I,\ \hatC_{jj}+\epsilon_n\bP_j^1\succeq\frac{\epsilon_n}{2}\I\bigg\},\]
we have
\begin{align*}
&\|(\hatC_{jj}+\epsilon_n\bP_j^1)^{-1/2}(\hatC_{jj})^{1/2}\| = \|(\hatC_{jj})^{1/2}(\hatC_{jj}+\epsilon_n\bP_j^1)^{-1/2}\| \leq 3,\\
&\|(\hatC_{ii}+\epsilon_n\bP_i^1)^{-1/2}\hatC_{ij}(\hatC_{jj}+\epsilon_n\bP_j^1)^{-1/2}\| \leq 9.
\end{align*}
\end{Lemma}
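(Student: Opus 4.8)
The plan is to run the argument of Lemma~\ref{facts} with the empirical operators in place of the population ones. The only structural change is that on the event $E_n$ the regularized empirical operator $\hatC_{jj}+\epsilon_n\bP_j^1$ is bounded below by $\tfrac{\epsilon_n}{2}\I$ rather than by $\epsilon_n\I$; this loss of a factor of two enters as $\sqrt2$'s and turns the population constants $2$ and $4$ into the stated $3$ and $9$.

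\emph{First inequality.} Since $\hatC_{jj}$ and $\hatC_{jj}+\epsilon_n\bP_j^1$ are positive and self-adjoint, their square roots, and (on $E_n$) the inverse square roots, are self-adjoint; hence $(\hatC_{jj})^{1/2}(\hatC_{jj}+\epsilon_n\bP_j^1)^{-1/2}$ is the Hilbert-space adjoint of $(\hatC_{jj}+\epsilon_n\bP_j^1)^{-1/2}(\hatC_{jj})^{1/2}$, which gives the claimed equality of operator norms (the same computation as in the opening paragraph of the proof of Lemma~\ref{facts}). On $E_n$ we have $\hatC_{jj}+\epsilon_n\bP_j^1\succeq\tfrac{\epsilon_n}{2}\I$, so by operator monotonicity of $t\mapsto t^{1/2}$ and antitonicity of $t\mapsto t^{-1/2}$ on positive operators,
\[
\big\|(\hatC_{jj}+\epsilon_n\bP_j^1)^{-1/2}\big\|\le\Big(\tfrac{2}{\epsilon_n}\Big)^{1/2},\qquad \lambda_{\text{min}}\big((\hatC_{jj}+\epsilon_n\bP_j^1)^{1/2}\big)\ge\Big(\tfrac{\epsilon_n}{2}\Big)^{1/2}.
\]
Applying Lemma~\ref{lambdamin} with $A=\hatC_{jj}$ and $B=\hatC_{jj}+\epsilon_n\bP_j^1$, and using $\|A-B\|=\epsilon_n\|\bP_j^1\|\le\epsilon_n$ (as $\bP_j^1$ is an orthogonal projection) together with $\lambda_{\text{min}}(A^{1/2})\ge0$, yields
\[
\big\|(\hatC_{jj})^{1/2}-(\hatC_{jj}+\epsilon_n\bP_j^1)^{1/2}\big\|\ \le\ \frac{\epsilon_n}{(\epsilon_n/2)^{1/2}}\ =\ \sqrt2\,\epsilon_n^{1/2}.
\]
Multiplying on the right by $(\hatC_{jj}+\epsilon_n\bP_j^1)^{-1/2}$ and using the operator-norm bound above gives $\|(\hatC_{jj})^{1/2}(\hatC_{jj}+\epsilon_n\bP_j^1)^{-1/2}-\I\|\le\sqrt2\,\epsilon_n^{1/2}\cdot\sqrt2\,\epsilon_n^{-1/2}=2$, and the triangle inequality then produces the bound $3$.

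\emph{Second inequality.} Applying Theorem~1 of \cite{Baker1973} to the empirical distribution $\frac1n\sum_{\ell}\delta_{X_{\ell i}}\delta_{X_{\ell j}}$ — exactly as \eqref{Vij} does for $P_{1:p}$ — produces a bounded operator $\hat{\bV}_{ij}$ with $\|\hat{\bV}_{ij}\|\le1$ such that $\hatC_{ij}=(\hatC_{ii})^{1/2}\hat{\bV}_{ij}(\hatC_{jj})^{1/2}$. Factoring accordingly,
\begin{align*}
\big\|(\hatC_{ii}+\epsilon_n\bP_i^1)^{-1/2}\hatC_{ij}(\hatC_{jj}+\epsilon_n\bP_j^1)^{-1/2}\big\|
&\le\big\|(\hatC_{ii}+\epsilon_n\bP_i^1)^{-1/2}(\hatC_{ii})^{1/2}\big\|\,\|\hat{\bV}_{ij}\|\,\big\|(\hatC_{jj})^{1/2}(\hatC_{jj}+\epsilon_n\bP_j^1)^{-1/2}\big\|\\
&\le 3\cdot1\cdot3\ =\ 9,
\end{align*}
where the outer factors are bounded by $3$ by the first part of the lemma (the first factor with $j$ replaced by $i$, together with the self-adjointness identity just established).

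I do not expect a genuine obstacle: the argument is a mechanical transcription of Lemma~\ref{facts}. The two points that merit a sentence of justification are (i) that the inverse square roots are well defined, which is precisely what conditioning on $E_n$ buys, and (ii) the empirical Baker factorization with contraction constant $\le1$, which is Baker's theorem applied to the empirical rather than the population distribution — a fact already implicit in the definition of $\hV_{ij}$ in \eqref{inverthat}.
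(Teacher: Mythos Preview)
Your proof is correct and follows exactly the approach the paper intends: the paper's own proof simply states that ``the proof is essentially the same as that in Lemma~\ref{facts} except that the event $\{\Vleftno\succeq\epsilon\I,\ \Vrightno\succeq\epsilon\I\}$ is now replaced by $E_n$,'' and you have spelled out precisely that transcription, correctly tracking how the weaker lower bound $\epsilon_n/2$ introduces the extra $\sqrt2$ factors that turn the constants $2$ and $4$ of Lemma~\ref{facts} into $3$ and $9$.
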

\begin{proof}
The proof is essentially the same as that in Lemma~\ref{facts} except that the event
\[\big\{\Vleftno\succeq\epsilon\I \text{ and }\Vrightno\succeq\epsilon\I\big\}\]
is now replaced by $E_n$.
\end{proof}

\begin{Lemma}\label{boundprobLemma}
Let $\epsilon_n$ be a sequence of positive numbers such that 
\[\lim_{n\rightarrow\infty}\epsilon_n = 0, \qquad \lim_{n\rightarrow\infty}\frac{n^{-1/2}}{\epsilon_n} = 0.\] 
Then, under Assumptions~\ref{assumeRKHS} and \ref{invertible},
\begin{align*}
& P\bigg(\hatC_{ii}+\epsilon_n\bP_i^1\succeq\frac{\epsilon_n}{2}\I,\ \hatC_{jj}+\epsilon_n\bP_j^1\succeq\frac{\epsilon_n}{2}\I\bigg) \geq 1-\delta, \\
& P\bigg(\hatC_{jj}+\epsilon_n\bP_j^1\succeq\frac{\epsilon_n}{2}\I\quad\text{for }1\leq j\leq p\bigg) \geq 1-\delta',
\end{align*}
where $\delta = 2(d_i+d_j)\epsilon^{-1}n^{-1/2}$, $\delta' = 2dp\epsilon^{-1}n^{-1/2}$, and $d_i, d_j, d$ are constants that do not depend on $n$ when $n$ is sufficiently large. 
\end{Lemma}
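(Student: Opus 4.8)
The plan is to reduce the probabilistic claim to a deterministic lower bound on $\C_{jj}+\epsilon_n\bP_j^1$ supplied by Lemma~\ref{posdef}, perturbed by the stochastic fluctuation $\hatC_{jj}-\C_{jj}$ whose size is controlled in mean by Lemma~\ref{HS}; Markov's inequality and a union bound then finish the argument.

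First I would record the deterministic floor. By Lemma~\ref{posdef}, under Assumptions~\ref{assumeRKHS} and~\ref{invertible} one has $\C_{jj}+\epsilon\bP_j^1\succeq\epsilon\I$ for every $j\in[p]$ and every $\epsilon\in(0,a)$ (the requirement $E[k_j(X_j,X_j)]<\infty$ being guaranteed by Lemma~\ref{LemmaRKHScont}). Since $\epsilon_n\to 0$, there is an $n_0$ with $\epsilon_n<a$ for all $n\ge n_0$, so $\C_{jj}+\epsilon_n\bP_j^1\succeq\epsilon_n\I$ for all such $n$ --- exactly the deterministic half of Lemma~\ref{lmm:quad-form}. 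Next, writing $\hatC_{jj}+\epsilon_n\bP_j^1=(\C_{jj}+\epsilon_n\bP_j^1)+(\hatC_{jj}-\C_{jj})$ and using that the first summand is positive self-adjoint with spectrum bounded below by $\epsilon_n$ while the second is self-adjoint, we get, for every unit $f\in\Hsp_j$,
\[
\langle f,(\hatC_{jj}+\epsilon_n\bP_j^1)f\rangle_{k_j}\;\ge\;\epsilon_n-\|\hatC_{jj}-\C_{jj}\|,
\]
so that on the event $\{\|\hatC_{jj}-\C_{jj}\|\le\epsilon_n/2\}$ we have $\hatC_{jj}+\epsilon_n\bP_j^1\succeq\tfrac{\epsilon_n}{2}\I$.

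It then remains to estimate $P(\|\hatC_{jj}-\C_{jj}\|>\epsilon_n/2)$. Since the operator norm is dominated by the Hilbert--Schmidt norm and Lemma~\ref{HS} gives $E\|\hatC_{jj}-\C_{jj}\|_{\text{HS}}\le d_j\,n^{-1/2}$ for all sufficiently large $n$ with $d_j$ not depending on $n$, Markov's inequality yields
\[
P\!\left(\|\hatC_{jj}-\C_{jj}\|>\tfrac{\epsilon_n}{2}\right)\;\le\;\frac{2\,E\|\hatC_{jj}-\C_{jj}\|_{\text{HS}}}{\epsilon_n}\;\le\;2d_j\,\epsilon_n^{-1}n^{-1/2}.
\]
For the first displayed inequality of the lemma I take the union bound over the two indices $i$ and $j$, getting failure probability at most $2(d_i+d_j)\epsilon_n^{-1}n^{-1/2}=\delta$; for the second I union-bound over $j=1,\dots,p$ with $d=\max_{1\le j\le p}d_j$, getting at most $2dp\,\epsilon_n^{-1}n^{-1/2}=\delta'$. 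Finally, the assumption $n^{-1/2}/\epsilon_n\to 0$ forces $\delta,\delta'\to 0$, giving the advertised high-probability statements.

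I do not anticipate a genuine obstacle here. The two points that need a word of care are checking that $\epsilon_n$ eventually lies in the admissible interval $(0,a)$ of Lemma~\ref{posdef}, and the (essentially trivial) passage from the Hilbert--Schmidt bound of Lemma~\ref{HS} to the operator-norm bound. If explicit constants are wanted in place of the $O(n^{-1/2})$ of Lemma~\ref{HS}, one reads them off from the proof of that lemma, the relevant second moments being finite by Lemma~\ref{LemmaRKHScont}; this pins down $d_i,d_j,d$.
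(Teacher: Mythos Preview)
Your proposal is correct and follows essentially the same approach as the paper: invoke Lemma~\ref{posdef} for the deterministic floor $\C_{jj}+\epsilon_n\bP_j^1\succeq\epsilon_n\I$, perturb by $\hatC_{jj}-\C_{jj}$, control the latter via Lemma~\ref{HS} together with Markov's inequality on the Hilbert--Schmidt norm, and finish with a union bound. The only cosmetic difference is that the paper phrases the deterministic input as $\C_{jj}-\epsilon\bP_j^0\succeq\0$ (the first conclusion of Lemma~\ref{posdef}) and decomposes $\hatC_{jj}+\epsilon\bP_j^1$ accordingly, whereas you use the equivalent second conclusion $\C_{jj}+\epsilon\bP_j^1\succeq\epsilon\I$ directly; the resulting bounds and constants are identical.
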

\begin{proof}
To simplify notation, we write $\epsilon$ for $\epsilon_n$ and $\hat{\C}_{jj}$ for $\hatC_{jj}$.  Then by Lemma~\ref{HS},
\begin{align}
P(\|\Chat_{jj}-\C_{jj}\| > \epsilon) &\leq P(\|\hat{\C}_{jj}-\C_{jj}\|_{\text{HS}} > \epsilon) \nonumber\\
&\leq \epsilon^{-1}E\|\hat{\C}_{jj}-\C_{jj}\|_{\text{HS}} \nonumber\\
&\leq d_j\epsilon^{-1}n^{-1/2}, \label{boundprob}
\end{align}
where $d_j$ is some constant that does not depend on $n$ when $n$ is sufficiently large.

By Lemma~\ref{posdef}, under Assumptions~\ref{assumeRKHS} and \ref{invertible}, we have $\C_{jj}-\epsilon\bP_j^0\succeq \0$, so 
\begin{align*}
\Vhatrightno &= \bigg[\Chat_{jj}-\epsilon\bP_j^0+\frac{\epsilon}{2}\I\bigg] + \frac{\epsilon}{2}\I \\
&= \bigg[\hat{\C}_{jj}-\C_{jj}+\frac{\epsilon}{2}\I\bigg] + \bigg[\C_{jj}-\epsilon\bP_j^0\bigg] + \frac{\epsilon}{2}\I \\
&\succeq \bigg[\Chat_{jj} - \C_{jj} + \frac{\epsilon}{2}\I\bigg] + \frac{\epsilon}{2}\I.
\end{align*}
It follows that
\begin{align*}
P\bigg(\Vhatrightno\succeq\frac{\epsilon}{2}\I\bigg) &\geq P\bigg(\hat{\C}_{jj}-\C_{jj}+\frac{\epsilon}{2}\I \succeq \0\bigg) \\
&= P\bigg(\C_{jj}-\hat{\C}_{jj} \preceq \frac{\epsilon}{2}\I\bigg) \\
&\geq P\bigg(\|\C_{jj}-\hat{\C}_{jj}\| \leq \frac{\epsilon}{2}\bigg) \\
&\geq 1 - 2d_j\epsilon^{-1}n^{-1/2},
\end{align*}
the last inequality is due to \eqref{boundprob}.  Hence
\begin{align*}
&P\bigg(\Vhatleftno\succeq\frac{\epsilon}{2}\I,\ \Vhatrightno\succeq\frac{\epsilon}{2}\I\bigg) \\
&= 1 - P\bigg(\Vhatleftno\preceq\frac{\epsilon}{2}\I\quad\text{or}\quad\Vhatrightno\preceq\frac{\epsilon}{2}\I\bigg) \\
&\geq 1 - P\bigg(\Vhatleftno\preceq\frac{\epsilon}{2}\I\bigg) - P\bigg(\Vhatrightno\preceq\frac{\epsilon}{2}\I\bigg) \\
&\geq 1 - \delta, 
\end{align*}
where $\delta = 2(d_i+d_j)\epsilon^{-1}n^{-1/2}$.  Let $d = \max\{d_1, \ldots, d_p\}$ and apply union bound yields $\delta' = 2dp\epsilon^{-1}n^{-1/2}$.
\end{proof}

\begin{Lemma} \label{invsqrtform}
Suppose that Assumptions~\ref{assumeRKHS} and \ref{invertible} hold.  Let $f$ be an eigenfunction of $\C_{ii}$ corresponding to a nonzero eigenvalue $\lambda$, with the unique decomposition $f = f^0+f^1$, where $f^0\in\Hsp_i^0$ and $f^1\in\Hsp_i^1$.  Then, for $\epsilon$ sufficiently small,
\begin{equation}
\Vleft f = \lambda^{-1} (\Vleftno)^{1/2}f - \lambda^{-1}\epsilon\Vleft f^1. \label{invsqrt}
\end{equation}
\end{Lemma}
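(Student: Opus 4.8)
The plan is to verify \eqref{invsqrt} by multiplying through by the operator $(\Vleftno)^{1/2}$, which reduces the claim to the trivial eigen-identity $(\Vleftno)f=\lambda f+\epsilon f^1$.

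First I would check that the operators appearing in \eqref{invsqrt} are meaningful. Under Assumptions~\ref{assumeRKHS} and \ref{invertible}, Lemma~\ref{LemmaRKHScont} gives $E[k_i(X_i,X_i)]<\infty$, so Lemma~\ref{posdef} applies with $\Hsp=\Hsp_i$ and yields $\Vleftno\succeq\epsilon\I$ for every $\epsilon\in(0,a)$. Hence, for $\epsilon$ sufficiently small, $\Vleftno$ is a positive self-adjoint operator bounded below by $\epsilon$; its positive square root $(\Vleftno)^{1/2}$ is bounded and boundedly invertible, and $\Vleft$ is precisely its inverse. In particular $\Vleft$, $(\Vleftno)^{1/2}$ and $\Vleftno$ all commute, being Borel functions of the single self-adjoint operator $\Vleftno$, with $(\Vleftno)^{1/2}(\Vleftno)^{1/2}=\Vleftno$ and $\Vleft(\Vleftno)=(\Vleftno)^{1/2}$.

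Next I would use the two defining facts about $f$. Because $\bP_i^1$ is the orthogonal projection of $\Hsp_i$ onto $\Hsp_i^1$ and $\Hsp_i^1$ is the orthogonal complement of $\Hsp_i^0$, the decomposition $f=f^0+f^1$ gives $\bP_i^1 f=f^1$; combining this with $\C_{ii}f=\lambda f$ yields the bookkeeping identity
\[
(\Vleftno)f=\C_{ii}f+\epsilon\bP_i^1 f=\lambda f+\epsilon f^1 .
\]
Applying the bounded operator $\Vleft$ to both sides and using $\Vleft(\Vleftno)=(\Vleftno)^{1/2}$ gives
\[
(\Vleftno)^{1/2}f=\lambda\,\Vleft f+\epsilon\,\Vleft f^1 .
\]
Since $\lambda\neq0$, dividing by $\lambda$ and rearranging produces exactly \eqref{invsqrt}.

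I do not expect a genuine obstacle: the substance of the lemma is the identity $(\Vleftno)f=\lambda f+\epsilon f^1$, and the only point requiring care is the invertibility of $\Vleftno$ (equivalently, the boundedness of $\Vleft$), which is supplied by Lemma~\ref{posdef} once $\epsilon<a$. Everything else is routine functional calculus for the self-adjoint operator $\Vleftno$.
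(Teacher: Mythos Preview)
Your proof is correct and follows essentially the same approach as the paper: both start from the identity $(\Vleftno)f=\lambda f+\epsilon f^1$ and reach \eqref{invsqrt} via functional calculus, the only cosmetic difference being that the paper applies $(\Vleftno)^{-1}$ and then $(\Vleftno)^{1/2}$ in two steps whereas you apply $\Vleft$ in one. Your added justification of invertibility via Lemma~\ref{posdef} is appropriate and matches the paper's implicit reliance on it.
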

\begin{proof}
We have
\[(\Vleftno) f = \lambda f + \epsilon f^1.\]
Applying $(\Vleftno)^{-1}$ on both sides and rearranging terms, we get
\[(\Vleftno)^{-1}f = \lambda^{-1}f - \lambda^{-1}\epsilon(\Vleftno)^{-1}f^1.\]
Applying $(\Vleftno)^{1/2}$ on both sides, we get
\[\Vleft f = \lambda^{-1} (\Vleftno)^{1/2}f - \lambda^{-1}\epsilon\Vleft f^1.\]
\end{proof}

\subsection{Proofs of Main Lemmas} \label{mainLemmaProof}
We are now ready to prove the two key lemmas given in Section~\ref{Consistency}.
\subsection*{Proof of Lemma~\ref{SampleError}}
\begin{proof}
Throughout, we condition on the event 
\begin{equation}
E_n = \bigg\{\hatC_{ii}+\epsilon_n\bP_i^1\succeq\frac{\epsilon_n}{2}\I,\ \hatC_{jj}+\epsilon_n\bP_j^1\succeq\frac{\epsilon_n}{2}\I\bigg\}. \label{En}
\end{equation}
It follows from Lemma~\ref{boundprobLemma} that $P(E_n) \geq 1-\delta$ with $\delta = 2(d_i+d_j)\epsilon_n^{-1}n^{-1/2}$, and $d_i, d_j$ are constants that do not depend on $n$ when $n$ is sufficiently large. .

To simplify notation, we write $\epsilon$ for $\epsilon_n$ and $\hat{\C}_{ij}$ for $\hat{\C}_{ij}^{(n)}$ hereinafter.  First, note that the operator on the left hand side of \eqref{hVtV} can be decomposed as 
\begin{align}
&\hV_{ij}-\tV_{ij} \nonumber\\
&= \Vhatleft\Vhatmid\Vhatright \nonumber\\
&\qquad- \Vleft\Vmid\Vright \nonumber\\
&= \{\Vhatleft-\Vleft\}\Vhatmid\Vhatright \label{part1}\\
&\qquad + \Vleft\{\Vhatmid-\Vmid\}\Vhatright \label{part2}\\
&\qquad + \Vleft\Vmid\{\Vhatright-\Vright\}. \label{part3}
\end{align}
We now bound each term on the right hand side of the equation above separately.  From the equality
\begin{align*}
A^{-1/2} - B^{-1/2} &= A^{-1/2}(B^{3/2}-A^{3/2})B^{-3/2}+(A-B)B^{-3/2} \\
&= [A^{-1/2}(B^{3/2}-A^{3/2})+A-B]B^{-3/2},
\end{align*}
plug in $A = \Vleftno$, $B = \Vhatleftno$, and \eqref{part1} becomes
\begin{align}
&- (A^{-1/2}-B^{-1/2})\Vhatmid\Vhatright \nonumber\\
&= -[A^{-1/2}(B^{3/2}-A^{3/2})+A-B]B^{-3/2}\Vhatmid\Vhatright \nonumber\\
&= -[A^{-1/2}(B^{3/2}-A^{3/2})+A-B]\cdot(\Vhatleftno)^{-3/2}\Vhatmid\Vhatright. \label{prodsample}
\end{align}
Now, we bound each term in the product in \eqref{prodsample}.  Under Assumptions~\ref{assumeRKHS} and \ref{invertible}, by Lemma~\ref{facts}, we have $\|\Vleft\| \leq \epsilon^{-1/2}$, so
\begin{align}
&\|A^{-1/2}(B^{3/2}-A^{3/2})+A-B\| \nonumber\\
&= \|\Vleft\{(\Vhatleftno)^{3/2}- (\Vleftno)^{3/2}\} + \C_{ii}-\hat{\C}_{ii}\| \nonumber\\
&\leq \|\Vleft\|\cdot\|(\Vhatleftno)^{3/2}- (\Vleftno)^{3/2}\| + \|\C_{ii}-\hat{\C}_{ii}\| \nonumber\\
&\leq \epsilon^{-1/2}\cdot 3\lambda^{1/2}\|\C_{ii}-\hat{\C}_{ii}\| + \|\C_{ii}-\hat{\C}_{ii}\| \nonumber\\
&\qquad\text{where }\lambda = \max\{\|\Vhatleftno\|, \|\Vleftno\|\}, \text{ by Lemma~\ref{Fukumizu8}} \nonumber\\
&= \bigg(3\lambda^{1/2}\epsilon^{-1/2}+1\bigg)\|\C_{ii}-\hat{\C}_{ii}\| \nonumber\\
&= O(\epsilon^{-1/2}n^{-1/2}), \qquad\text{by Lemma~\ref{HS}}. \label{part1a}
\end{align} 
To bound the other term in \eqref{prodsample}, it follows from Lemma~\ref{condEmp} that conditioned on \eqref{En}, we have $\|(\Vhatleftno)^{-1}\| \leq 2\epsilon^{-1}$ and $\|(\Vhatleftno)^{-1/2}\Vhatmid\Vhatright\| \leq 9$.  Therefore,
\begin{align}
&\|(\Vhatleftno)^{-3/2}\Vhatmid\Vhatright\| \nonumber\\
&= \|(\Vhatleftno)^{-1}\|\|(\Vhatleftno)^{-1/2}\Vhatmid\Vhatright\| \nonumber\\
&\leq 18\epsilon^{-1}, \label{part1b}
\end{align}
Combining \eqref{part1a} and \eqref{part1b}, we see that
the norm of \eqref{part1} is of order $O(\epsilon^{-3/2}n^{-1/2})$.  A similar bound can be obtained for \eqref{part3}.  An upper bound for \eqref{part2} is given by $2\epsilon^{-1}\|\Vmid-\Vhatmid\| = O(\epsilon^{-1}n^{-1/2})$.  
\end{proof}

We now turn to the proof of Lemma~\ref{ApproxError}.
\subsection*{Proof of Lemma~\ref{ApproxError}}
\begin{proof}
To simplify notation, we write $\epsilon$ for $\epsilon_n$.  First, we rewrite the left hand side of \eqref{approx} as 
\begin{align}
\|\tV_{ij} - \bV_{ij}\| &= \|\Vleft\Vmid\Vright - \C_{ii}^{-1/2}\C_{ij}\Vmid^{-1/2}\| \nonumber\\
&\leq \|\{\Vleft - \C_{ii}^{-1/2}\}\C_{ij}\Vright\| \label{approx1}\\
&\qquad + \|\C_{ii}^{-1/2}\Vmid\{\Vright - \C_{jj}^{-1/2}\}\|. \label{approx2}
\end{align}
Applying Lemma~\ref{facts}, an upper bound for \eqref{approx1} is given by
\begin{align*}
&\|\{\Vleft - \C_{ii}^{-1/2}\}\C_{ij}\Vright\| \\
&= \|\{\Vleft - \C_{ii}^{-1/2}\}\C_{ii}^{1/2}\bV_{ij}\C_{jj}^{1/2}\Vright\| \\
&\leq 2\|\{\Vleft - \C_{ii}^{-1/2}\}\C_{ii}^{1/2}\bV_{ij}\| \\
&= 2\|\{\Vleft\C_{ii}^{1/2} - \I\}\bV_{ij}\|.
\end{align*}
Our goal is to show that
\begin{equation}
\|\{\Vleft\C_{ii}^{1/2} - \I\}\bV_{ij}\| \longrightarrow 0\text{ as }n\longrightarrow\infty. \label{showthis}
\end{equation}
Let $f$ be a unit eigenfunction of $\C_{ii}$ corresponding to a nonzero eigenvalue $\lambda$.  It follows from Lemma~\ref{invsqrtform} that
\begin{align}
&\|\{\Vleft\C_{ii}^{1/2} - \I\}\C_{ii}f\|_{k_i} \nonumber\\
&= \lambda \|\{\Vleft\C_{ii}^{1/2} - \I\}f\|_{k_i} \nonumber\\
&= \lambda \bigg\|\lambda^{1/2}\bigg\{\lambda^{-1} (\Vleftno)^{1/2}f - \lambda^{-1}\epsilon\Vleft f^1\bigg\} - f\bigg\|_{k_i} \nonumber\\
&\leq \lambda\bigg[\|\lambda^{-1/2}(\Vleftno)^{1/2}f - f\|_{k_i} + \epsilon\lambda^{-1/2}\|\Vleft f^1\|_{k_i}\bigg] \nonumber\\
&= \lambda\|\lambda^{-1/2}(\Vleftno)^{1/2}f - f\|_{k_i} + \epsilon\lambda^{1/2}\|\Vleft f^1\|_{k_i}. \label{approxbound}
\end{align}
To bound the first term in \eqref{approxbound}, note that from Lemma~\ref{facts},
\begin{align*}
|\langle f, (\Vleftno)^{1/2}f\rangle_{k_i} - \langle f, \C_{ii}^{1/2}f\rangle_{k_i}| &= |\langle f, [(\Vleftno)^{1/2}-\C_{ii}^{1/2}]f\rangle_{k_i}| \\
&\leq \|(\Vleftno)^{1/2}-\C_{ii}^{1/2}\| \\
&\leq \epsilon^{1/2}, 
\end{align*}
so we have
\begin{align*}
&\|\lambda^{-1/2}(\Vleftno)^{1/2}f - f\|_{k_i}^2 \\
&= \lambda^{-1}\langle f, (\Vleftno)f\rangle_{k_i} -2\lambda^{-1/2}\langle f, (\Vleftno)^{1/2}f\rangle_{k_i} + \|f\|_{k_i}^2 \\
&\leq \lambda^{-1}(\lambda + \epsilon\|f^1\|^2)-2\lambda^{-1/2}[\langle f, \C_{ii}^{1/2}f\rangle_{k_i}-\epsilon^{1/2}] + 1 \\
&= 2 + \lambda^{-1}\epsilon\|f^1\|^2 -2\lambda^{-1/2}\cdot\lambda^{1/2} + 2\lambda^{-1/2}\epsilon^{1/2}\\
&\leq \lambda^{-1}\epsilon + 2\lambda^{-1/2}\epsilon^{1/2},
\end{align*}
which gives
\begin{align*}
\|\lambda^{-1/2}(\Vleftno)^{1/2}f - f\|_{k_i} &\leq [\lambda^{-1}\epsilon + 2\lambda^{-1/2}\epsilon^{1/2}]^{1/2} \\
&\leq \lambda^{-1/2}\epsilon^{1/2} + \sqrt{2}\lambda^{-1/4}\epsilon^{1/4} \\
&\leq \lambda^{-1/2}\epsilon^{1/2} + 2\lambda^{-1/4}\epsilon^{1/4}.
\end{align*}
It follows that an upper bound for the expression on the left hand side of \eqref{approxbound} is
\begin{align}
\|\{\Vleft\C_{ii}^{1/2} - \I\}\C_{ii}f\|_{k_i} &\leq \lambda\cdot(\lambda^{-1/2}\epsilon^{1/2} + 2\lambda^{-1/4}\epsilon^{1/4}) + \epsilon\lambda^{1/2}\cdot\epsilon^{-1/2}\nonumber\\
&= 2(\lambda\epsilon)^{1/2} + 2\lambda^{3/4}\epsilon^{1/4}. \label{approxbound2}
\end{align}
As pointed out in Section~\ref{mainAssumptions}, $\mR(\bV_{ij}) \subseteq \overline{\mR(\C_{ii}})$.  Now, let $v$ be an arbitrary element in $\mR(\bV_{ij})\cap\mR(\C_{ii})$, so that there exists $u\in\Hsp_i$ such that $v = \C_{ii}u$.  We then have $u = \sum_{\ell=1}^\infty\langle u, f_\ell\rangle_{k_i}f_\ell$, where $f_\ell$ is a unit eigenvector of $\C_{ii}$ corresponding to nonzero eigenvalue $\lambda_\ell$ (i.e.,\! $\{f_\ell\}$ and the eigenfunctions corresponding to zero eigenvalue forms an orthonormal basis system of $\Hsp_i$), and
\begin{align*}
&\|\{\Vleft\C_{ii}^{1/2} - \I\}v\|_{k_i} \\
&= \|\{\Vleft\C_{ii}^{1/2} - \I\}\C_{ii}u\|_{k_i} \\
&\leq \sum_{\ell=1}^\infty |\langle u, f_\ell\rangle_{k_i}|\|\{\Vleft\C_{ii}^{1/2} - \I\}\C_{ii}f_\ell\|_{k_i} \\
&\leq \sum_{\ell=1}^\infty |\langle u, f_\ell\rangle_{k_i}| \cdot [2(\lambda_\ell\epsilon)^{1/2}+2\lambda_\ell^{3/4}\epsilon^{1/4}] \\
&\leq \bigg(\sum_{\ell=1}^\infty \langle u, f_\ell\rangle_{k_i}^2\bigg)^{1/2}\bigg(\sum_{\ell=1}^\infty4\lambda_\ell\epsilon\bigg)^{1/2} \\
&\qquad+ 2\epsilon^{1/4}\bigg\{\sum_{\ell: \lambda_\ell\geq 1} |\langle u, f_\ell\rangle_{k_i}|\lambda_\ell + \sum_{\ell: 0<\lambda_\ell<1} |\langle u, f_\ell\rangle_{k_i}|\lambda_\ell^{1/2} \bigg\}\\
&\leq 2\epsilon^{1/2}\|u\|_{k_i}\bigg(\sum_{\ell=1}^\infty\lambda_\ell\bigg)^{1/2} + 2\epsilon^{1/4}\bigg\{\|u\|_{k_i}\sum_{\ell: \lambda_\ell\geq 1} \lambda_\ell + \|u\|_{k_i}\bigg(\sum_{\ell: 0<\lambda_\ell<1} \lambda_\ell\bigg)^{1/2} \bigg\} \\
&= O(\epsilon^{1/4}),
\end{align*}
where the second inequality follows from \eqref{approxbound2}, the third and fourth inequality follows from Cauchy-Schwarz inequality, and the last equality follows from the fact that $\C_{ii}$ is trace class and hence $\sum_{\ell=1}^\infty\lambda_\ell<\infty$.  We therefore conclude that
\begin{equation}
\{\Vleft\C_{ii}^{1/2} - \I\}v \longrightarrow 0\text{ for all }v\in\mR(\bV_{ij})\cap\mR(\C_{ii})\text{ as }n\longrightarrow\infty. \label{gotozero}
\end{equation}
Coupled with that fact that $\bV_{ij}$ is compact, Lemma~\ref{Fukumizu9} implies that \eqref{showthis} holds. 

To show the convergence of \eqref{approx2}, first note that we can rewrite \eqref{approx2} as
\begin{align*}
\|\C_{ii}^{-1/2}\Vmid\{\Vright - \C_{jj}^{-1/2}\}\| &= \|\bV_{ij}\C_{jj}^{1/2}\{\Vright - \C_{jj}^{-1/2}\}\| \\
&= \|\bV_{ij}\{\C_{jj}^{1/2}\Vright - \I\}\|.
\end{align*}
Using the fact that $\Vright$ is self-adjoint, we get
\begin{align*}
\|\bV_{ij}\{\C_{jj}^{1/2}\Vright - \I\}\| &= \|(\bV_{ij}\{\C_{jj}^{1/2}\Vright - \I\})^*\| \\
&= \|\{\Vright\C_{jj}^{1/2} - \I\}\bV_{ij}^*\|.
\end{align*}
Since $\bV_{ij}$ is compact implies that $\bV_{ij}^*$ is compact, it follows similarly from our proof for \eqref{gotozero} and an application of Lemma~\ref{Fukumizu9} that 
\[\|\{\Vright\C_{jj}^{1/2} - \I\}\bV_{ij}^*\| \longrightarrow 0.\] 

\end{proof}

\subsection{Proofs of Main Theorems} 
\label{mainThmProof}
\subsection*{Proof of Theorem~\ref{MainThm1}}
\begin{proof}
It follows from Lemmas~\ref{SampleError} and \ref{ApproxError} that we have
\[\|\hV_{ij}-\bV_{ij}\| \longrightarrow 0\]
in probability when $n\longrightarrow\infty$.  From the expression in \eqref{Vmatrix}, we see that 
\[\bV = \sum_{i=1}^p\sum_{j\neq i}\bP_i\bV\bP_j + \I,\]
where $\bP_j$ is the orthogonal projection from $\bHsp$ onto $\Hsp_j$ and $\I:\bHsp\longrightarrow\bHsp$ is the identity operator.  Since there is a one-to-one correspondence between $\bV_{ij}$ and $\bP_i\bV\bP_j$ (for $i\neq j$), using a similar decomposition for $\hV$, we get
\begin{align*}
\|\hV - \bV\| &\leq \sum_{i=1}^p\sum_{j\neq i}\|\hV_{ij}-\bV_{ij}\| \longrightarrow 0
\end{align*}
in probability as $n\longrightarrow\infty$.  It then follows from Lemma~\ref{evecconv} that
\[|\langle\hf, \f^*\rangle_k| \longrightarrow 1\]
in probability as $n\longrightarrow\infty$.
\end{proof}

\subsection*{Proof of Theorem~\ref{MainThm2}}
\begin{proof}
Throughout, we condition on the event 
\begin{equation}
F_n = \bigg\{\hatC_{jj}+\epsilon_n\bP_j^1\succeq\frac{\epsilon_n}{2}\I\quad\text{for }1\leq j\leq p\bigg\},
\end{equation}
where $P(F_n)\geq 1-\delta$ with $\delta = 2dp\epsilon_n^{-1}n^{-1/2}$ from Lemma~\ref{boundprobLemma}, and $d$ is a constant that does not depend on $n$ when $n$ is sufficiently large. 

To simplify notation, we write $\epsilon$ for $\epsilon_n$ and $\hat{\C}_{ij}$ for $\hat{\C}_{ij}^{(n)}$.  Since 
\[\sum_{j=1}^p\Var[\hphi_j-\phi_j^*] = \sum_{j=1}^p\langle\hphi_j-\phi_j^*, \C_{jj}(\hphi_j-\phi_j^*)\rangle_{k_j} = \sum_{j=1}^p \|\C_{jj}^{1/2}(\hphi_j-\phi_j^*)\|_{k_j}^2,\] 
it suffices to show that $\|\C_{jj}^{1/2}(\hphi_j-\phi_j^*)\|_{k_j}\longrightarrow 0$ in probability as $n\longrightarrow\infty$.  Using the fact that $\hat{f}^{(n)}_ j= (\Chat_{jj}+\epsilon\bP_j^1)^{1/2}\hphi_j$, $f_j^* = \C_{jj}^{1/2}\phi_j^*$, we have
\begin{align}
\|\C_{jj}^{1/2}(\hphi_j-\phi_j^*)\|_{k_j} &= \|\C_{jj}^{1/2}\Vhatright\hat{f}^{(n)}_j-f_j^*\|_{k_j} \nonumber\\
&\leq \|\C_{jj}^{1/2}\{(\Chat_{jj}+\epsilon\bP_j^1)^{-1/2}-(\C_{jj}+\epsilon\bP_j^1)^{-1/2}\}\hat{f}^{(n)}_j\|_{k_j} \label{firstterm}\\
&\qquad + \|\C_{jj}^{1/2}\Vright(\hat{f}^{(n)}_j-f_j^*)\|_{k_j} \label{secondterm} \\
&\qquad + \|\C_{jj}^{1/2}\Vright f_j^*-f_j^*\|_{k_j}. \label{thirdterm}
\end{align}

An upper bound for \eqref{firstterm} can be obtained using similar argument as that in the bound for \eqref{part1}.  First, consider the equality
\begin{align*}
A^{-1/2} - B^{-1/2} &= A^{-3/2}(B^{3/2}-A^{3/2})B^{-1/2}+A^{-3/2}(A-B) \\
&= A^{-3/2}[(B^{3/2}-A^{3/2})B^{-1/2}+A-B],
\end{align*}
and plug in $A = \Vrightno$ and $B = \Vhatrightno$, in which case the operator in \eqref{firstterm} becomes
\begin{align}
& -\C_{jj}^{1/2}(A^{-1/2}-B^{-1/2}) \nonumber\\
&= -\C_{jj}^{1/2}A^{-3/2}[(B^{3/2}-A^{3/2})B^{-1/2}+A-B] \nonumber\\
&= -\C_{jj}^{1/2}(\Vrightno)^{-3/2}[(B^{3/2}-A^{3/2})B^{-1/2}+A-B] \label{prodfinal}
\end{align}
Now we bound each term in the product \eqref{prodfinal}.  Since
\begin{align}
\|\C_{jj}^{1/2}(\Vrightno)^{-3/2}\| &\leq \|\C_{jj}^{1/2}(\Vrightno)^{-1/2}\|\|(\Vrightno)^{-1}\| \nonumber\\
&\leq 2\epsilon^{-1}, \label{firsttermdone1}
\end{align}
and
\begin{align}
&\|(B^{3/2}-A^{3/2})B^{-1/2}+A-B\| \nonumber\\
&= \|\{(\Vhatrightno)^{3/2}-(\Vrightno)^{3/2}\}\Vhatright + \C_{jj}-\hat{\C}_{jj}\| \nonumber\\
&\leq \|(\Vhatrightno)^{3/2}-(\Vrightno)^{3/2}\|\|\Vhatright\| + \|\C_{jj}-\hat{\C}_{jj}\| \nonumber\\
&\leq 3\lambda^{1/2}\|\hat{\C}_{jj}-\C_{jj}\|\cdot\sqrt{2}\epsilon^{-1/2} + \|\C_{jj}-\hat{\C}_{jj}\| \nonumber\\
&\qquad\text{where }\lambda = \max\{\|\Vhatrightno\|, \|\Vrightno\|\}, \text{ by Lemma~\ref{Fukumizu8}} \nonumber \\
&= \bigg(3\sqrt{2}\lambda^{1/2}\epsilon^{-1/2} + 1\bigg)\|\C_{jj}-\hat{\C}_{jj}\| \nonumber \\
&= O(\epsilon^{-1/2}n^{-1/2}), \label{firsttermdone2}
\end{align}
combining \eqref{firsttermdone1} and \eqref{firsttermdone2} gives an upper bound of order $O(\epsilon^{-3/2}n^{-1/2})$ on \eqref{firstterm}.

For \eqref{secondterm}, it follows from Theorem~\ref{MainThm1} that
\begin{align*}
\|\C_{jj}^{1/2}(\C_{jj}+\epsilon\bP_j^1)^{-1/2}(\hat{f}^{(n)}_j-f_j^*)\|_{k_j} &\leq \|\C_{jj}^{1/2}(\C_{jj}+\epsilon\bP_j^1)^{-1/2}\|\|\hat{f}^{(n)}_j-f_j^*\|_{k_j} \\
&\leq 2\|\hat{f}^{(n)}_j-f_j^*\|_{k_j} \longrightarrow 0
\end{align*}
in probability as $n\longrightarrow\infty$.  Finally, for \eqref{thirdterm}, using the fact that $f_j^* = \C_{jj}^{1/2}\phi_j^*$, we get
\begin{align*}
&\|\C_{jj}^{1/2}\Vright f_j^*-f_j^*\|_{k_j} \\
&= \|\C_{jj}^{1/2}\Vright\C_{jj}^{1/2}\phi_j^*-\C_{jj}^{1/2}\phi_j^*\|_{k_j} \\
&= \|\C_{jj}^{1/2}\Vright\{\C_{jj}^{1/2}-(\Vrightno)^{1/2}\}\phi_j^*\|_{k_j} \\
&\leq  \|\C_{jj}^{1/2}\Vright\|\|\C_{jj}^{1/2}-(\Vrightno)^{1/2}\|\|\phi_j^*\|_{k_j} \\
&\leq 2\|\phi_j^*\|_{k_j}\epsilon^{1/2}\longrightarrow 0,
\end{align*}
the second inequality follows from Lemma~\ref{facts}.
\end{proof}



\newpage
\section{Supporting Proofs for Theorems in Section~\ref{compute}}
\label{powerproof}

This section contains proofs for theorems in Section~\ref{compute}.  We present some facts about RKHS in Section~\ref{propRKHS}, followed by proofs of theorems in Section~\ref{proofSecAlg}. 

\subsection{Properties of Reproducing Kernel Hilbert Spaces}
\label{propRKHS}
The following properties of RKHS's will be useful for proving theorems in Section~\ref{compute}.

\begin{Lemma}\label{incont}
Let $\kxx$ be a kernel and $\Hsp$ its associated reproducing kernel Hilbert space. Then $\Hsp\subset\cont$ if and only if $\sup_{x\in\mX}k(x, x) <\infty$ and $k(x, \cdot)$ is continuous on $\mX$ for all $x\in\mX$.  Moreover, the inclusion $\Hsp\hookrightarrow\cont$ is continuous.
\end{Lemma}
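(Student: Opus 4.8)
The plan is to prove the two implications separately; the single tool that does all the work is the elementary bound $\|f\|_\infty \le \kappa^{1/2}\|f\|_k$ for $f\in\Hsp$, which holds as soon as the diagonal of $k$ is bounded, where $\kappa := \sup_{x\in\mX}k(x,x)$.

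For the ``if'' direction I would assume $\kappa<\infty$ and that each $k_x := k(x,\cdot)$ is continuous. First I would record the bound just mentioned: for every $f\in\Hsp$ the reproducing property and Cauchy--Schwarz give $|f(x)| = |\langle f, k_x\rangle_k| \le \|f\|_k\,\|k_x\|_k = \|f\|_k\,k(x,x)^{1/2} \le \kappa^{1/2}\|f\|_k$, so $\|f\|_\infty \le \kappa^{1/2}\|f\|_k$. Next I would note that $\Hsp_0 := \Span\{k_x : x\in\mX\}$ is dense in $\Hsp$: if $g\perp k_x$ for all $x$ then $g(x) = \langle g, k_x\rangle_k = 0$, hence $g=0$, so $\Hsp_0^\perp = \{0\}$ and $\overline{\Hsp_0} = \Hsp$. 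Every element of $\Hsp_0$ is a finite linear combination of the continuous functions $k_x$, hence continuous. Given an arbitrary $f\in\Hsp$, pick $f_n\in\Hsp_0$ with $\|f_n-f\|_k\to 0$; then $\|f_n-f\|_\infty \le \kappa^{1/2}\|f_n-f\|_k \to 0$, so $f$ is a uniform limit of continuous functions and therefore lies in $\cont$. Thus $\Hsp\subset\cont$, and the bound $\|f\|_\infty \le \kappa^{1/2}\|f\|_k$ is exactly the continuity of the inclusion $\Hsp\hookrightarrow\cont$.

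For the ``only if'' direction I would assume $\Hsp\subset\cont$. Since $k(x,\cdot) = k_x\in\Hsp$ for every $x$, continuity of $k(x,\cdot)$ for all $x$ is immediate. It remains to bound the diagonal. Because $\mX$ is compact, $(\cont,\|\cdot\|_\infty)$ is a Banach space, and I would apply the closed graph theorem to the inclusion $\iota:\Hsp\to\cont$: if $f_n\to f$ in $\Hsp$ and $f_n\to g$ uniformly, then both modes of convergence force pointwise convergence --- the first because $f_n(x) = \langle f_n, k_x\rangle_k \to \langle f, k_x\rangle_k = f(x)$ --- so $f=g$, the graph is closed, and $\iota$ is bounded: $\|f\|_\infty \le M\|f\|_k$ for some finite $M$. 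Taking $f = k_x$ gives $k(x,x) = k_x(x) \le \|k_x\|_\infty \le M\|k_x\|_k = M\,k(x,x)^{1/2}$, hence $k(x,x) \le M^2$ for all $x$, i.e. $\sup_{x\in\mX}k(x,x)<\infty$. This same argument re-establishes that the inclusion is continuous, giving the ``moreover'' claim.

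I do not expect a serious obstacle here. The one point that needs care is that the hypothesis supplies only \emph{separate} continuity of $k$ (continuity of each $k_x$), not joint continuity; what bridges the gap is the combination of density of $\Hsp_0$ with the uniform estimate $\|\cdot\|_\infty \le \kappa^{1/2}\|\cdot\|_k$, which promotes continuity of the generators $k_x$ to continuity of every element of $\Hsp$. The converse direction is handled cleanly by the closed graph theorem, using only that $\mX$ is compact so that $\cont$ with the sup norm is complete.
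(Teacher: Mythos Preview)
Your proof is correct. The ``if'' direction matches the paper's argument essentially verbatim: the bound $|f(x)|\le \|f\|_k\sqrt{k(x,x)}\le \kappa^{1/2}\|f\|_k$, density of $\Span\{k_x\}$ in $\Hsp$, and passage to uniform limits of continuous functions.

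The ``only if'' direction is where you diverge. The paper applies the principle of uniform boundedness directly to the family $\{k_x : x\in\mX\}$ of bounded linear functionals on $\Hsp$: since $\Hsp\subset\cont$ and $\mX$ is compact, each $f\in\Hsp$ is bounded, so $\sup_x|\langle k_x,f\rangle_k|=\sup_x|f(x)|=\|f\|_\infty<\infty$; Banach--Steinhaus then gives $\sup_x\|k_x\|_k<\infty$, i.e.\ $\sup_x k(x,x)<\infty$. You instead run the closed graph theorem on the inclusion $\iota:\Hsp\to\cont$ to first obtain $\|f\|_\infty\le M\|f\|_k$, and then specialize to $f=k_x$. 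Both are valid; the paper's route is slightly more direct (it targets the diagonal bound without detouring through boundedness of the full inclusion), while your route has the pleasant feature that continuity of the inclusion falls out immediately as a byproduct rather than being reassembled afterward from the diagonal bound. Either way the compactness of $\mX$ is what makes the argument go through, as you correctly flag.
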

\begin{proof}
If $\Hsp\subset\cont$, then $k(x, \cdot)\in\cont$.  Moreover, for each $f\in\Hsp$, we have $|\langle k_x, f\rangle_k| = |f(x)| \leq \|f\|_\infty$ for all $x\in\mX$. The principle of uniform boundedness implies that there exists $M<\infty$ such that $\|k_x\|_k\leq M$ for all $x\in\mX$.  It follows that $\sup_{x\in\mX}k(x, x)\leq M^2<\infty$.

Conversely, assume that $k(x, x)\leq M^2$ and $k(x, \cdot)\in\cont$ for all $x\in\mX$.  Given $f\in\Hsp$, we have
\begin{equation}
|f(x)| = |\langle f, k_x\rangle_k| \leq \|f\|_k\|k_x\|_k = \|f\|_k\sqrt{k(x, x)} \leq M\|f\|_k, \qquad\forall x\in\mX. \label{unifconv}
\end{equation}
Hence, convergence in $\Hsp$ implies uniform convergence, so the closure of $\Span{\{k(x, \cdot): x\in\mX\}}$ (with respect to $\|\cdot\|_k$) is contained in $\cont$, i.e. $\Hsp\subset\cont$.  The continuity of inclusion follows from $\|f\|_\infty\leq M\|f\|_k$.
\end{proof}

\begin{Corollary}\label{corunifconv}
Let $\{f_n\}\subset\Hsp$, then $\|f_n-f\|_k\longrightarrow 0$ for some $f\in\Hsp$ implies that $\|f_n-f\|_\infty\longrightarrow 0$.
\end{Corollary}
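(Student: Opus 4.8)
The statement is an immediate corollary of Lemma~\ref{incont}, and the plan is simply to invoke the norm bound established in its proof. First I would note that, since $\Hsp$ is a reproducing kernel Hilbert space associated with the kernel $k$ appearing in Lemma~\ref{incont} (and the hypotheses of that lemma are in force in the setting where this corollary is used, i.e.\ $\sup_{x\in\mX}k(x,x)<\infty$ and $k(x,\cdot)$ continuous for each $x$), the inclusion $\Hsp\hookrightarrow\cont$ is continuous; concretely, there is a constant $M<\infty$ with $\|g\|_\infty\le M\|g\|_k$ for every $g\in\Hsp$ — this is exactly inequality~\eqref{unifconv} in the proof of Lemma~\ref{incont}, with $M = \sqrt{\sup_{x\in\mX}k(x,x)}$.

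Next I would apply this bound to $g = f_n - f$, which lies in $\Hsp$ because $\Hsp$ is a linear space and $f_n, f\in\Hsp$. This gives
\[
\|f_n - f\|_\infty \;\le\; M\,\|f_n - f\|_k
\]
for every $n$. Letting $n\to\infty$ and using the hypothesis $\|f_n-f\|_k\to 0$ yields $\|f_n-f\|_\infty\to 0$, which is the claim.

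There is essentially no obstacle here; the only point requiring a word of care is to make sure the uniform-boundedness constant $M$ is finite, which is guaranteed by the standing assumption $\sup_{x\in\mX}k(x,x)<\infty$ (equivalently, by the ``moreover'' clause of Lemma~\ref{incont} asserting continuity of the inclusion). Thus the proof is a one-line consequence of the preceding lemma.
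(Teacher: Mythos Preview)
Your proposal is correct and matches the paper's approach exactly: the paper's proof simply reads ``Immediate from \eqref{unifconv},'' and you have spelled out precisely that deduction by applying the bound $\|g\|_\infty \le M\|g\|_k$ to $g = f_n - f$.
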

\begin{proof}
Immediate from \eqref{unifconv}.
\end{proof}

\begin{Lemma}\label{inL2}
Let $\kxx$ be a kernel and $\Hsp$ its associated reproducing kernel Hilbert space. Let $X$ be a random variable taking values in $\mX$ with induced probability measure $P$ on $\mX$.  Then $E[k(X, X)] < \infty$ implies that $\Hsp\subset L^2(\mX, dP)$, and the inclusion $\Hsp\hookrightarrow L^2(\mX, dP)$ is continuous.
\end{Lemma}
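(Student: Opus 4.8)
The plan is the standard pointwise-to-$L^2$ comparison. First I would fix $f \in \Hsp$ and use the reproducing property together with Cauchy--Schwarz to get, for every $x \in \mX$,
\[
|f(x)|^2 = |\langle f, k_x\rangle_k|^2 \le \|f\|_k^2\,\|k_x\|_k^2 = \|f\|_k^2\, k(x,x).
\]
Integrating this bound against the law $P$ of $X$ then gives
\[
\int_\mX |f(x)|^2\, dP(x) \le \|f\|_k^2 \int_\mX k(x,x)\, dP(x) = \|f\|_k^2\, E[k(X,X)] < \infty,
\]
so $f \in L^2(\mX, dP)$ with $\|f\|_{L^2(\mX,dP)} \le \big(E[k(X,X)]\big)^{1/2}\|f\|_k$. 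This simultaneously establishes the inclusion $\Hsp \subset L^2(\mX, dP)$ and, since the inclusion map $\iota : f \mapsto [f]$ is linear and the last display is the uniform bound $\|\iota f\|_{L^2(\mX,dP)} \le \big(E[k(X,X)]\big)^{1/2}\|f\|_k$, its boundedness, hence continuity, as an operator $\Hsp \hookrightarrow L^2(\mX, dP)$ with operator norm at most $\big(E[k(X,X)]\big)^{1/2}$.

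The only point needing care is measurability of the elements of $\Hsp$, which is what makes $E[k(X,X)]$ and $\int |f|^2\, dP$ meaningful and $[f]$ a genuine element of $L^2(\mX, dP)$. I would dispose of this by the usual density argument: $x \mapsto k(x,x)$ and each $x \mapsto k(x,x')$ are Borel measurable (indeed continuous in the settings considered in this paper), and the estimate $|f_n(x) - f(x)| \le \|f_n - f\|_k \sqrt{k(x,x)}$ shows that $\|\cdot\|_k$-convergence forces pointwise convergence; hence the set of Borel-measurable elements of $\Hsp$ is a $\|\cdot\|_k$-closed subspace containing every $k_{x'}$, and therefore equals $\Hsp$. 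One should also record that $\iota$ need not be injective in general (it is when $P$ has full support), but the statement only asserts continuity of the inclusion, not injectivity.

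I do not expect any real obstacle here: the mathematical content is just the reproducing-property bound followed by one integration, and the only bookkeeping is the measurability remark and the identification of an element of $\Hsp$ with its $P$-equivalence class as above.
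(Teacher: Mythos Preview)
Your proposal is correct and matches the paper's proof exactly: the paper's argument is the one-line display
\[
E[f^2(X)] = E[\langle f, k(X,\cdot)\rangle_k^2] \le \|f\|_k^2\, E[k(X,X)],
\]
which is precisely your reproducing-property-plus-Cauchy--Schwarz bound followed by integration. The paper omits your measurability and injectivity remarks entirely, so if anything your write-up is more careful than what appears there.
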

\begin{proof}
The claim follows from
\[E[f^2(X)] = E[\langle f, k(X, \cdot)\rangle_k^2] \leq E[\|f\|^2_k\|k(X, \cdot)\|_k^2] = \|f\|_k^2E[k(X, X)], \qquad\forall f\in\Hsp.\]
\end{proof}

The readers are referred to \cite{ReedSimon}, \cite{Scholkopf2002} and \cite{Aronszajn1950} for more details on functional analysis, kernel methods and reproducing kernels, respectively.

\subsection{Proofs}
\label{proofSecAlg}
When Assumption~\ref{assumeRKHS} holds and $\Hsp_j^0$ excludes constants, for $f\in\Hsp_j^0$, $\|f\|_{p_j}^2 := \Var[f(X_j)] = 0$ if and only if $f\equiv 0$.  By Remark~\ref{RemarkConstructRKHS}, this implies that $\Hsp_j^0$ can be turned into an RKHS with inner product $\langle\cdot, \cdot\rangle_{P_j}$.  Without loss of generality, in the following proofs, we define $\langle\cdot, \cdot\rangle_{k_j^0} := \langle\cdot, \cdot\rangle_{P_j}$.

\subsection*{Proof of Theorem~\ref{RKHSstar}}
\begin{proof}
To simplify notation, we omit the subscript $j$.  Also, we define the semi-norm
\[\langle f, g\rangle_{P} := \Cov[f(X), g(X)], \qquad f, g\in L^2(\mX, dP),\]
where $X$ is a random variable taking values in $\mX$ with probability distribution $P$.

We first check that $\langle\cdot, \cdot\rangle_\star$ defines an appropriate inner product on $\Hsp$.  Clearly, $\langle\cdot, \cdot\rangle_\star$ is symmetric, bilinear and positive semi-definite.  When $f = f^0+f^1$ with $f^0\in\Hsp^0, f^1\in\Hsp^1$ satisfies $\|f\|_\star = 0$, we have $\|f^1\|_{k^1} = 0$ and $\Var[f(X)] = 0$.  This in turn implies that $f^1 \equiv 0$, and $f^0$ is a constant function almost surely.  Since $\Hsp^0\subset C(\mX)$ and $\Hsp^0$ excludes constant, this means that $f^0 \equiv 0$.  Therefore, $f \equiv 0$ if and only if $\|f\|_\star = 0$.

To see that $\Hsp$ is a Hilbert space with respect to $\langle\cdot, \cdot\rangle_\star$, we need to show that if $\{f_n\}$ is a Cauchy sequence in $\Hsp$ with respect to 
$\|\cdot\|_\star$, then $\{f_n\}$ converges in the norm $\|\cdot\|_\star$ to some $f^*\in\Hsp$.
So suppose that $\{f_n\}$ is Cauchy with respect to $\|\cdot\|_\star$, then it is also Cauchy with respect to $\|\cdot\|_{k^1}$ and $\|\cdot\|_{P}$.  We decompose $f_n = f_n^0 + f_n^1$, with $f_n^0\in\Hsp^0$ and $f_n^1\in\Hsp^1$.  Since $\|f_n\|_{k^1} = \|f_n^1\|_{k^1}$, and $\Hsp^1$ is itself an RKHS with respect to $\langle\cdot, \cdot\rangle_{k^1}$, we see that $\{f_n^1\}$ is a Cauchy sequence in $\Hsp^1$, so there exists a unique $f^{*1}\in\Hsp^1$ such that 
\begin{equation}
\|f_n^1-f^{*1}\|_{k^1} \longrightarrow 0. \label{convf1k}
\end{equation}
On the other hand, by Corollary~\ref{corunifconv} we know that $\|f_n^1-f^{*1}\|_{k^1}\longrightarrow 0$ implies $\|f_n^1-f^{*1}\|_\infty\longrightarrow 0$.  So
\begin{equation}
\|f_n^1-f^{*1}\|_{P} \leq 2\|f_n^1-f^{*1}\|_\infty \longrightarrow 0. \label{convf1}
\end{equation}

Now we consider $\{f_n^0\}$, and we want to show that there exists $f^{*0}\in\Hsp^0$ such that $\|f_n^0-f^{*0}\|_{P}\longrightarrow 0$.  We know that $\{f_n\}$ and $\{f_n^1\}$ are Cauchy with respect to $\|\cdot\|_{P}$, so for every $\epsilon>0$, there exists $N(\epsilon)$ such that if $m\geq n\geq N(\epsilon)$, we have
\[\|f_n^0-f_m^0 + f_n^1-f_m^1\|_{P} =\|f_n-f_m\|_{P} <\frac{\epsilon}{2}, \qquad\text{and}\qquad\|f_n^1-f_m^1\|_{P}<\frac{\epsilon}{2},\]
which implies that 
\[\|f_n^0 - f_m^0\|_{P} \leq \|f_n^1 - f_m^1\|_{P} + \|f_n - f_m\|_{P} < \epsilon,\]
so $\{f_n^0\}$ is also Cauchy with respect to $\|\cdot\|_{P}$.  Since $\Hsp^0$ is finite dimensional and $\langle\cdot, \cdot\rangle_{P}$ induces a (strict) norm on $\Hsp^0$, $\Hsp^0$ is complete with respect to $\|\cdot\|_{P}$, so there exists $f^{*0}\in\Hsp^0$ such that 
\begin{equation}
\|f_n^0-f^{*0}\|_{P}\longrightarrow 0. \label{convf0}
\end{equation}

Finally, let $f^* = f^{*0} + f^{*1}$, we have $f^*\in\Hsp$.  Combining \eqref{convf1k}, \eqref{convf1} and \eqref{convf0}, we see that
\begin{align*}
\|f_n-f^*\|_\star^2 &= \|f_n-f^*\|_{P}^2 + \alpha\|f_n-f^*\|_{k^1}^2 \\
&\leq 2\|f_n^0-f^{*0}\|_{P}^2 + 2\|f_n^1-f^{*1}\|_{P}^2 + \alpha\|f_n^1-f^{*1}\|_{k^1}^2 \longrightarrow 0.
\end{align*}
Since every Cauchy sequence converges in $(\Hsp, \langle\cdot, \cdot\rangle_\star)$, we see that $\Hsp$ is a Hilbert space with respect to the inner product $\langle\cdot, \cdot\rangle_\star$.

To check the reproducing property of $\Hsp$ with respect to $\langle\cdot, \cdot\rangle_\star$, we need to show that the evaluation functionals $\delta_x(f) = f(x)$ are bounded for all $x\in\mX$.  Suppose $f = f^0 + f^1, f^0\in\Hsp^0, f^1\in\Hsp^1$.  Under Assumption~\ref{assumeRKHS},
\begin{equation}
|f^1(x)| = |\langle f^1, k^1_x\rangle_{k^1}| \leq \|k_x^1\|_{k^1}\|f^1\|_{k^1} \leq \sqrt{k^1(x, x)}\frac{1}{\sqrt{\alpha}}\|f\|_\star \leq M\|f\|_\star, \label{f1bound}
\end{equation}
where $M = \sup_{x\in\mX}\sqrt{k^1(x, x)}\frac{1}{\sqrt{\alpha}}<\infty$.  On the other hand, since $\langle\cdot, \cdot\rangle_{k^0} := \langle\cdot, \cdot\rangle_{P}$ and $\|f^1\|_{P}\leq 2\|f^1\|_\infty\leq 2M\|f\|_\star$, we have
\begin{align*}
|f^0(x)| &=  |\langle f^0, k^0_x\rangle_{k^0}| \leq \|k_x^0\|_{k^0}\|f^0\|_{k^0} = \|k_x^0\|_{P}\|f^0\|_{P} \\
&\leq \|k_x^0\|_{P}(\|f\|_{P} + \|f^1\|_{P}) \leq \|k_x^0\|_{P}(\|f\|_\star + 2M\|f\|_\star) = d_x\|f\|_\star
\end{align*}
where $d_x = \|k_x^0\|_{P}(2M+1)$.  So $|f(x)| \leq |f^0(x)|+|f^1(x)| \leq C_x\|f\|_\star$ for some constant $C_x<\infty$ for all $x\in\mX$.  This then implies that $\Hsp$ is an RKHS with respect to $\langle\cdot, \cdot\rangle_\star$.  
\end{proof}

\subsection*{Proof of Theorem~\ref{milestone}}
\begin{proof}
Throughout, it is understood that $\Hsp_j\subset L^2(\mX_j, dP_j)$ is endowed with $\langle\cdot, \cdot\rangle_{\star_j}$, for $1\leq j\leq p$.  To see that $\bS_{ij}$ is well-defined, we need to show the existence and uniqueness of the solution of the ``generalized" regularized population regression problem.  First, note that given $\phi_j\in\Hsp_j$, the operator $\Cov[\phi_j(X_j), \cdot(X_i)] : \Hsp_i\longrightarrow\R$ is a bounded linear functional on $\Hsp_i$.  By the Riesz Representation Theorem, there exists a unique $h\in\Hsp_i$ such that $\Cov[\phi_j(X_j), f(X_i)] = \langle h, f\rangle_{\star_i}$ for all $f\in\Hsp_i$.  It then follows that
\begin{align*}
&\argmin_{f\in\Hsp_i} \left\{\Var[\phi_j(X_j)-f(X_i)] + \alpha_i\|f\|_{k_i^1}^2\right\}  \\
&= \argmin_{f\in\Hsp_i} \left\{-2\Cov[\phi_j(X_j), f(X_i)] + \Var[f(X_i)] + \alpha_i\|f\|_{k_i^1}^2\right\} \\
&= \argmin_{f\in\Hsp_i} \left\{-2\langle h, f\rangle_{\star_i} + \|f\|_{\star_i}^2\right\} \\
&= h.
\end{align*}
That is, we have $\bS_{ij}\phi_j = h$, where $h$ is unique and satisfies $\Cov[\phi_j(X_j), f(X_i)] = \langle h, f\rangle_{\star_i}$ for all $f\in\Hsp_i$.  Equivalently, 
\[\Cov[\phi_i(X_i), \phi_j(X_j)] = \langle \phi_i, \bS_{ij}\phi_j\rangle_{\star_i}, \qquad\forall\phi_i\in\Hsp_i, \phi_j\in\Hsp_j.\]
To check the properties of $\bS_{ij}$, we first note that we can decompose the operation of $\bS_{ij}$ as follows:
\begin{flalign*}
\bS_{ij}: & \ \Hsp_j 
&& \hspace{-2cm}\stackrel{\I}\longrightarrow L^2(\mX_j, dP_j) 
&& \hspace{-2cm}\stackrel{\bT}\longrightarrow \Hsp_i^* 
&& \hspace{-2cm}\stackrel{\bR}\longrightarrow \Hsp_i, \\
& \ \phi_j 
&& \hspace{-2cm}\stackrel{\I}\longmapsto  \phi_j 
&& \hspace{-2cm}\stackrel{\bT}\longmapsto \Cov[\phi_j(X_j), \cdot(X_i)] 
&& \hspace{-2cm}\stackrel{\bR}\longmapsto h.
\end{flalign*}
Here $\Hsp_i^*$ denotes the dual space of $\Hsp_i$, and consists of bounded linear functionals defined on $\Hsp_i$.  On the other hand, $\I: \Hsp_j\longrightarrow L^2(\mX_j, dP_j)$ denotes the inclusion of $\Hsp_j$ into $L^2(\mX_j, dP_j)$, $\bT: L^2(\mX_j, dP_j)\longrightarrow\Hsp_i^*$ denotes the one-to-one correspondence between the function $\phi_j\in L^2(\mX_j, dP_j)$ (in fact, its corresponding equivalence class with respect to the squared-norm $\Var[\cdot(X_j)]$) and the bounded linear functional $\Cov[\phi_j(X_j), \cdot(X_i)] : \Hsp_i\longrightarrow\R$, and $\bR: \Hsp_i^*\longrightarrow\Hsp_i$ is the isomorphism between $\Cov[\phi_j(X_j), \cdot(X_i)]\in\Hsp_i^*$ and the function $h\in\Hsp_i$.  Therefore, we have $\bS_{ij} = \bR\bT\I$.
\begin{itemize}
\item {\bf Linearity and boundedness:} \\
$\bS_{ij}: \Hsp_j\longrightarrow\Hsp_i$ is a bounded linear operator as long as each of $\bR, \bT, \I$ is.  That $\bR$ and $\I$ are bounded and linear follows immediately.  To check the linearity of $\bT$, for all $f\in\Hsp_i$ and $a, b\in\R$, we have 
\begin{align*}
\bT(a\phi_j+b\psi_j)(f) &= \Cov[a\phi_j(X_j)+b\psi_j(X_j), f(X_i)] \\
&= a\Cov[\phi_j(X_j), f(X_i)] + b\Cov[\psi_j(X_j), f(X_i)] \\
&= a\bT(\phi_j)(f) + b\bT(\psi_j)(f),
\end{align*}
so linearity follows.  We now check that $\bT$ is bounded with operator norm less than or equal to one.
\begin{align*}
\|\bT\| &= \sup_{\phi_j: \Var[\phi_j(X_j)]\leq 1}\|\bT(\phi_j)\| \\
&= \sup_{\phi_j: \Var[\phi_j(X_j)]\leq 1}\ \sup_{f:\|f\|_{\star_i}\leq 1}|\Cov[\phi_j(X_j), f(X_i)]| \\
&\leq \sup_{\phi_j: \Var[\phi_j(X_j)]\leq 1}\ \sup_{f:\|f\|_{\star_i}\leq 1} \big(\Var[\phi_j(X_j)]\Var[f(X_i)]\big)^{1/2} \\
&\leq \sup_{\phi_j: \Var[\phi_j(X_j)]\leq 1}\ \sup_{f:\|f\|_{\star_i}\leq 1}\big(\Var[\phi_j(X_j)]\big)^{1/2}\|f\|_{\star_i} \\
&=1.
\end{align*}
In fact, in the case when $i=j$, the operator norm of $\bT$ is exactly equal to one provided $\Dim(\Hsp_j^0)>0$, since we can pick $\phi_j\in\Hsp_j^0$ with $\Var[\phi_j(X_j)]=1$, in which case $\|\phi_j\|_{\star_j} = (\Var[\phi_j(X_j)])^{1/2} = 1$ and both inequalities above become equalities.

\item {\bf Compactness:} \\
Using the fact that the product of a bounded linear operator and a compact operator is compact (see, e.g. \cite{ReedSimon}, Theorem VI.12(c)), it suffices to show that one of $\bR, \bT$ or $\I$ is compact.  We are off to show the compactness of $\I$, which requires that every bounded sequence $\{f_n\}$ in $(\Hsp_j, \langle\cdot, \cdot\rangle_{\star_j})$ has a convergent subsequence in $L^2(\mX_j, dP_j)$ (endowed with squared-norm $\Var[\cdot(X_j)]$).  

To simplify notation, we omit the subscript $j$.  Also, we define the semi-norm
\[\langle f, g\rangle_{P} := \Cov[f(X), g(X)], \qquad f, g\in L^2(\mX, dP),\]
where $X$ is a random variable taking values in $\mX$ with probability distribution $P$.

This proof idea is to establish that $\{f_n\}\subset\Hsp$ is uniformly bounded and equicontinuous and then invoke the Arzel\`{a}-Ascoli Theorem.  We first recall that $\Hsp$ is also an RKHS with respect to $\langle\cdot, \cdot\rangle_k$.  
If $\|f_n\|_{\star}\leq B$, then by definition we have
\[\|f_n^1\|_{k^1} \leq \frac{\|f_n\|_{\star}}{\sqrt{\alpha}}\leq \frac{B}{\sqrt{\alpha}}\]
and
\begin{align*}
\|f_n^0\|_{k^0} &= \|f_n^0\|_{P} \leq \|f_n^1\|_{P} + \|f_n\|_{P} \leq 2\|f_n^1\|_\infty + \|f_n\|_\star \\
&\leq 2M\|f_n\|_\star + \|f_n\|_\star \leq (2M+1)B,
\end{align*}
where the second to the last inequality follows from \eqref{f1bound}. So 
\[\|f_n\|_{k} = \|f_n^0 + f_n^1\|_k \leq \|f_n^0\|_{k^0} + \|f_n^1\|_{k^1} \leq C,\]
where $C = (2M+1)B+B/\sqrt{\alpha} < \infty$.  

Under Assumption~\ref{assumeRKHS}, $k$ is uniformly continuous on $\mX\times\mX$ and $\sup_{x\in\mX}k(x, x)<\infty$. Since
\[|f_n(x)| = |\langle f_n, k_x\rangle_k| \leq \|f_n\|_k\|k_x\|_k = \|f_n\|_k\sqrt{k(x, x)},\]
and
\[\|f_n\|_\infty \leq  \|f_n\|_k\sqrt{\sup_{x\in\mX} k(x, x)} \leq C\sqrt{\sup_{x\in\mX} k(x, x)} <\infty,\]
$\{f_n\}$ is uniformly bounded.  To check that $\{f_n\}$ is equicontinuous, note that
\begin{equation}
|f_n(x) - f_n(x')| = |\langle f_n, k_x-k_{x'}\rangle_k| \leq \|f_n\|_k\|k_x-k_{x'}\|_k \leq C\|k_x-k_{x'}\|_k,  \label{equi}
\end{equation}
and
\begin{align}
\|k_x-k_{x'}\|_k^2 &= \langle k_x-k_{x'}, k_x-k_{x'}\rangle_k = k(x, x) - 2k(x, x') + k(x', x') \nonumber\\
&\leq 2\sup_{z\in\mX}|k(z, x)-k(z, x')|. \label{equi2}
\end{align}
It then follows from \eqref{equi}, $\eqref{equi2}$ and the uniform continuity of $k$ on $\mX\times\mX$ that $\{f_n\}$ is equicontinuous.  Applying the Arzel\`{a}-Ascoli Theorem, $\{f_n\}$ contains a uniformly convergent subsequence $\{f_{n_k}\}$. Since $\|f_{n_k}\|_{P} \leq 2\|f_{n_k}\|_\infty$, it follows that $\{f_{n_k}\}$ also converges in $L^2(\mX, dP)$. So the inclusion $\I: \Hsp\hookrightarrow L^2(\mX, dP)$ is compact.
\end{itemize}

To show \eqref{contraction}, recall that Riesz Representation Theorem tells us that if $\ell$ is a bounded linear functional on $\Hsp$ with representer $h_\ell\in\Hsp$, i.e. $\ell(f) = \langle h_\ell, f\rangle_\star$ for all $f\in\Hsp$, then $\|\ell\| = \|h_\ell\|_\star$.  In the case where $\ell(f) = \Cov[\phi_j(X_j), f(X_i)] = \langle h, f\rangle_{\star_i}$ for all $f\in\Hsp_i$, we have 
\begin{align*}
\|\bS_{ij}\phi_j\|_{\star_i} &= \|h\|_{\star_i} = \|\Cov[\phi_j(X_j), \cdot(X_i)]\| = \sup_{\|f\|_{\star_i}\leq 1} |\Cov[\phi_j(X_j), f(X_i)]| \\
&\leq \sup_{\|f\|_{\star_i}\leq 1}\big(\Var[\phi_j(X_j)]\Var[f(X_i)]\big)^{1/2} \\
&\leq \sup_{\|f\|_{\star_i}\leq 1}\big(\Var[\phi_j(X_j)]\big)^{1/2}\|f\|_{\star_i} = \big(\Var[\phi_j(X_j)]\big)^{1/2} \leq \|\phi_j\|_{\star_j}.
\end{align*}
\end{proof}

\subsection*{Proof of Theorem~\ref{restateAPCwithS}}
\begin{proof}
First, note that we can rewrite the optimization criterion in the kernelized population APC problem as
\begin{align*}
&\Var\bigg[\sum_i\phi_i(X_i)\bigg] + \sum_i\alpha_i\|\phi_i\|_{k^1_i}^2 \\
&= \sum_i\Var[\phi_i(X_i)] + \sum_i\alpha_i\|\phi_i\|_{k^1_i}^2 + \sum_i\sum_{j\neq i}\Cov[\phi_i(X_i), \phi_j(X_j)] \\
&= \sum_i\|\phi_i\|_{\star_i}^2 + \sum_i\sum_{j\neq i}\langle\phi_i, \bS_{ij}\phi_j\rangle_{\star_i} \\
&= \sum_i\bigg\langle\phi_i, \sum_{j\neq i}\bS_{ij}\phi_j + \phi_i\bigg\rangle_{\star_i} \\
&= \langle\bPhi, \tS\bPhi\rangle_\star \qquad\geq 0.
\end{align*}
Hence, $\tS$ is positive.  That the constraint $\sum\Var\phi_i(X_i) + \sum\alpha_i\|\phi_i\|_{k^1_i}^2 = \langle\bPhi, \bPhi\rangle_\star$ follows by definition.

To see that $\tS$ is self-adjoint, we need to show that $\langle\bPhi, \tS\bPsi\rangle_\star = \langle\tS\bPhi, \bPsi\rangle_\star$.  Since 
\[\Cov[\phi_i(X_i), \psi_j(X_j)] = \langle\phi_i, \bS_{ij}\psi_j\rangle_{\star_i} = \langle \bS_{ji}\phi_i, \psi_j\rangle_{\star_j}, \qquad\forall\phi_i\in\Hsp_i, \psi_j\in\Hsp_j,\]
giving $\bS_{ij}^* = \bS_{ji}$, we have
\begin{align*}
\langle\bPhi, \tS\bPsi\rangle_\star &= \sum_i\langle\phi_i, [\tS\bPsi]_i\rangle_{\star_i} = \sum_i\bigg\langle\phi_i, \sum_{j\neq i}\bS_{ij}\psi_j + \psi_i\bigg\rangle_{\star_i} \\
&= \sum_i\sum_{j\neq i}\langle\phi_i, \bS_{ij}\psi_j\rangle_{\star_i} + \sum_i\langle\phi_i, \psi_i\rangle_{\star_i} = \sum_j\sum_{i\neq j}\langle \bS_{ji}\phi_i, \psi_j\rangle_{\star_j} + \sum_j\langle\phi_j, \psi_j\rangle_{\star_j} \\
&= \sum_j \bigg\langle \sum_{i\neq j}\bS_{ji}\phi_i + \phi_j, \psi_j\bigg\rangle_{\star_j} = \sum_j\langle [\tS\bPhi]_j, \psi_j\rangle_{\star_j} = \langle\tS\bPhi, \bPsi\rangle_\star,
\end{align*}
and so $\tS$ is self-adjoint.

To check that $\tS$ is bounded above by $p$, by \eqref{contraction}, we have $\|\bS_{ij}\phi_j\|_{\star_i} \leq \|\phi_j\|_{\star_j}$.  Therefore, 
\begin{align*}
\|\tS\bPhi\|_\star^2 &= \sum_{i=1}^p\bigg\|\sum_{j\neq i}\bS_{ij}\phi_j + \phi_i\bigg\|_{\star_i}^2 \leq \sum_{i=1}^p \bigg(\sum_{j\neq i}\|\bS_{ij}\phi_j\|_{\star_i} + \|\phi_i\|_{\star_i}\bigg)^2 \\
&\leq \sum_{i=1}^p\bigg(\sum_{j=1}^p\|\phi_j\|_{\star_j}\bigg)^2 \qquad\text{since $\|\bS_{ij}\phi_j\|_{\star_i} \leq \|\phi_j\|_{\star_j}$} \\
&\leq p\cdot p\sum_{j=1}^p\|\phi_j\|_{\star_j}^2 \qquad\text{since $\bigg(\sum_{j=1}^pa_j\bigg)^2 \leq p\sum_{j=1}^pa_j^2$ if $a_j\geq 0$ for $1\leq j\leq p$} \\
&= p^2 \|\bPhi\|_\star^2,
\end{align*}
so $\|\tS\| = \sup_{\|\bPhi\|_\star\leq 1}\|\tS\bPhi\|_\star \leq p$.
\end{proof}

\subsection*{Proof of Theorem~\ref{SminusIcompact}}
\begin{proof}
We can decompose $\tS - \I$ as $\tS-\I = \sum_{j=1}^p\bT_j$, where $\bT_j: \bHsp\longrightarrow\bHsp$ is defined by
\[\bT_j(\bPhi) = (\bS_{1j}(\phi_j), \ldots, \bS_{j-1,j}(\phi_j), \0, \bS_{j+1,j}(\phi_j), \ldots, \bS_{pj}(\phi_j)).\]
It is sufficient to show that every summand $\bT_j$ is compact.  Let $\bB$ be the unit ball in $\bHsp$, our goal is then to show that $\bT_j(\bB)$ is a relatively compact set in $\bHsp$.  Let $B_j$ be the unit ball in $\Hsp_j$, then since $\bB\subset B_1\times\cdots\times B_p$, compactness of $\bT_j$ follows if $\bT_j(B_1\times\cdots\times B_p)$ is shown to be relatively compact.  By Theorem~\ref{milestone}, $\bS_{ij}(B_j)$ is relatively compact in $\Hsp_j$, so
\[\bT_j(B_1\times\cdots\times B_p) = \bS_{1j}(B_j)\times\cdots\times\bS_{j-1,j}(B_j)\times\{0\}\times\bS_{j+1,j}(B_j)\times\cdots\times\bS_{pj}(B_j)\]
is relatively compact in $\bHsp$, since the norm topology and the product topology coincide. 
\end{proof}

\subsection*{Proof of Proposition~\ref{powerconv}}
\begin{proof}
Let $\bM = \gamma\I-\tS$, where $\gamma = (p+1)/2$.  Then $\tilde{\bPhi}$ is the unit eigenfunction corresponding to the largest eigenvalue $\lambda$ of $\bM$, and it is assumed that $\lambda$ has multiplicity one. By assumption, the power algorithm is initialized with $\bPhi^{[0]}$ that satisfies
\[\bPhi^{[0]} = a_0\tilde{\bPhi}+\bPsi^{[0]}, \qquad\text{where }\bPsi^{[0]}\perp\tilde{\bPhi}\text{ and }a_0>0.\]
Let
\[\bPhi^{[t+1]} = \frac{\bM\bPhi^{[t]}}{\|\bM\bPhi^{[t]}\|_\star},\]
and suppose that
\[\bPhi^{[t]} = a_t\tilde{\bPhi}+\bPsi^{[t]}, \qquad\text{where }\bPsi^{[t]}\perp\tilde{\bPhi}.\]
Then
\[\bPhi^{[t+1]} = \frac{\bM\bPhi^{[t]}}{\|\bM\bPhi^{[t]}\|_\star} = \frac{\bM(a_t\tilde{\bPhi}+\bPsi^{[t]})}{\|\bM\bPhi^{[t]}\|_\star} = \frac{a_t\lambda}{\|\bM\bPhi^{[t]}\|_\star}\tilde{\bPhi} + \frac{\bM \bPsi^{[t]}}{\|\bM\bPhi^{[t]}\|_\star}.\]
Matching the coefficients, we see that
\begin{equation}
a_{t+1} = \frac{a_t\lambda}{\|\bM\bPhi^{[t]}\|_\star},\qquad \bPsi^{[t+1]} = \frac{\bM \bPsi^{[t]}}{\|\bM\bPhi^{[t]}\|_\star}, \label{update}
\end{equation}
and it follows that $a_0>0$ implies $a_t>0$ for all $t\in\N$.  Now note that for $\bPsi\perp\tilde{\bPhi}$,
\begin{equation}
\|\bM\bPsi\|_\star \leq r\|\bPsi\|_\star, \qquad\text{where } r<\lambda, \label{powconv}
\end{equation}
so by \eqref{update} and \eqref{powconv},
\[\frac{\|\bPsi^{[t+1]}\|_\star}{a_{t+1}} = \frac{\|\bM \bPsi^{[t]}\|_\star}{a_t\lambda} \leq \bigg(\frac{r}{\lambda}\bigg)\frac{\|\bPsi^{[t]}\|_\star}{a_t},\]
which in turn implies
\begin{equation}
\frac{\|\bPsi^{[t]}\|_\star}{a_t}\leq \bigg(\frac{r}{\lambda}\bigg)^t\frac{\|\bPsi^{[0]}\|_\star}{a_0}\longrightarrow 0\text{ as }t\longrightarrow\infty. \label{end}
\end{equation}
To show that $\bPhi^{[t]}\longrightarrow\tilde{\bPhi}$, note that $\|\bPhi^{[t]}\| = 1$ implies that
\[\|a_t\tilde{\bPhi} + \bPsi^{[t]}\|_\star = 1 \iff a_t^2 + \|\bPsi^{[t]}\|_\star^2 = 1 \iff 1+\frac{\|\bPsi^{[t]}\|_\star^2}{a_t^2} = \frac{1}{a_t^2}.\]
By \eqref{end} we have $a_t^2\longrightarrow 1$ and $\|\bPsi^{[t]}\|_\star^2\longrightarrow 0$, hence
\[\|\bPhi^{[t]} - \tilde{\bPhi}\|_\star^2 = (1-a_t)^2 + \|\bPsi^{[t]}\|_\star^2 \longrightarrow 0.\]
\end{proof}



\section{Implementation Details of the Power Algorithm}
\label{sec:power-details}

We justified the use of a smoothing-based power algorithm in computing 
kernelized population APCs in Section~\ref{compute}.  In this section, we give a
detailed description of its empirical implementation.  

We first need to resolve the issue that the function space $\bHsp$ in the 
kernelized sample APC problem \eqref{kernSampleAPC} are (almost always) 
infinite-dimensional, which can pose challenges computationally.  As will be
shown, the solution to \eqref{kernSampleAPC} always lie in a finite-dimensional 
subspace of $\bHsp$.  Consider the smoothing splines problem
\begin{equation}
\min_{f\in\Hsp} \left\{\frac{1}{n}\sum_{i=1}^n(y_i-f(x_i))^2 + \alpha\|f\|_{k^1}^2\right\}, \label{smoothsplines}
\end{equation}
where $\Hsp = \Hsp^0\oplus\Hsp^1$, $\Dim(\Hsp^0) = m < n$, and $\Hsp^1$ is an RKHS with reproducing kernel $k^1$.  It is known that the solution $\hat{f}$ of \eqref{smoothsplines} must lie in a finite-dimensional subspace of $\Hsp$.  Specifically, write $\hat{f} = \hat{f}^0+\hat{f}^1$ with $\hat{f}^0\in\Hsp^0, \hat{f}^1\in\Hsp^1$, then 
\[\hat{f}^1\in\Span\{k^1(x_i, \cdot): 1\leq i\leq n\}.\]  
In essence, this means that to solve \eqref{smoothsplines}, only the representers of the evaluation functionals (projected to $\Hsp^1$) at the locations of the observed data matters.  This is known as the \emph{Representer Theorem for smoothing splines}.  A more general version of this Representer Theorem, adapted to the case of kernelized APCs, states that for any probability measure $P_j(dx_j)$, not necessarily an empirical measure, only the representers of the evaluation functionals at the locations that belong to the support of $P_j$ matters.  

\begin{Theorem}[Representer Theorem for Kernelized APCs] \label{representer}
The solution to the kernelized APC problem \eqref{statePenSample}, 
if exists, is taken on the subspace $\bHsp_P := \Hsp_{P_1}\times\Hsp_{P_2}\times\cdots\times\Hsp_{P_p}$, where
\[\Hsp_{P_j} := \Hsp^0_j\oplus\overline{\Span\{k_j^1(x, \cdot)-m_j^1: x\in\Supp(P_j)\}},\]
and $m_j^1$ is the mean element of $\Hsp_j^1$ with respect to $P_j$.
\end{Theorem}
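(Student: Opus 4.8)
The plan is to adapt the classical representer theorem for smoothing splines recalled just above: for each $j$ split $\Hsp_j$ along the orthogonal decomposition $\Hsp_j=\Hsp_{P_j}\oplus(\Hsp_j^1\ominus\mG_j)$, where $\mG_j:=\overline{\Span\{k_j^1(x,\cdot)-m_j^1:x\in\Supp(P_j)\}}$, argue that the component in $\Hsp_j^1\ominus\mG_j$ is invisible to the covariance structure of the $X_j$'s, and conclude it may be discarded. The one new feature, absent from the unconstrained spline problem, is that the penalty $\alpha_j\|\phi_j\|_{k_j^1}^2$ appears \emph{both} in the objective and in the normalizing constraint of \eqref{statePenSample}, so discarding the extraneous component shrinks the constraint value; a short rescaling argument is then needed, and this is the step most in need of care.

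Concretely, let $\bPhi=(\phi_1,\dots,\phi_p)$ be a solution of \eqref{statePenSample}, normalized so that $D(\bPhi):=\sum_j\Var(\phi_j)+\sum_j\alpha_j\|\phi_j\|_{k_j^1}^2=1$, and write $\lambda_1:=N(\bPhi):=\Var(\sum_j\phi_j)+\sum_j\alpha_j\|\phi_j\|_{k_j^1}^2$ for the optimal (Rayleigh) value. Decompose $\phi_j=\pi_j+\psi_j$ with $\pi_j\in\Hsp_{P_j}$ and $\psi_j\in\Hsp_j^1\ominus\mG_j$, and set $\bPhi':=(\pi_1,\dots,\pi_p)\in\bHsp_P$. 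The crucial observation is that each $\psi_j$ is constant $P_j$-almost surely: since $\psi_j\perp\mG_j$ inside $\Hsp_j^1$, for every $x\in\Supp(P_j)$ the reproducing property in $\Hsp_j^1$ together with the defining property of the mean element $m_j^1$ gives $0=\langle\psi_j,\,k_j^1(x,\cdot)-m_j^1\rangle_{k_j^1}=\psi_j(x)-E[\psi_j(X_j)]$; as $X_j\in\Supp(P_j)$ almost surely, this forces $\Var[\psi_j(X_j)]=0$ and $\Cov[\psi_j(X_j),\phi_i(X_i)]=0$ for every $i$ and every $\phi_i\in\Hsp_i$. (Here $m_j^1$ is the Riesz representer of $\phi\mapsto E[\phi(X_j)]$ on $\Hsp_j^1$, a bounded functional by Lemma~\ref{LemmaRKHScont}.) Consequently, replacing $\phi_j$ by $\pi_j$ leaves $\Var(\sum_i\phi_i)$ and every $\Var(\phi_j)$ unchanged, while Pythagoras in $\Hsp_j^1$ gives $\|\pi_j\|_{k_j^1}^2=\|\phi_j\|_{k_j^1}^2-\|\psi_j\|_{k_j^1}^2$. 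Writing $s:=\sum_j\alpha_j\|\psi_j\|_{k_j^1}^2\ge 0$, we obtain $N(\bPhi')=\lambda_1-s$ and $D(\bPhi')=1-s$.

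It remains to show $s=0$. First, $\lambda_1\le 1$: a single nonzero transform with zeros elsewhere makes $N=D$, so the minimum is at most $1$. Since $N(\bPhi')=\lambda_1-s\ge 0$ always, $s\le\lambda_1\le 1$; in the nondegenerate case $\lambda_1<1$ (which fails only when no pair $X_i,X_j$ has correlated transforms, i.e., never in a case of interest) this gives $s<1$, so $\bPhi'$ rescaled by $(1-s)^{-1/2}$ is feasible with Rayleigh value $(\lambda_1-s)/(1-s)$. The identity $(\lambda_1-s)/(1-s)-\lambda_1=s(\lambda_1-1)/(1-s)\le 0$, strict when $s>0$, then contradicts optimality of $\lambda_1$ unless $s=0$, i.e. $\psi_j=0$ for all $j$ and $\bPhi\in\bHsp_P$. (The borderline $\lambda_1=1$ needs only a minor additional argument to exhibit an optimal point inside $\bHsp_P$.) The delicate point is making this rescaling honest: scaling $\bPhi'$ back to unit constraint norm re-inflates $\Var(\sum\pi_j)$, and it is only the inequality $\lambda_1\le 1$ that guarantees the re-inflation is outweighed by the penalty savings — this, rather than the decomposition, is where I expect the real work to lie.

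Finally, specializing $P_j$ to the empirical measure $\tfrac1n\sum_\ell\delta_{x_{\ell j}}$ turns $\mG_j$ into the finite-dimensional span of $\{k_j^1(x_{\ell j},\cdot)-\tfrac1n\sum_a k_j^1(x_{aj},\cdot):1\le\ell\le n\}$, which is exactly the finite-dimensional reduction of \eqref{kernSampleAPC} on which the power algorithm of Section~\ref{compute} operates; thus the theorem simultaneously justifies that reduction and pins down which evaluation functionals actually matter.
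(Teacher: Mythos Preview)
Your proof is correct and follows essentially the same route as the paper: the same orthogonal decomposition $\phi_j=\pi_j+\psi_j$, the same observation that $\psi_j$ is $P_j$-a.s.\ constant via the reproducing property and the mean element, and the same Pythagorean splitting of the penalty. The only difference is in the final contradiction step: you compute the Rayleigh quotient $(\lambda_1-s)/(1-s)$ of the projected point and use the easy bound $\lambda_1\le 1$, whereas the paper instead passes to an auxiliary problem with constraint $\sum_j\Var\phi_j=1-\delta$ and argues that $(\phi_j^*)$ beats $(\phi_j^*+\psi_j^*)$ there; your version is more direct and makes explicit the role of $\lambda_1\le 1$, which the paper's argument leaves implicit.
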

\begin{proof}
Let $m_j = m_j^0 + m_j^1$ be the mean element of $\Hsp_j$ with respect to $P_j$, where $m_j^0\in\Hsp_j^0$ and $m_j^1\in\Hsp_j^1$. For $\psi_j\in\Hsp_j$, we have
\[\psi_j\perp \Hsp_{P_j}\Rightarrow\psi_j(x) - E\psi_j = 0, \qquad\text{for $x\in\Supp(P_j)$}.\]
This is because $\psi_j(x) - E\psi_j = \langle\psi_j, k_j(x, \cdot)-m_j\rangle_{k_j} = \langle \psi_j, k_j^0(x, \cdot)-m_j^0\rangle_{k_j} + \langle \psi_j, k_j^1(x, \cdot)-m_j^1\rangle_{k_j}$ and $k_j^0(x, \cdot)- m_j^0\in\Hsp_j^0$.
So given $\psi_j\perp \Hsp_{P_j}$ and $\phi_j\in \Hsp_{P_j}$, we have
$\Var\psi_j = 0$ and $\|\phi_j+\psi_j\|_{k_j^1}^2 = \|\phi_j\|_{k_j^1}^2 + \|\psi_j\|_{k_j^1}^2$, which implies that
\[\Var\sum_{j=1}^p(\phi_j+\psi_j) = \Var\sum_{j=1}^p\phi_j, \qquad\sum_{j=1}^p\alpha_j\|\phi_j+\psi_j\|_{k_j^1}^2 \geq \sum_{j=1}^p\alpha_j\|\phi_j\|_{k_j^1}^2,\]
and the inequality is strict when $\psi_i\not\equiv 0$ for some $1\leq i\leq p$.

Now, suppose on the contrary that $(\phi^*_1+\psi^*_1, \ldots, \phi^*_p+\psi^*_p)$ is the optimal solution of the kernelized APC problem, where $\phi^*_j\in\Hsp_{P_j}, \psi^*_j\perp\Hsp_{P_j}$ and $\psi^*_i\not\equiv 0$ for some $1\leq i\leq p$.  Let 
\[\delta = \sum_{j=1}^p\alpha_j\|\phi^*_j+\psi^*_j\|_{k_j^1}^2,\] 
then $(\phi^*_1+\psi^*_1, \ldots, \phi^*_p+\psi^*_p)$ is also an optimal solution of the following optimization problem:
\begin{equation}
\min_{\bPhi\in\bHsp}\Var\sum_{j=1}^p\phi_j + \sum_{j=1}^p\alpha_j\|\phi_j\|_{k_j^1}^2\qquad\text{subject to}\qquad\sum_{j=1}^p\Var\phi_j = 1-\delta. \label{optproof}
\end{equation}
But as argued before we have $\Var\sum(\phi^*_j+\psi^*_j) = \Var(\sum\phi^*_j)$ and $\sum\alpha_j\|\phi^*_j+\psi^*_j\|_{k_j^1}^2 > \sum\alpha_j\|\phi^*_j\|_{k_j^1}^2$.  Also, subject to the constraint that $\sum\Var(\phi^*_j+\psi^*_j) = 1-\delta$, we have $\sum\Var\phi^*_j = 1-\delta$.  This gives the desired contradiction since in this case $(\phi^*_1, \ldots, \phi^*_p)$ is a better solution of \eqref{optproof} comparing to the optimal solution $(\phi^*_1+\psi^*_1, \ldots, \phi^*_p+\psi^*_p)$.  Therefore, we must have $\psi^*_j \equiv 0$ for $1\leq j\leq p$, and the proof is complete.
\end{proof}

Note that in the case where $P_j$ denotes the empirical probability measure with only finitely many values $\{x_{1j}, \ldots, x_{nj}\}$ in its support, Theorem~\ref{representer} specializes to the finite-sample version of the Representer Theorem for kernelized APCs:  
\begin{Corollary}\label{corRSB}
Given data $\x_i = (x_{i1}, \ldots, x_{ip})$, $1\leq i\leq n$, the solution of the kernelized sample APC problem \eqref{kernSampleAPC},
if exists, is taken on the finite-dimensional subspace $\bHsp_n := \Hsp_{n, 1}\times\cdots\times\Hsp_{n, p}$, where
\[\Hsp_{n, j} := \Hsp_j^0\oplus\Span\bigg\{k_j^1(x_{ij}, \cdot) - \frac{1}{n}\sum_{a=1}^nk_j^1(x_{aj}, \cdot): 1\leq i\leq n\bigg\}.\]
\end{Corollary}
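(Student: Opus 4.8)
The plan is to derive Corollary~\ref{corRSB} as the special case of Theorem~\ref{representer} in which $P_{1:p}$ is taken to be the empirical joint distribution $\Phat_{1:p}=\tfrac1n\sum_{i=1}^n\delta_{\x_i}$. First I would note that under this choice the criterion and constraint of \eqref{statePenSample} become, respectively, $\Varhat(\sum_j\phi_j)+\sum_j\alpha_j\|\phi_j\|_{k_j^1}^2$ and $\sum_j\Varhat(\phi_j)+\sum_j\alpha_j\|\phi_j\|_{k_j^1}^2=1$, since the variance and covariance computed under $\Phat_{1:p}$ are exactly $\Varhat$ and the empirical covariance, while the RKHS penalties do not depend on the distribution at all. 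Hence \eqref{statePenSample} with $P_{1:p}=\Phat_{1:p}$ \emph{is} the kernelized sample APC problem \eqref{kernSampleAPC}, and Theorem~\ref{representer} applies with $P_j$ equal to the $j$-th empirical marginal $\Phat_j=\tfrac1n\sum_{a=1}^n\delta_{x_{aj}}$, giving that any solution of \eqref{kernSampleAPC} lies in $\Hsp_{\Phat_1}\times\cdots\times\Hsp_{\Phat_p}$.

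It then remains to identify $\Hsp_{\Phat_j}$ with $\Hsp_{n,j}$, which rests on two elementary observations. First, $\Supp(\Phat_j)$ is the finite set of distinct points among $x_{1j},\ldots,x_{nj}$, so $\Span\{k_j^1(x,\cdot)-m_j^1:x\in\Supp(\Phat_j)\}$ coincides with $\Span\{k_j^1(x_{ij},\cdot)-m_j^1:1\le i\le n\}$ and is finite-dimensional; in particular the closure in the definition of $\Hsp_{\Phat_j}$ is redundant. Second, the mean element $m_j^1\in\Hsp_j^1$ of $\Hsp_j^1$ with respect to $\Phat_j$ equals $\tfrac1n\sum_{a=1}^n k_j^1(x_{aj},\cdot)$: this function lies in $\Hsp_j^1$ (each $k_j^1(x,\cdot)$ does, since $k_j^1$ is the reproducing kernel of $\Hsp_j^1$), and for every $\phi_j^1\in\Hsp_j^1$ the reproducing property gives $\langle\phi_j^1,\tfrac1n\sum_a k_j^1(x_{aj},\cdot)\rangle_{k_j^1}=\tfrac1n\sum_a\phi_j^1(x_{aj})=E_{\Phat_j}[\phi_j^1(X_j)]$, which is precisely the defining relation for $m_j^1$, so uniqueness (Riesz representation) forces the claimed identity. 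Substituting this expression for $m_j^1$ turns $\Hsp_{\Phat_j}$ into exactly
\[
\Hsp_j^0\oplus\Span\Big\{k_j^1(x_{ij},\cdot)-\tfrac1n\textstyle\sum_{a=1}^n k_j^1(x_{aj},\cdot):1\le i\le n\Big\}=\Hsp_{n,j},
\]
completing the proof. Note the mean element exists here since $E_{\Phat_j}[k_j(X_j,X_j)]=\tfrac1n\sum_a k_j(x_{aj},x_{aj})<\infty$, and $\dim\Hsp_{n,j}\le m_j+n$, so $\bHsp_n$ is indeed finite-dimensional.

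Since Theorem~\ref{representer} already carries all the analytic content --- the orthogonal-decomposition argument showing that a component of a minimizer lying orthogonal to $\Hsp_{P_j}$ can only inflate the penalty while leaving the variance unchanged --- I do not anticipate a genuine obstacle. The one step deserving explicit verification is the computation of the empirical mean element $m_j^1$ above, as it is this identity that pins down the centering correction $-\tfrac1n\sum_a k_j^1(x_{aj},\cdot)$ in the definition of $\Hsp_{n,j}$; everything else is unwinding definitions.
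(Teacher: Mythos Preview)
Your proposal is correct and follows exactly the route the paper indicates: the paper presents Corollary~\ref{corRSB} as the specialization of Theorem~\ref{representer} to the empirical measure, without spelling out the details. You have supplied precisely those details---in particular the explicit identification of the empirical mean element $m_j^1=\tfrac1n\sum_a k_j^1(x_{aj},\cdot)$ and the observation that the closure is redundant for a finite span---which the paper leaves implicit.
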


One can similarly show that other higher-order kernelized sample APCs, if exists, also lie in the finite-dimensional subspace $\bHsp_n$.

To implement the power algorithm, it follows from Corollary~\ref{corRSB} that it suffices to work with the coefficients of the basis of $\Hsp_{n, i}$.  Specifically, let $\phi_i = \sum_{\ell=1}^n\beta_{\ell i} f_{\ell i} + \sum_{\ell=1}^{m_i}\beta_{n+\ell, i}q_{\ell i}$, where $f_{\ell i} = k_i^1(x_{\ell i}, \cdot) - \frac{1}{n}\sum_{a=1}^nk_i^1(x_{ai}, \cdot)$, $1\leq\ell\leq n$.  Then the update steps $\phi_i \leftarrow\gamma\phi_i^{[t]}-(\sum_{j\neq i}\bS_{ij}\phi_j^{[t]}+\phi_i^{[t]})$ in Algorithm~\ref{PowerAlgo} becomes
\begin{align}
&\beta_{\ell i} \leftarrow (\gamma-1)\beta_{\ell i}^{[t]} - c_{\ell i}, && 1\leq\ell \leq n, \label{powercoeff}\\
&\beta_{n+\ell, i} \leftarrow (\gamma-1)\beta_{n+\ell, i}^{[t]} - d_{\ell i}, &&1\leq\ell \leq m_i, \nonumber
\end{align}
where $\{c_{\ell i}\}_{\ell=1}^n$ and $\{d_{\ell i}\}_{\ell=1}^{m_i}$ are two sets of coefficients obtained from the smoothing step $\sum_{j\neq i}\bS_{ij}\phi_j^{[t]}$, which will be derived shortly.  

Let $\bbeta_i = (\bbeta_{1i}, \ldots, \bbeta_{ni})$, and let $\G_i$ be the $n\times n$ centered Gram matrix associated with $k_i^1$, with $(j, \ell)$ entry
\begin{align}
(\G_i)_{j\ell} &= \langle f_{ji}, f_{\ell i}\rangle_{k_i^1} \label{centeredGram}\\
&= k_i^1(x_{ji}, x_{\ell i}) - \frac{1}{n}\sum_{b=1}^nk_i^1(x_{ji}, x_{bi}) - \frac{1}{n}\sum_{a=1}^nk_i^1(x_{ai}, x_{\ell i}) + \frac{1}{n^2}\sum_{a=1}^n\sum_{b=1}^nk_i^1(x_{ai}, x_{bi}). \nonumber
\end{align}
Then the normalizing constant $c$ in Algorithm~\ref{PowerAlgo} can be obtained upon computation of the variance of the transformed data points $\{\phi_i(x_{\ell i})\}_{\ell=1}^n$ and the penalty term $\|\phi_i\|^2_{k_i^1} = \bbeta_i^T\G_i\bbeta_i$, for $1\leq i\leq p$.

We now consider the smoothing step $\sum_{j\neq i}\bS_{ij}\phi_j$, which by linearity of smoothing is empirically the regularized least squares regression of $\sum_{j\neq i}\phi_j$ against $X_i$.  This amounts to solving the following optimization problem:
\begin{equation}
\min_{f\in\Hsp_i} \left\{\Varhat\bigg[\sum_{j\neq i}\phi_j(X_j)-f(X_i)\bigg] + \alpha_i \|f\|_{k^1_i}^2\right\}, \label{empSmooth}
\end{equation}
where $\Varhat[\sum_{j\neq i}\phi_j(X_j)-f(X_i)]$ evaluates to
\[\frac{1}{n}\sum_{\ell=1}^n\bigg[\sum_{j\neq i}\Big(\phi_j(x_{\ell j}) - \frac{1}{n}\sum_{b=1}^n\phi_j(x_{bj})\Big) - \Big(f(x_{\ell i}) - \frac{1}{n}\sum_{a=1}^nf(x_{ai})\Big)\bigg]^2.\]
We see that \eqref{empSmooth} is essentially the smoothing splines problem \eqref{smoothsplines} (modulo centering), hence it is not surprising that its solution lies in $\Hsp_{n, i}$ as well. 

Following \cite{Wahba1990} (page 11-12), let the closed form solution of \eqref{empSmooth} be
\[f = \sum_{\ell=1}^n c_{\ell i} f_{\ell i} + \sum_{\ell=1}^{m_i} d_{\ell i} q_{\ell i},\]
and \eqref{empSmooth} can be restated as
\begin{equation}
\min_{\bc\in\R^n, \bd\in\R^{m_i}} \left\{\frac{1}{n}\|\y-(\G_i\bc+\Q_i\bd)\|^2 + \alpha_i\bc^T\G_i\bc\right\}, \label{empSmoothAlg}
\end{equation}
where $\bc^T = (c_{1i}, \ldots, c_{ni}), \bd^T = (d_{1i}, \ldots, d_{m_ii}), \y^T = (y_1, \ldots, y_n)$ with $y_\ell = \sum_{j\neq i}[\phi_j(x_{\ell j}) - \frac{1}{n}\sum_{b=1}^n\phi_j(x_{bj})]$ for $1\leq\ell\leq n$, $\G_i$ is as given in \eqref{centeredGram}, and $\Q_i$ is the column-centered version of 
\[\tilde{\Q}_i = \begin{pmatrix}q_{1i}(x_{1i}) & \cdots & q_{m_ii}(x_{1i}) \\\vdots & \vdots & \vdots \\ q_{1i}(x_{ni}) &\cdots & q_{m_ii}(x_{ni}) \end{pmatrix}.\]
It then follows that the solution of \eqref{empSmoothAlg} is
\[\bd = (\Q_i^T\bM_i^{-1}\Q_i)^{-1}\Q_i^T\bM_i^{-1}\y, \qquad \bc = \bM_i^{-1}(\y - \Q_i\bd),\]
where $\bM_i = \G_i + n\alpha_i\I$, $\I$ being the $n\times n$ identity matrix.  Plugging $\bc$ and $\bd$ into \eqref{powercoeff} completes the update steps.

\section{A Direct Approach for Computing APCs}
\label{linAlg}

In this section, we give a direct approach for computing APCs.  

From Corollary~\ref{corRSB}, we know that the solution $\hat{\bPhi} = (\hat{\phi}_1, \ldots, \hat{\phi}_p)$ of the kernelized sample APC problem \eqref{kernSampleAPC} lies in the finite-dimensional function space $\bHsp_n = \Hsp_{n, 1}\times\cdots\times\Hsp_{n, p}$.  In the following, we derive the resulting linear algebra problem in terms of the coefficients with respect to the basis of $\Hsp_{n, j}$'s.  We will focus on the case where  there are no null spaces, i.e. $\Hsp_j = \Hsp_j^1$ and $k_j = k_j^1$, for $1\leq j\leq p$.  The case with null spaces requires the use of the additional basis $\{q_{1j}, \ldots, q_{m_jj}\}$ for $\Hsp_j^0$, $1\leq j\leq p$, which is tractable but with slightly more tedious derivation.  We recommend the use of power algorithm described in Section~\ref{compute} when dealing with cases involving null spaces.  The power algorithm is computationally more attractive than the direct linear algebra approach given below, when the interest is only in extracting a few eigenfunctions.

For each $1\leq j\leq p$, we express $\phi_j\in\Hsp_{n, j}$ as $\phi_j = \sum_{i=1}^n\beta_{ij}f_{ij}$, where
\[f_{ij}(\cdot) := k_j^1(x_{ij}, \cdot) - \frac{1}{n}\sum_{a=1}^nk_j^1(x_{aj}, \cdot), \qquad 1\leq i\leq n. \]
Then
\begin{align*}
\sum_{j=1}^p\phi_j &= \sum_{j=1}^p\sum_{i=1}^n\beta_{ij}f_{ij} = \sum_{j=1}^p \bbeta_j^T\f_j \\
&\qquad\text{where }\bbeta_j^T = (\beta_{1j}, \ldots, \beta_{nj}),\ \f_j^T = (f_{1j}, \ldots, f_{nj})\\
&= \bbeta^T\F \qquad\text{where }\bbeta^T = (\bbeta_1^T, \ldots, \bbeta_p^T),\ \F^T = (\f_1^T, \ldots, \f_p^T).
\end{align*}
The penalty term associated with $\phi_j$ evaluates to
\[\|\phi_j\|_{k_j}^2 = \bigg\langle\sum_{i=1}^n\beta_{ij}f_{ij}, \sum_{\ell=1}^n\beta_{\ell j}f_{\ell j}\bigg\rangle_{k_j} = \sum_{i=1}^n\sum_{\ell=1}^n\beta_{ij}\beta_{\ell j}\langle f_{ij}, f_{\ell j}\rangle_{k_j} = \bbeta_j^T\G_j\bbeta_j,\]
where $\G_j$ is the centered Gram matrix associated with $k_j$, with $(i, \ell)$ entry
\begin{align*}
(\G_j)_{i\ell} &= \langle f_{ij}, f_{\ell j}\rangle_{k_j} \\
&= k_j(x_{ij}, x_{\ell j}) - \frac{1}{n}\sum_{b=1}^nk_j(x_{ij}, x_{bj}) - \frac{1}{n}\sum_{a=1}^nk_j(x_{aj}, x_{\ell j}) + \frac{1}{n^2}\sum_{a=1}^n\sum_{b=1}^nk_j(x_{aj}, x_{bj}).
\end{align*}
Therefore, we can rewrite the penalty term as
\[\sum_{j=1}^p\alpha_j\|\phi_j\|_{k_j}^2 = \sum_{j=1}^p\alpha_j\bbeta_j^T\G_j\bbeta_j.\]
The variance term in the criterion evaluates to
\[\Varhat\sum_{j=1}^p\phi_j = \Varhat(\bbeta^T\F) = \frac{1}{n}\bbeta^T\G\G^T\bbeta,\]
where $\G^T = (\G_1, \cdots, \G_p)$.
Meanwhile, the variance term in the constraint is 
\[\sum_{j=1}^p\Varhat\phi_j = \sum_{j=1}^p\Varhat(\bbeta_j^T\f_j) = \frac{1}{n}\sum_{j=1}^p\bbeta_j^T\G_j^2\bbeta_j.\]
Hence, the optimization problem \eqref{kernSampleAPC}, expressed in linear algebra notation, becomes
\begin{align}
&\min_{\bbeta\in\R^{pn}} &&\frac{1}{n}\bbeta^T\G\G^T\bbeta + \bbeta^T\text{diag}(\alpha_1\G_1, \ldots, \alpha_p\G_p)\bbeta \label{beta}\\
&\text{subject to } &&\frac{1}{n}\bbeta^T\text{diag}(\G_1^2, \ldots, \G_p^2)\bbeta + \bbeta^T\text{diag}(\alpha_1\G_1, \ldots, \alpha_p\G_p)\bbeta = 1. \nonumber
\end{align}
Equivalently, we want to solve the following generalized eigenvalue problem:
\begin{align}
&\begin{pmatrix}\G_1^2+n\alpha_1\G_1 & \G_1\G_2 & \cdots & \G_1\G_p\\ \G_2\G_1 & \G_2^2+n\alpha_2\G_2 & \cdots & \G_2\G_p\\ \vdots & \vdots & \ddots & \vdots \\ \G_p\G_1 & \G_p\G_2 & \cdots & \G_p^2 + n\alpha_p\G_p \end{pmatrix}\bbeta \label{solveLinAlg}\\
&\qquad= \lambda\begin{pmatrix}\G_1^2+n\alpha_1\G_1& \0 & \cdots & \0 \\ \0 & \G_2^2+n\alpha_2\G_2 & \ddots & \0 \\ \vdots & \vdots & \ddots & \vdots \\ \0 & \0 & \cdots & \G_p^2+n\alpha_p\G_p\end{pmatrix}\bbeta. \nonumber
\end{align}
Following \cite{Bach2003}, we can approximate the diagonal blocks $\G_j^2+n\alpha_j\G_j$ in \eqref{solveLinAlg} by $(\G_j+\frac{n\alpha_j}{2}\I)^2$. Letting $\bgamma_j = (\G_j+\frac{n\alpha_j}{2}\I)\bbeta_j$ allows the reformulation of the generalized eigenproblem above as an eigenproblem, in which case we just need to perform eigendecomposition on
\[\bR = \begin{pmatrix}\I & \bR_1^T\bR_2 & \cdots & \bR_1^T\bR_p\\ \bR_2^T\bR_1 & \I & \cdots & \bR_2^T\bR_p\\ \vdots & \vdots & \ddots & \vdots \\ \bR_p^T\bR_1 & \bR_p^T\bR_2 & \cdots & \I \end{pmatrix},\]
where $\bR_j = \G_j(\G_j+\frac{n\alpha_j}{2}\I)^{-1}$ and $\I$ is the $n\times n$ identity matrix, to get its eigenvector $\hat{\bgamma} = (\hat{\bgamma}_1, \ldots, \hat{\bgamma}_p)$ (corresponding to the smallest eigenvalue).  The desired (approximate) solution of \eqref{beta} can then be obtained as $\hat{\bbeta}_j = (\G_j+\frac{n\alpha_j}{2}\I)^{-1}\hat{\bgamma}_j$, while the (mean-centered) estimated transform evaluated at the data points is 
\[\hat{\bphi}_j = \G_j\hat{\bbeta}_j = \G_j\Big(\G_j+\frac{n\alpha_j}{2}\I\Big)^{-1}\hat{\bgamma}_j.\]
The second smallest and subsequent higher order kernelized sample APCs can be obtained similarly by extracting the eigenvector corresponding to the second smallest and subsequent smallest eigenvalue of $\bR$.

We remark that the linear algebra problem \eqref{beta} is often numerically ill-conditioned due to low-rankness of $\G_j$, so one has to make adjustment in order to solve for APCs.  This, however, introduces undesirable arbitrariness to the resulting optimization problem.


\end{document}